\definecolor{Red}{rgb}{1,0,0}
\definecolor{Blue}{rgb}{0,0,1}
\definecolor{Olive}{rgb}{0.41,0.55,0.13}
\definecolor{Yarok}{rgb}{0,0.5,0}
\definecolor{Green}{rgb}{0,1,0}
\definecolor{MGreen}{rgb}{0,0.8,0}
\definecolor{DGreen}{rgb}{0,0.55,0}
\definecolor{Yellow}{rgb}{1,1,0}
\definecolor{Cyan}{rgb}{0,1,1}
\definecolor{Magenta}{rgb}{1,0,1}
\definecolor{Orange}{rgb}{1,.5,0}
\definecolor{Violet}{rgb}{.5,0,.5}
\definecolor{Purple}{rgb}{.75,0,.25}
\definecolor{Brown}{rgb}{.75,.5,.25}
\definecolor{Grey}{rgb}{.5,.5,.5}
\newcommand{\bs}{\boldsymbol{\sigma}}
\newcommand{\ind}{\mathbbm{1}}
\newcommand{\R}{\mathbb{R}}
\newcommand{\N}{\mathbb{N}}
\newcommand{\ip}[2]{\langle{#1},{#2}\rangle} 
\renewcommand{\ip}[2]{\left\langle#1,#2\right\rangle}
\renewcommand{\R}{\mathbb{R}}
\newcommand{\distr}{\stackrel{d}{=}}
\newcommand{\A}{\mathcal{A}}
\newcommand{\M}{\mathcal{M}}
\newcommand{\cN}{{\bf \mathcal{N}}}
\newcommand{\ignore}[1]{\relax}
\newlength\myindent
\newtheorem{theorem}{Theorem}[section]
\newtheorem{lemma}[theorem]{Lemma}
\newtheorem{conjecture}[theorem]{Conjecture}
\newtheorem{proposition}[theorem]{Proposition}
\newtheorem{definition}[theorem]{Definition}
\renewcommand{\ip}[2]{\left\langle#1,#2\right\rangle}
\newcommand{\D}{\mathcal{D}}
\newcounter{parentnumber}
\def\BState{\State\hskip-\ALG@thistlm}
\definecolor{Red}{rgb}{1,0,0}
\definecolor{Blue}{rgb}{0,0,1}
\definecolor{Olive}{rgb}{0.41,0.55,0.13}
\definecolor{Green}{rgb}{0,1,0}
\definecolor{MGreen}{rgb}{0,0.8,0}
\definecolor{DGreen}{rgb}{0,0.55,0}
\definecolor{Yellow}{rgb}{1,1,0}
\definecolor{Cyan}{rgb}{0,1,1}
\definecolor{Magenta}{rgb}{1,0,1}
\definecolor{Orange}{rgb}{1,.5,0}
\definecolor{Violet}{rgb}{.5,0,.5}
\definecolor{Purple}{rgb}{.75,0,.25}
\definecolor{Brown}{rgb}{.75,.5,.25}
\definecolor{Grey}{rgb}{.5,.5,.5}
\definecolor{Pink}{rgb}{1,0,1}
\definecolor{DBrown}{rgb}{.5,.34,.16}
\definecolor{Black}{rgb}{0,0,0}
\title{Geometric Barriers for Stable and Online Algorithms\\ for Discrepancy Minimization}
\author{{\sf David Gamarnik}\thanks{Sloan School of Management, Massachusetts Institute of Technology; e-mail: {\tt gamarnik@mit.edu}.} 
\and
{\sf Eren C. K{\i}z{\i}lda\u{g}}\thanks{Department of Statistics, Columbia University; e-mail: {\tt eck2170@columbia.edu}.}
\and
{\sf Will Perkins}\thanks{School of Computer Science, Georgia Institute of Technology; e-mail: {\tt math@willperkins.org}.}
\and 
{\sf Changji Xu}\thanks{Center of Mathematical Sciences and Applications, Harvard University; e-mail: {\tt cxu@cmsa.fas.harvard.edu}.}
}
\begin{document}
\maketitle
\begin{abstract}

For many computational problems involving randomness,  intricate geometric features of the solution space have been used to rigorously rule out powerful classes of algorithms. This is often accomplished through the lens of the multi Overlap Gap Property ($m$-OGP), a rigorous barrier against algorithms exhibiting input stability. In this paper, we focus on the algorithmic tractability of two models: (i) discrepancy minimization, and (ii) the symmetric binary perceptron (\texttt{SBP}), a random constraint satisfaction problem as well as a toy model of a single-layer neural network.

Our first focus is on the limits of online algorithms. By establishing and leveraging a novel geometrical barrier, we obtain sharp hardness guarantees against online algorithms for both the \texttt{SBP} and  discrepancy minimization. Our results match the best known  algorithmic guarantees, up to constant factors. Our second focus is on efficiently finding a constant discrepancy solution, given a random matrix $\M\in\R^{M\times n}$. In a smooth setting, where the entries of $\M$ are i.i.d.\,standard normal, we establish the presence of $m$-OGP for $n=\Theta(M\log M)$. Consequently, we rule out the class of stable algorithms at this value. These results give the first rigorous evidence towards a conjecture of Altschuler and Niles-Weed~\cite[Conjecture~1]{altschuler2021discrepancy}. 

Our methods use the intricate geometry of the solution space to prove tight hardness results for online algorithms. The barrier we establish is a novel variant of the $m$-OGP.  Furthermore, it regards $m$-tuples of solutions with respect to correlated instances, with growing values of $m$, $m=\omega(1)$.  Importantly, our results rule out online algorithms succeeding even with an exponentially small probability.
\end{abstract}
\newpage
\tableofcontents
\newpage

\section{Introduction}
In this paper, we study the \emph{discrepancy minimization} problem and the \emph{perceptron} model. Combinatorial discrepancy theory~\cite{spencer1985six,matousek1999geometric} is a central topic at the intersection of combinatorics, probability, and algorithms.  Given a matrix $\M\in\R^{M\times n}$, the central task in discrepancy theory is computing or bounding the quantity
\[
\mathcal{D}(\M) \triangleq \min_{\bs\in\Sigma_n}\bigl\|\M\bs\bigr\|_\infty,
\]
known as the \emph{discrepancy} of $\M$.

The \textit{perceptron} is a toy one-layer neural network model storing random patterns as well as a very natural high-dimensional probabilistic model, see~\cite{joseph1960number,winder1961single,wendel1962problem,cover1965geometrical} for early works on it. Given random patterns $X_i\in\R^n$, $1\le i\le M$,  \emph{storage} is achieved if one finds a $\bs \in\R^n$ `consistent' with all $X_i$: $\ip{\bs}{X_i}\ge 0$ for $1\le i\le M$. The vector $\bs$ is interpreted as \emph{synaptic weights}; it can either lie on the sphere in $\R^n$, $\|\bs\|_2 = \sqrt{n}$, or have binary entries, $\bs\in\Sigma_n=\{-1,1\}^n$. The former is dubbed as the spherical perceptron, see~\cite{gardner1988space,shcherbina2003rigorous,stojnic2013another,talagrand2011mean,alaoui2020algorithmic} for relevant work. In this paper, we only focus on the latter, dubbed as the \emph{binary perceptron}. A fundamental object studied in the  perceptron literature is the \emph{storage capacity}: the maximum number of (random) patterns that can be stored with a suitable $\bs$, see~\cite{gardner1987maximum,gardner1988space,gardner1988optimal}. Krauth and M{\'e}zard~\cite{krauth1989storage} gave a detailed though non-rigorous characterization of the storage capacity.  More recently, perceptron models with an \emph{activation function} $U:\R\to\{0,1\}$ are considered, where a pattern $X_i$ is stored with respect to (w.r.t.) $U$ if $U(\ip{\bs}{X_i})=1$. Of particular interest to us is the activation $U(x)=\mathbf 1_{|x| \le \kappa \sqrt{n}}$ which defines the \emph{symmetric binary perceptron} (\texttt{SBP}) model proposed by Aubin, Perkins, and Zdeborov{\'a}~\cite{aubin2019storage} (see also~\cite{bolthausen2021gardner,nakajima2023sharp} for results on more general perceptron models). As we see below, the \texttt{SBP} is closely related to discrepancy minimization.

\subsection{Discrepancy Minimization}
The discrepancy literature pertains to both \emph{worst-case} and \emph{average-case} $\M$. In the worst-case, minimal structure is assumed on $\M$, whereas in the average-case, the entries of $\M$ are random, e.g.\,i.i.d.\,Bernoulli, Rademacher, or standard normal. Moreover, both existential as well as algorithmic results are sought in discrepancy theory.

Concerning the worst-case analysis, a landmark result in the area is due to Spencer~\cite{spencer1985six}: $\mathcal{D}(\M)\le 6\sqrt{n}$ if $\M\in\R^{n\times n}$ with $|M_{ij}|\le 1$ for $1\le i,j\le n$ (`six standard deviations suffice'). The significance of this result is the improvement over the discrepancy guaranteed by the basic probabilistic method:  the discrepancy incurred by a random signing is of order $\Theta(\sqrt{n\log n})$ which is substantially larger than $O(\sqrt{n})$.  
It is worth noting that Spencer's result is worst-case and non-constructive, but recent works~\cite{bansal2010constructive,lovett2015constructive,levy2017deterministic,rothvoss2017constructive} has given efficient algorithms to find such low discrepancy solutions. 

In this paper we focus on average-case discrepancy. Suppose $\M\in\R^{M\times n}$ has i.i.d. $\cN(0,1)$ entries and $M=o(n)$. In this case, a line of work initiated in~\cite{karmarkar1986probabilistic} (for $M=1$) and subsequently continued in~\cite{costello2009balancing,turner2020balancing} (for $M\ge 2$) established that $\mathcal{D}(\M) = \Theta(\sqrt{n}2^{-n/M})$ w.h.p. Algorithmic results in this regime are found in~\cite{karmarkar1982differencing,yakir1996differencing,turner2020balancing}.  In the special case of $M=1$, \cite{gamarnik2021algorithmic} gave  rigorous evidence that finding a $\bs$ with $\|\M\bs\|_\infty = 2^{-\omega(\sqrt{n\log n})}$ may be algorithmically intractable. On the other hand when $M=\Theta(n)$, then it turns out $\mathcal{D}(\M) = \Theta(\sqrt{n})$ and this case is closely related to the \texttt{SBP}, see Section~\ref{sec:connections} for more details.

Next suppose the entries of $\M$ are i.i.d.\,binary, e.g.\,Rademacher or Bernoulli$(p)$. In this case, while still $\D(\M)=\Theta(\sqrt{n})$ w.h.p.\,when $M=\Theta(n)$, but it turns out that \textit{constant} discrepancy, in fact $\D(\M)=1$, is possible when $n$ is much larger than $M$. The sharpest possible result to this end is due to Altschuler and Niles-Weed~\cite{altschuler2021discrepancy} who completely resolved the question of exactly when $\D(\M)\le 1$: $\D(\M)\le 1$ if $\M$ consists of Bernoulli$(p)$ entries with arbitrary $p$ and $n\ge CM\log M$, where $C$ is any arbitrary constant greater than $(2\log 2)^{-1}$. Their result covers in particular the sparse regime, $p=o(1)$, and is the sharpest possible as $n=\Omega(M\log M)$ is needed for $\D(\M)$ to be $O(1)$: if $n=CM\log M$ for $C<(2\log 2)^{-1}$ and $p=1/2$, then w.h.p. no constant discrepancy solutions exist. Earlier results towards this direction are found in~\cite{hoberg2019fourier,franks2020discrepancy,potukuchi2018discrepancy}. 
 Equipped with this existential guarantee from~\cite{altschuler2021discrepancy}, a natural algorithmic question is whether one can find such a constant discrepancy solution in polynomial time. This task is conjecturally hard, see~\cite[Conjecture~1]{altschuler2021discrepancy}. The algorithmic tractability of this problem is a main focuses of the present paper.
\subsection{Symmetric Binary Perceptron (\texttt{SBP})}
Fix  $\kappa>0$,  $\alpha>0$, and set $M=\lfloor n\alpha\rfloor\in\mathbb{N}$. Let $X_i\sim \cN(0,I_n)$, $1\le i\le M$, be i.i.d.\,random vectors, where $\cN(0,I_n)$ is the centered multivariate normal distribution in $\R^n$ with identity covariance. Consider the (random) set
\begin{equation}\label{eq:sbp-set}
    S_\alpha(\kappa) =\Bigl\{\bs\in\Sigma_n: \left|\ip{\bs}{X_i}\right|\le \kappa\sqrt{n},1\le i\le M \Bigr\} = \Bigl\{\bs\in\Sigma_n:\bigl\|\M\bs\bigr\|_\infty\le\kappa\sqrt{n}\Bigr\},
\end{equation}
where $\M\in\R^{M\times n}$ with rows $X_1,\dots,X_M$. The word \emph{symmetric} refers to the fact $\bs\in S_\alpha(\kappa)$ iff $-\bs\in S_\alpha(\kappa)$. The \texttt{SBP} was put forth by Aubin, Perkins, and Zdeborov{\'a}~\cite{aubin2019storage} as a symmetric counterpart to the \emph{asymmetric binary perceptron} (\texttt{ABP}), where the constraints are instead of form $\ip{\bs}{X}\ge \kappa\sqrt{n}$, $1\le i\le M$. The \texttt{ABP} turns out very challenging mathematically, see~\cite{krauth1989storage,kim1998covering,talagrand1999intersecting,xu2019sharp,ding2019capacity,perkins2021frozen,abbe2021binary,gamarnik2022algorithms,kizildag2022algorithms} for relevant work and more details. The \texttt{SBP}, on the other hand, retains pertinent structural properties conjectured for the \texttt{ABP}~\cite{baldassi2020clustering}, while being more amenable to rigorous analysis thanks to the symmetry.

The \texttt{SBP} undergoes a \emph{sharp phase transition}, conjectured in~\cite{aubin2019storage} and subsequently proven independently by Perkins and Xu~\cite{perkins2021frozen} and Abbe, Li, and Sly~\cite{abbe2021proof}. Let
\begin{equation}\label{eq:alpha-c-kappa}
    \alpha_c(\kappa) \triangleq -\frac{1}{\log_2\mathbb{P}[|Z|\le \kappa]},\quad \text{where}\quad Z\sim \cN(0,1).
\end{equation}
Then
\begin{equation}\label{eq:pt}
\lim_{n\to\infty}\mathbb{P}\bigl[S_\alpha(\kappa)\ne\varnothing\bigr] =\begin{cases}0,&\text{if } \alpha>\alpha_c(\kappa)\\1,&\text{if } \alpha<\alpha_c(\kappa)
\end{cases}.
\end{equation}
The part $\alpha>\alpha_c(\kappa)$ is due to~\cite{aubin2019storage} and established via an application of the \emph{first moment method}: $\mathbb{E}\bigl[|S_\alpha(\kappa)|\bigr]=o(1)$ if $\alpha>\alpha_c(\kappa)$, so $S_\alpha(\kappa)=\varnothing$ w.h.p.\,by Markov's inequality. The same paper also studies the case $\alpha<\alpha_c(\kappa)$ and establishes, through the \emph{second moment method}, that $\liminf_{n\to\infty}\mathbb{P}\bigl[S_\alpha(\kappa)\ne\varnothing\bigr]>0$. Boosting this to a high probability guarantee requires more powerful tools, see~\cite{perkins2021frozen} for a delicate martingale argument and~\cite{abbe2021proof} for an argument based on a fully connected analog of the small subgraph conditioning method. Furthermore, very recently, the critical window around $\alpha_c(\kappa)$ was shown to be of constant width, see~\cite{altschuler2022fluctuations,sah2023}.
These facts highlight that the first moment `prediction' for the precise location of the phase transition is correct, and the transition is very sharp. 

Given that $S_\alpha(\kappa)$ is w.h.p.\,non-empty if $\alpha<\alpha_c(\kappa)$, a natural follow-up question is algorithmic: can  a solution $\bs\in S_\alpha(\kappa)$ be found efficiently?  And does the existence of efficient algorithms depend on $\alpha$?  
Efficient algorithms at small densities $\alpha$ were given in~\cite{kim1998covering} and \cite{abbe2021binary} for the \texttt{ABP} and \texttt{SBP} respectively while on the negative side, \cite{gamarnik2022algorithms} studied the limits of efficient algorithms (see details below).  The works~\cite{baldassi2007efficient,baldassi2015subdominant,baldassi2020clustering} put forth possible explanations for the success of efficient algorithms: while almost all solutions are totally frozen (conjectured in~\cite{mezard2005clustering,huang2014origin}), efficient algorithms access rare solutions lying in large clusters.  Recent works including~\cite{perkins2021frozen,abbe2021proof,abbe2021binary} have studied these structural predictions.
\subsection{Connections between Discrepancy Theory and the \texttt{SBP}}\label{sec:connections}
In order to explicate the connection between discrepancy minimization and the \texttt{SBP}, we focus on the \emph{proportional regime}, i.e. $M=\Theta(n)$. The discrepancy viewpoint is to take an  $\M\in\R^{M\times n}$ with a fixed aspect ratio $\alpha=M/n$, and to seek a $\bs$ such that $\|\M\bs\|_\infty$ is as small as possible. The perceptron viewpoint, on the other hand, is the inverse: fix a $\kappa>0$ first and seek the largest $\alpha$ for which a solution $\bs$ with $\|\M\bs\|_\infty\le \kappa\sqrt{n}$ exists. Furthermore, the asymptotic value of the average-case discrepancy in the proportional regime immediately follows from the sharp threshold result for the \texttt{SBP}~\eqref{eq:pt}: $\mathcal{D}(\M) = (1+o(1))f(\alpha)\sqrt{n}$ w.h.p., where $f(\alpha)$ is the `inverse' of $\alpha_c(\kappa)$~\eqref{eq:alpha-c-kappa}. 
\paragraph{Algorithmic Connections} 
The connection between the \texttt{SBP} and discrepancy theory further extends to algorithmic domain: the best known efficient algorithm for the \texttt{SBP} comes from the discrepancy literature. Suppose $\M\in\R^{M\times n}$ has i.i.d.\,Rademacher entries. Bansal and Spencer~\cite{bansalspenceronline} devised an efficient \emph{online algorithm} that finds a $\bs_{\rm ALG}\in\Sigma_n$ such that $\|\M\bs_{\rm ALG}\|_\infty =O(\sqrt{M})$ w.h.p.\,if $n\ge M=\omega(1)$. Informally, an algorithm is online if the $t^{\rm th}$ coordinate of the output $\bs_{\rm ALG}$ depends only on first $t$ columns of $\M$, see Definition~\ref{def:online-algs} for a formal definition. As an immediate corollary, this yields an efficient algorithm for the \texttt{SBP} that finds a solution $\bs\in S_\alpha(\kappa)$ w.h.p.\,if $\alpha = O(\kappa^2)$, see~\cite[Corollary~3.6]{gamarnik2022algorithms}. In fact, this is the best known algorithmic guarantee both for the \texttt{SBP} and for  discrepancy in the random proportional regime, see~\cite[Section~3.3]{gamarnik2022algorithms}.

In light of these existential and algorithmic results, it appears that the \texttt{SBP} may exhibit a striking \emph{statistical-to-computational gap} (\texttt{SCG}): the density below which solutions exist w.h.p., i.e.\,$\alpha_c(\kappa)$, is substantially larger than those below which polynomial-time search algorithms work. Further, this \texttt{SCG} is most profound when $\kappa\to 0$. While the Bansal-Spencer algorithm works only when $\alpha=O(\kappa^2)$, solutions do exist w.h.p.\,below $\alpha_c(\kappa)$ which, per~\eqref{eq:alpha-c-kappa}, is asymptotically $\frac{1}{\log_2(1/\kappa)}$. 
Origins of this \texttt{SCG} were investigated in~\cite{gamarnik2022algorithms}, where it was shown that the \texttt{SBP} exhibits an intricate geometrical property called the \emph{multi Overlap Gap Property} ($m$-OGP) when $\alpha=\Omega(\kappa^2 \log_2\frac1\kappa)$ and consequently \emph{stable algorithms} fail to find a satisfying solution for $\alpha = \Omega(\kappa^2 \log_2 \frac1\kappa)$. It is worth noting, though, that stable algorithms need not include online algorithms, which achieve the computational threshold for the \texttt{SBP}. What the limits of online algorithms are is an open question we undertake in this paper.

In addition to the \texttt{SBP}, the discrepancy minimization problem --- in particular the algorithmic problem of efficiently finding a constant discrepancy solution when such solutions exist w.h.p. --- also exhibits a  similar \texttt{SCG}. To recall, when $\M\in\R^{M\times n}$, then constant discrepancy solutions exist w.h.p.\,as soon as $n=\Omega(M\log M)$. On the other hand, the best known polynomial-time algorithm succeeds at a dramatically smaller value $M=o(\log n)$~\cite{bansalpersonal}, highlighting another striking \texttt{SCG}. This is our second focus in the present paper.

\subsection{Main Results}

Suppose $\M$ per~\eqref{eq:sbp-set} consists of i.i.d.\,$\cN(0,1)$ entries. Our first main result establishes that online algorithms fail to find a satisfying solution for the \texttt{SBP} at densities $\alpha=\Omega(\kappa^2)$.  
\begin{theorem}[Informal, see Theorem~\ref{thm:online-sbp}]\label{thm:online-sbp-informal}  For densities $\alpha=\Omega(\kappa^2)$, online algorithms fail to find a solution for the \texttt{SBP} w.p.\,greater than $e^{-\Theta(n)}$. 
\end{theorem}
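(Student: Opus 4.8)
The plan is to show that any online algorithm, when run on a Gaussian instance $\M$ with $\alpha = \Omega(\kappa^2)$, must fail with probability $1 - e^{-\Theta(n)}$, by combining a suitable online variant of the multi Overlap Gap Property ($m$-OGP) with an ensemble/interpolation argument over correlated instances. First I would set up the correlated family: construct a continuum (or fine discretization) of instances $\M^{(0)}, \M^{(1)}, \dots$ obtained by resampling columns one at a time, or via an Ornstein--Uhlenbeck-type interpolation $\M^{(t)} = \cos\theta_t \, \M + \sin\theta_t \, \M'$ with $\M'$ an independent copy, so that consecutive instances are highly correlated while far-apart instances are nearly independent. The key structural claim — the heart of the argument — is a forbidden-configuration statement: for $\alpha = \Omega(\kappa^2)$ and $m = \omega(1)$ growing suitably slowly, w.p.\ $1 - e^{-\Theta(n)}$ there is no $m$-tuple $(\bs^1, \dots, \bs^m)$ with each $\bs^j$ a satisfying solution of $\M^{(t_j)}$, with the instances $\M^{(t_j)}$ along the interpolation path, and with all pairwise overlaps $\langle \bs^i, \bs^j\rangle / n$ lying in a prescribed ``intermediate'' band $[\beta_1, \beta_2]$ bounded away from both $0$ and $1$. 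This is proven by a first-moment (union bound) computation: the expected number of such tuples is $2^{nm}$ times a probability that decays like $e^{-nm \cdot c(\alpha,\kappa)}$ for some positive rate when $\alpha$ exceeds the threshold $\Omega(\kappa^2)$; crucially the correlation structure among the $\M^{(t_j)}$ must be handled carefully in the Gaussian covariance computation, and one must verify that the exponential rate beats $m \log 2$ with room to spare so the bound survives after the union over the discretization.

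Next I would translate this geometric obstruction into an algorithmic lower bound via a stability-of-online-algorithms argument. The point is that an online algorithm run on two correlated instances $\M^{(s)}$ and $\M^{(t)}$ produces outputs whose coordinates agree on the prefix of columns where the two instances coincide (or nearly coincide under interpolation), so the Hamming distance between the two outputs is controlled by how much the instances differ up to that point — more precisely, the overlap between $\bs_{\rm ALG}(\M^{(s)})$ and $\bs_{\rm ALG}(\M^{(t)})$ varies \emph{continuously} (in a quantitative, high-probability sense) along the interpolation path. Now run the hypothetical successful online algorithm on each $\M^{(t_j)}$ in the ensemble: at $j=0$ the outputs of nearby instances are essentially identical (overlap $\approx 1$), while for well-separated $j$'s independence forces overlaps to concentrate near $0$; by the intermediate-value / continuity property one can extract an $m$-tuple of outputs whose pairwise overlaps all fall inside the forbidden band $[\beta_1,\beta_2]$. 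If the algorithm succeeds on all $m$ instances simultaneously with probability more than $e^{-\Theta(n)}$, a union bound over the (polynomially or sub-exponentially many) discretization points still leaves positive probability that all $m$ outputs are genuine solutions \emph{and} their overlaps are trapped in the band — contradicting the forbidden-configuration claim. Choosing $m = \omega(1)$ is what lets us defeat success probabilities as small as $e^{-\Theta(n)}$: the ``budget'' $e^{-\Theta(n)}$ per instance is overwhelmed once we demand $m$ simultaneous successes against an event of probability $e^{-\Theta(nm)}$.

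The main obstacle I anticipate is making the online-stability / continuity step fully rigorous with the right quantitative dependence on $m$. Online algorithms are only guaranteed to be ``stable'' in the weak sense that changing late columns does not affect early coordinates; one needs to argue that under the smooth Gaussian interpolation the \emph{overlap profile} $t \mapsto \langle \bs_{\rm ALG}(\M^{(0)}), \bs_{\rm ALG}(\M^{(t)})\rangle/n$ cannot jump by more than $o(1)$ between consecutive discretization points with overwhelming probability, uniformly over the (growing number of) points and over the choice of $m$ anchor instances — this is where a careful concentration argument (e.g.\ controlling the number of coordinates that flip when one column is resampled, perhaps via a martingale or Gaussian-concentration estimate tailored to the online structure) is essential, and it must be compatible with the $e^{-\Theta(n)}$ failure probability we are chasing. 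A secondary technical point is optimizing the overlap band $[\beta_1,\beta_2]$ and the growth rate of $m$ so that the first-moment rate $c(\alpha,\kappa)$ is strictly positive precisely down to $\alpha = \Theta(\kappa^2)$, matching the Bansal--Spencer algorithmic guarantee up to constants; this requires a tight asymptotic analysis of the Gaussian orthant-type probabilities $\mathbb{P}[|\langle \bs^j, X_i\rangle| \le \kappa\sqrt n \ \forall i,j]$ under the correlated covariance, in the small-$\kappa$ regime.
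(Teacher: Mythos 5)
Your high-level framing (correlated ensemble, forbidden structure, first moment, defeat exponentially small success probability via $m$-tuples) is in the right spirit, but the two steps you lean on most heavily do not work for online algorithms, and the paper's proof is built precisely to avoid them. First, your central "stability-of-online-algorithms" step --- that the overlap profile $t\mapsto \langle \bs_{\rm ALG}(\M^{(0)}),\bs_{\rm ALG}(\M^{(t)})\rangle/n$ varies continuously along an Ornstein--Uhlenbeck interpolation, so an intermediate-value argument traps pairwise overlaps in a forbidden band --- is not available: an online algorithm is only constrained coordinate-by-coordinate by the prefix of columns, and perturbing \emph{early} columns (which any smooth interpolation does) can cascade and flip essentially all later output coordinates, so the overlap can jump macroscopically between consecutive discretization points. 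This is exactly why the paper stresses that online algorithms are not stable and why the usual ensemble-OGP chaining argument is replaced by a different forbidden structure: the instances $\M_2,\dots,\M_m$ are obtained from $\M_1$ by resampling only the \emph{last} $\Delta n$ columns, and online-ness then forces the $m$ outputs to agree \emph{exactly} on the first $(1-\Delta)n$ coordinates --- no continuity, concentration, or interpolation estimate is needed. The first-moment computation is then done for tuples of solutions sharing a common prefix (with $\Delta=(2\kappa)^2$ and $m$ a large \emph{constant}, not $\omega(1)$; growing $m$ is only needed for the discrepancy theorems), rather than for tuples with overlaps in an intermediate band.

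Second, you never derive the lower bound on the probability that the algorithm succeeds \emph{simultaneously} on all $m$ correlated instances; you simply assume it is comparable to the single-instance guarantee. Since the success events on correlated instances are neither independent nor obviously positively associated, this is a genuine gap, and it is the step that lets the paper rule out algorithms with success probability as small as $e^{-cn}$: conditionally on the shared randomness (the common first $(1-\Delta)n$ columns) the $m$ success indicators are i.i.d., so $\mathbb{P}[\text{all succeed}]=\mathbb{E}_\zeta\bigl[\mathbb{E}[I_1\mid\zeta]^m\bigr]\ge p_s^m$ by Jensen's inequality. Your ensemble of instances at different interpolation times has no such common component making the indicators conditionally i.i.d., so this trick does not transfer; without it (or a substitute), demanding $m$ simultaneous successes gives you no usable probability lower bound to contradict the first-moment estimate. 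In short: replace the interpolation-plus-continuity scheme by the resample-a-suffix construction, use prefix agreement as the forbidden structure, and use the conditional-Jensen argument to handle the exponentially small success probability.
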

Our next result extends Theorem~\ref{thm:online-sbp-informal} to the discrepancy minimization problem when $\M\in\R^{M\times n}$ consists of i.i.d.\,Rademacher or i.i.d.\,Bernoulli$(p)$ entries. 
\begin{theorem}[Informal, see Theorems~\ref{thm:online-disc}-\ref{thm:online-disc-bern}]\label{thm:online-disc-informal} 
There exists $c>0$ such that online algorithms fail to return a solution of discrepancy at most $c\sqrt{M}$ w.p.\,greater than $e^{-\Theta(M)}$.
\end{theorem}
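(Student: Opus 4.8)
The plan is to recast the statement in the language of the symmetric perceptron and then run the online lower bound behind Theorem~\ref{thm:online-sbp}, with the Gaussian anticoncentration estimates replaced by their Rademacher and Bernoulli counterparts. A signing $\bs\in\{-1,1\}^n$ has discrepancy at most $c\sqrt M$ for $\M\in\R^{M\times n}$ exactly when $\|\M\bs\|_\infty\le\kappa\sqrt n$ with $\kappa:=c\sqrt{M/n}$, i.e.\ exactly when $\bs\in S_\alpha(\kappa)$ for $\alpha:=M/n$. Since $\kappa^2=c^2\alpha$, the regime $\alpha=\Omega(\kappa^2)$ in which the online barrier applies is simply ``$c$ a small enough absolute constant'', a condition that holds across the whole Bansal--Spencer range $n\ge M$; note that $\kappa\to 0$, and the multiplicity $m$ introduced below grows (indeed $m=\Theta(n/M)$), precisely when $n/M\to\infty$. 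Thus it suffices to prove: for $c$ small enough, no online algorithm reaches $S_\alpha(\kappa)$ with probability exceeding $e^{-\Theta(\kappa^2 n)}=e^{-\Theta(M)}$, for $\M$ with i.i.d.\ Rademacher entries (Theorem~\ref{thm:online-disc}), and likewise for i.i.d.\ Bernoulli$(p)$ entries in the admissible range of $p$ (Theorem~\ref{thm:online-disc-bern}).

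The geometric barrier is as follows. Fix a level $t=\lceil\beta n\rceil$ with $1-\beta=\Theta(\kappa^2)$, and let $\M^{(1)},\dots,\M^{(m)}\in\R^{M\times n}$ be \emph{suffix resamplings}: they share their first $t$ columns (the columns of a master matrix $\M^{(0)}$) and have mutually independent fresh columns in positions $t+1,\dots,n$. The claim is that, with probability $1-e^{-\Theta(n)}$, there is no tuple $(\sigma,\tau^{(1)},\dots,\tau^{(m)})$ with $\sigma\in\{-1,1\}^t$ and $\tau^{(i)}\in\{-1,1\}^{n-t}$ such that $(\sigma,\tau^{(i)})$ satisfies all constraints of $\M^{(i)}$ for every $i\in[m]$. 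Two features distinguish this from the classical $m$-OGP and are what make it sharp for online algorithms: the $m$ solutions live on \emph{correlated} instances, and --- the crucial point --- they are pinned to a common coordinate prefix of length $t$. The latter is a far more rigid object than a generic $m$-tuple at pairwise overlap $\beta$ (there are only $2^t$ shared prefixes, versus $2^{nH((1+\beta)/2)}$ possible agreement patterns), and it is exactly this rarity that pushes the hardness threshold down from order $\kappa^2\log(1/\kappa)$ (the best known for stable algorithms) to order $\kappa^2$.

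I would establish the barrier by a first-moment estimate. For a fixed tuple, write $(\M^{(i)}(\sigma,\tau^{(i)}))_r=V_r+W^{(i)}_r$, where $V_r=\sum_{k\le t}M^{(0)}_{rk}\sigma_k$ depends only on the shared columns and $W^{(i)}_r=\sum_{k>t}M^{(i)}_{rk}\tau^{(i)}_{k-t}$ only on the fresh columns of instance $i$. These are functions of disjoint blocks of i.i.d.\ entries, so the $W^{(i)}_r$ are mutually independent given $\{V_r\}_r$, and
\[
\mathbb{P}\bigl[\,\forall i,r:\ |V_r+W^{(i)}_r|\le\kappa\sqrt n\,\bigr]=\mathbb{E}\Bigl[\textstyle\prod_{r}\prod_{i}\mathbb{P}\bigl[\,|V_r+W^{(i)}_r|\le\kappa\sqrt n\ \big|\ V_r\,\bigr]\Bigr].
\]
Each inner factor is the probability that $W^{(i)}_r$ --- a $\pm1$-combination of $n-t$ i.i.d.\ entries --- lands in a fixed interval of length $2\kappa\sqrt n$, which by the local central limit theorem (Gaussian case), Erd\H{o}s' Littlewood--Offord bound (Rademacher case), or its sparse analogue (Bernoulli$(p)$), is at most $\rho:=(1+o(1))\,2\kappa/\sqrt{2\pi(1-\beta)}$ --- uniformly in $V_r$, in $r$ and $i$, and in the sign pattern $\tau^{(i)}$ (here we use $\kappa\sqrt n=c\sqrt M\to\infty$). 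Hence the first moment is at most $2^{t}\cdot 2^{(n-t)m}\cdot\rho^{Mm}=2^{\,n[\beta+m((1-\beta)+\alpha\log_2\rho)]}$. Choose $1-\beta=\Theta(\kappa^2)$ small enough that $\rho$ is an absolute constant bounded below $1$; with this $\beta$ one has $(1-\beta)+\alpha\log_2\rho=-\mu$ for a constant $\mu=\Theta(\kappa^2)>0$ whenever $\alpha\ge C\kappa^2$ for a suitable absolute constant $C$. Taking the multiplicity $m:=\lceil 2/\mu\rceil=\Theta(1/\kappa^2)$ makes the bracket $\le\beta-2\le-1$, so the first moment --- hence, by Markov's inequality, the probability that a forbidden tuple exists --- is at most $2^{-n}=e^{-\Theta(n)}$. (For Bernoulli$(p)$ one also has to absorb unbalanced $\sigma$ and $\tau^{(i)}$ --- they merely translate the interval, which the uniform-in-$V_r$ estimate already handles --- and to track the dependence of $\rho$ on $p$, which is what fixes the admissible range of $p$.)

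Finally, I would pass from the barrier to the lower bound exactly as for Theorem~\ref{thm:online-sbp}. Suppose an online algorithm $\A$ reaches $S_\alpha(\kappa)$ with probability $p_f$. Conditionally on the shared columns $\mathcal F$, the instances $\M^{(1)},\dots,\M^{(m)}$ are i.i.d., so the events ``$\A$ succeeds on $\M^{(i)}$'' are conditionally i.i.d.\ with common probability $q(\mathcal F)$, and by Jensen's inequality $\mathbb{P}[\A\text{ succeeds on all }m]=\mathbb{E}_{\mathcal F}[q(\mathcal F)^m]\ge\bigl(\mathbb{E}_{\mathcal F}[q(\mathcal F)]\bigr)^m=p_f^{\,m}$. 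On the other hand, by the defining property of an online algorithm the outputs $\A(\M^{(1)}),\dots,\A(\M^{(m)})$ agree on coordinates $1,\dots,t$, since these depend only on the shared first $t$ columns; writing that common prefix as $\sigma$ and the suffixes as $\tau^{(i)}$, simultaneous success yields precisely a tuple $(\sigma,\tau^{(1)},\dots,\tau^{(m)})$ of the forbidden kind. Therefore $p_f^{\,m}\le\mathbb{P}[\A\text{ succeeds on all }m]\le e^{-\Theta(n)}$, whence $p_f\le e^{-\Theta(n)/m}=e^{-\Theta(\kappa^2 n)}=e^{-\Theta(M)}$, as claimed; the Bernoulli case is identical, with the Bernoulli barrier. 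The one genuinely delicate ingredient --- and the step I expect to be the main obstacle --- is the uniform discrete anticoncentration bound with the correct leading constant: because $n-t=\Theta(\kappa^2 n)$ the sum $W^{(i)}_r$ has comparatively few summands, so one needs its local-limit behaviour rather than a crude Berry--Esseen estimate in order for the $m$-fold product to beat the $2^t$ entropy of the shared prefix; and for Bernoulli$(p)$ the degradation of that estimate as $p$ becomes small, together with the separate treatment of highly unbalanced prefixes and suffixes (where $W^{(i)}_r$ is a difference of two binomials), is precisely what forces the hypotheses in Theorem~\ref{thm:online-disc-bern}.
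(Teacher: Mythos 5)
Your proposal is correct and follows essentially the same route as the paper: a suffix-resampled ensemble of $m=\Theta(n/M)$ correlated instances, a forbidden structure of $m$ low-discrepancy solutions pinned to a common prefix, a first-moment bound combining the $2^{t+(n-t)m}$ counting term with a per-row, per-instance anticoncentration factor, and Jensen's inequality plus the online property to force $p_s^m\le e^{-\Theta(n)}$. The only (harmless) miscalibration is your closing worry about needing local-limit precision: since the interval length $2c\sqrt{M}=\omega(1)$ swamps the Berry--Esseen error and the constants $c$ (or the resampled-block length) are free to tune, the paper's crude Berry--Esseen bound $3|I|/\sqrt{M}$ already suffices, exactly as your own remark that $\kappa\sqrt{n}\to\infty$ suggests.
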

If the entries of $\M$ are Rademacher, taking $c=1/24$ suffices.
For Bernoulli case, the implied constant depends on $p$: it suffices to take $c\triangleq c_p = \sqrt{p-p^2}/24$. Taken together, Theorems~\ref{thm:online-sbp-informal} and~\ref{thm:online-disc-informal} collectively yield that among the class of online algorithms, Bansal-Spencer algorithm~\cite{bansalspenceronline} is optimal up to constants for both models. Our proof is based on a novel version of $m$-OGP: we show the non-existence of tuples of solutions agreeing on first $1-\Delta$ fraction of coordinates for a suitable $\Delta\in(0,1)$, for a collection of $m$ correlated instances, see below for details. This barrier is more restricted than $m$-OGP, which asserts the non-existence of tuples of solutions at a prescribed distance. Additionally, for Theorem~\ref{thm:online-disc-informal}, one has to consider $m$-tuples with growing values of $m$, $m=\omega(1)$; this idea is originally due to Gamarnik and K{\i}z{\i}lda\u{g}~\cite{gamarnik2021algorithmic} for lowering the $m$-OGP threshold.

To the best of our knowledge, Theorems~\ref{thm:online-sbp-informal}-\ref{thm:online-disc-informal} are the first (up-to-constants) tight hardness guarantees via geometrical barriers against classes beyond stable algorithms, 
see Section~\ref{sec:background} for details. Furthermore, unlike prior work~\cite{gamarnik2020low,wein2020optimal,huang2021tight,gamarnik2021algorithmic,gamarnik2022algorithms}, the algorithms ruled out need not succeed w.h.p.\,or even with a constant probability: an exponentially small success probability suffices. This is made possible by using a clever application of Jensen's inequality, originally due to Gamarnik and Sudan~\cite{gamarnik2017performance}.

\paragraph{Proof Sketch} We sketch the proof of Theorem~\ref{thm:online-sbp-informal}, which is based on a new version of $m$-OGP coupled with a contradiction argument. Suppose that such an online algorithm $\A$ with a success probability of $p_s$ exists. Let $\M_1\in\R^{M\times n}$ with i.i.d.\,$\cN(0,1)$ entries. Fix an $m\in\mathbb{N}$ and a $\Delta\in(0,1)$, generate random matrices $\M_i\in\R^{M\times n}$, $2\le i\le m$, by independently resampling the last $\Delta n$ columns of $\M_1$. Running $\A$ on each $\M_i$, we obtain solutions $\bs_i \triangleq \A(\M_i)\in\Sigma_n$, $1\le i\le m$. An application of Jensen's inequality then reveals $\|\M_i\bs_i\|_\infty\le \kappa\sqrt{n}$, $1\le i\le m$, w.p.\,at least $p_s^m$. Furthermore, since $\A$ is online, it is the case that any $\bs_i$ and $\bs_j$ necessarily have identical first $n-\Delta n$ coordinates. Namely, if such an $\A$ exists, then w.p.\,at least $p_s^m$, there exists an $m$-tuple $(\bs_1,\dots,\bs_m)$ of satisfying solutions that agree on first $n-\Delta n$ coordinates. We then establish, using the \emph{first moment method}, that for suitably chosen $m,\Delta$; the probability that such an $m$-tuple exists is in fact strictly less than $p_s^m$. This is a contradiction. The proof of Theorem~\ref{thm:online-disc-informal} is similar, though it requires additional technical steps. In particular, one needs an anti-concentration argument for signed sums of binary variables via Berry-Esseen Theorem.


Our next focus is on the algorithmic problem of efficiently finding a constant discrepancy solution, given a random $\M\in\R^{M\times n}$. To recall, such solutions exist w.h.p.\,as soon as $n=\Omega(M\log M)$~\cite{altschuler2021discrepancy}, while the best known polynomial-time algorithm works only when $M=o(\log n)$~\cite{bansalpersonal}. Further, it was conjectured in~\cite{altschuler2021discrepancy} that this task is algorithmically hard. Towards this conjecture, we focus on a smooth setting where the entries of $\M$ are i.i.d.\,$\cN(0,1)$. Our next main result shows the presence of $m$-OGP with $m=O(1)$ when $n=\Theta(M\log M)$, giving a rigorous evidence of hardness at the `boundary' $n=\Theta(M\log M)$.
\begin{theorem}[Informal, see Theorem~\ref{thm:m-ogp-discrepancy}]\label{thm:m-ogp-disc-informal} 
    For $n=\Theta(M\log M)$, the set of constant discrepancy solutions exhibits $m$-OGP (with constant $m$) for suitably chosen parameters.
\end{theorem}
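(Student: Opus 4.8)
The plan is to establish the $m$-OGP by a first‑moment computation, in the spirit of the OGP analyses of the \texttt{SBP} in \cite{gamarnik2022algorithms} and of the number‑balancing problem in \cite{gamarnik2021algorithmic}, but exploiting the crucial feature that here ``constant discrepancy'' means a \emph{vanishing} normalized target: write $\epsilon:=K/\sqrt n\to 0$ for the target $\|\M\bs\|_\infty\le K$. For fixed $\bs\in\Sigma_n$ and $\M$ with i.i.d.\ $\cN(0,1)$ entries the rows satisfy $\ip{\bs}{X_i}\sim\cN(0,n)$ independently, so $p_0:=\pr[\|\M\bs\|_\infty\le K]=\pr[|Z|\le\epsilon]^M=\bigl(\sqrt{2/\pi}\,\epsilon\,(1+O(\epsilon^2))\bigr)^M$. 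With $n=CM\log M$ this gives $\log\bigl(2^n p_0\bigr)=n\bigl(\log 2-\tfrac1{2C}+o(1)\bigr)$, which diverges exactly when $C>(2\log 2)^{-1}$; this is the Altschuler--Niles-Weed existence threshold, i.e.\ the only density window in which an OGP is meaningful, and the content of the theorem is that the OGP already holds there.

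\textbf{The forbidden structure and reduction to a first moment.} Fix an integer $m\ge 2$ and constants $0<\rho_1<\rho_2<1$, to be chosen. Let $Z_m$ count the $m$-tuples $(\bs_1,\dots,\bs_m)\in\Sigma_n^m$ with all pairwise overlaps $\tfrac1n\ip{\bs_i}{\bs_j}\in[\rho_1,\rho_2]$ and $\|\M\bs_i\|_\infty\le K$ for every $i$. (For the ensemble version, which is what obstructs stable algorithms, replace $\M$ by the interpolated family $\M_\tau=\sqrt{1-\tau}\,\M^{(0)}+\sqrt\tau\,\M^{(1)}$ with $\M^{(0)},\M^{(1)}$ i.i.d.\ $\cN(0,1)$, require $\bs_i$ feasible for $\M_{\tau_i}$ along a grid of $\tau$-values of spacing $n^{-O(1)}$, and add a union bound over the $\mathrm{poly}(n)$ grid tuples; this changes nothing below at leading order.) By Markov's inequality it suffices to show $\E{Z_m}=e^{-\Omega(n)}$ for a suitable $(m,\rho_1,\rho_2)$ and $C$ slightly above $(2\log 2)^{-1}$.

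\textbf{First moment estimate.} Split $\E{Z_m}$ by the overlap profile $(\rho_{ij})$; there are $n^{O(m^2)}$ of them. The number of $m$-tuples realizing an admissible profile is at most $2^n\prod_{i\ge 2}\bigl|\{\bs:\tfrac1n\ip{\bs}{\bs_1}\in[\rho_1,\rho_2]\}\bigr|\le 2^n\bigl(n\,e^{nH((1-\rho_1)/2)}\bigr)^{m-1}$, where $H(x)=-x\log x-(1-x)\log(1-x)$ and we used its monotonicity on $[0,\tfrac12]$. For the probability: for each row $j$ the vector $(\ip{\bs_i}{X_j})_{i\le m}$ is centered Gaussian with covariance $n\Sigma$, $\Sigma$ a correlation matrix with off‑diagonal entries in $[\rho_1,\rho_2]$, so (the box $[-K,K]^m$ having vanishing half‑width $\epsilon$) a multivariate small‑ball estimate gives
\[
\pr\bigl[\,|\ip{\bs_i}{X_j}|\le K\ \forall i\,\bigr]=(1+o(1))\,\frac{(2\epsilon)^m}{(2\pi)^{m/2}\sqrt{\det\Sigma}},
\]
uniformly over admissible $\Sigma$ (shrink the band so $\det\Sigma$ is bounded below; the degenerate profiles absent from such a band would otherwise be discarded). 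Raising to the $M$-th power, taking logs, and noting that $(2\pi)^{-m/2}$ and $\det\Sigma$ contribute only $O(M)=o(n)$ once $\rho_2<1$ is fixed, the dominant balance is
\[
\tfrac1n\log\E{Z_m}\le \log 2+(m-1)H\!\Bigl(\tfrac{1-\rho_1}{2}\Bigr)-\tfrac{m}{2C}+o(1).
\]
Since $H\bigl(\tfrac{1-\rho_1}{2}\bigr)<\log 2$ for every $\rho_1>0$, the first two terms sum to a fixed value strictly below $m\log 2$, whereas $\tfrac{m}{2C}\to m\log 2$ as $C\downarrow(2\log 2)^{-1}$; hence for $C$ in the nonempty interval $\bigl((2\log 2)^{-1},\,\tfrac{m}{2(\log 2+(m-1)H((1-\rho_1)/2))}\bigr)$ the right‑hand side is $-\Omega(1)$, so $\E{Z_m}=e^{-\Omega(n)}$. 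Combined with $2^n p_0\to\infty$ on the same interval, this is the claimed $m$-OGP at $n=\Theta(M\log M)$ (a larger $m$ widens the window, so a modest constant $m$ suffices for the algorithmic consequence).

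\textbf{Main obstacle.} The conceptual point is clean: the ``positive‑correlation bonus'' $1/\sqrt{\det\Sigma}$ — precisely the term that \emph{destroys} an OGP in the proportional ($\kappa=\Theta(1)$) regime by being exponentially large in $n$ — here costs only $O(M)=o(n)$, because $\epsilon=K/\sqrt n\to0$ forces $M=\Theta(n/\log n)=o(n)$. The real work is uniformity: (i) the multivariate Gaussian small‑ball asymptotics above must hold with the stated constant \emph{uniformly} over the continuum of admissible correlation matrices $\Sigma$ as $\epsilon\to0$, which requires a uniform lower bound on $\lambda_{\min}(\Sigma)$ (hence on $\det\Sigma$) over the band — routine for constant $m$ by compactness but needing care, and in the ensemble version complicated by the extra ``effective overlap'' factors $c_{ik}=\sqrt{(1-\tau_i)(1-\tau_k)}+\sqrt{\tau_i\tau_k}$ from the interpolation, which must be shown not to drive $\det\Sigma$ to $0$; and (ii) checking that all of the $o(n)$ slack — the $n^{O(m^2)}$ profiles, the $\mathrm{poly}(n)$ grid tuples, the $O(M)$ normalization/determinant contributions — is genuinely dominated by the $\Omega(n)$ gap, and in particular that excluding $\rho$ near $1$ from the band is \emph{necessary} (there $\det\Sigma$ can be $n^{-\Omega(m)}$, inflating the bonus to $\Theta(n)$). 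I expect (i), the uniform multivariate small‑ball control in the vanishing‑target regime, to be the main technical obstacle.
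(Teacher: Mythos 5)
Your overall route is the same as the paper's: a first moment bound over $m$-tuples of constant-discrepancy solutions whose pairwise overlaps lie in a narrow band near some $\rho_1$, with the counting term $n+(m-1)nh_b\bigl(\tfrac{1-\rho_1}{2}\bigr)$ balanced against the probability term $\approx -\tfrac{mM}{2}\log n$, and an ensemble version obtained by a union bound over correlated instances. But as written there is a real gap in coverage. Your final inequality forces $\log 2+(m-1)H\bigl(\tfrac{1-\rho_1}{2}\bigr)<\tfrac{m}{2C}$, and you then restrict $C$ to a window just above $(2\log 2)^{-1}$, remarking that larger $m$ widens the window. It does, but only up to the saturation value $\tfrac{1}{2H((1-\rho_1)/2)}$: with $\rho_1$ fixed, no choice of constant $m$ covers arbitrary $C$. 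The theorem (in its formal version, Theorem~\ref{thm:m-ogp-discrepancy}) asserts the OGP for \emph{arbitrary} constants $C_1>c_2>0$ with $c_2M\log M\le n\le C_1M\log M$, so you must additionally push the overlap location toward $1$ as a function of $C_1$ — choose $\rho_1$ with $H\bigl(\tfrac{1-\rho_1}{2}\bigr)<\tfrac{1}{4C_1}$ (in the paper, $h_b(1-\beta^*)\le\min\{\tfrac{1}{4C_1},\tfrac12\}$) and then $m\asymp C_1$ (the paper takes $m=\max\{2,16C_1\}$). This is exactly the tuning your own displayed bound permits, but it is missing from the argument and the "larger $m$ suffices" remark is not enough on its own.

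Two further points, one reassuring and one corrective. What you flag as the main obstacle — uniform $(1+o(1))$ multivariate small-ball asymptotics as $\epsilon\to 0$ — is not needed: the first moment only requires an upper bound, and bounding the Gaussian density by $(2\pi)^{-m/2}(\det\Sigma)^{-1/2}$ is trivial and uniform; this is all the paper uses. What does need care is the determinant lower bound, and your "routine by compactness" claim is false for a wide band $[\rho_1,\rho_2]$: overlap profiles with all entries in a wide band can produce exactly singular Gram matrices (e.g.\ $m=3$, $\rho_{12}=\rho_{13}=0.9$, $\rho_{23}=0.62$), so shrinking the band is load-bearing, not cosmetic. The paper takes band width $\eta=\tfrac{1-\beta}{2m}$ and uses the Hoffman--Wielandt inequality to get $\det\Sigma(\boldsymbol{\eta})\ge\bigl(\tfrac{1-\beta}{2}\bigr)^m$ uniformly. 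For the ensemble version, your worry about the effective correlation factors from the interpolation is resolved in the paper by Sid\'ak's Gaussian comparison inequality: the box probability is monotone in the factors $\cos\tau_i$, so it is maximized when all $\tau_i=0$, which strips the $\tau$-dependence entirely and reduces to the single matrix $\Sigma(\boldsymbol{\eta})$; note also that the paper's union bound tolerates $|\mathcal{I}|\le 2^{cn}$ correlated instances (with $c=1/m$ absorbed into the exponent), which is stronger than your polynomial grid and is what the stable-algorithm application uses.
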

The regime $\log n\ll M \ll n/\log n$ as well as extensions beyond Gaussian disorder---in particular to the Bernoulli or Rademacher case---are among the open problems we discuss in Section~\ref{sec:open-prob}. 

Our final main result leverages the $m$-OGP to show that \emph{stable algorithms} fail to find a constant discrepancy solution when $n=\Theta(M\log M)$. Informally, an algorithm is stable if a small perturbation of its inputs induces only a small change in its output $\bs$, see Definition~\ref{def:stable-alg} for a formal statement. The class of stable algorithm has been shown to capture powerful classes of algorithms including low-degree polynomials~\cite{gamarnik2020low,bresler2021algorithmic}, Approximate Message Passing (AMP)~\cite{gamarnikjagannath2021overlap}, and Boolean circuits of low-depth~\cite{gamarnik2021circuit}.
\begin{theorem}[Informal, see Theorem~\ref{thm:stable-hard}]\label{thm:alg-hard-disc-informal}
    For $n=\Theta(M\log M)$, stable algorithms fail to find a constant discrepancy solution w.p.\,greater than a certain constant.
\end{theorem}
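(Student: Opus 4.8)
The plan is to derive Theorem~\ref{thm:alg-hard-disc-informal} from the $m$-OGP (Theorem~\ref{thm:m-ogp-disc-informal}) via the now-standard interpolation-plus-contradiction argument, adapted to stable algorithms and to the correlated-instance version of the Overlap Gap Property used throughout the paper. Suppose for contradiction that a stable algorithm $\A$ takes a Gaussian matrix $\M\in\R^{M\times n}$ and outputs $\bs=\A(\M)\in\Sigma_n$ with $\D$-discrepancy at most the target constant with probability at least $\rho$ for some fixed $\rho>0$ larger than the failure constant in the statement. First I would set up an interpolation path: let $\M^{(0)}$ have i.i.d.\ $\cN(0,1)$ entries, and define $\M^{(j)}$ for $j=0,1,\dots,T$ by resampling the Gaussian entries one (or a small block) at a time, so that consecutive matrices $\M^{(j)},\M^{(j+1)}$ differ in a vanishing fraction of entries and each $\M^{(j)}$ is marginally a fresh Gaussian matrix. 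Since these matrices are highly correlated, the stability hypothesis (Definition~\ref{def:stable-alg}) forces the outputs $\bs^{(j)}=\A(\M^{(j)})$ to change slowly along the path: with probability close to $1$, consecutive outputs $\bs^{(j)}$ and $\bs^{(j+1)}$ are close in normalized Hamming distance.

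Next I would run the standard ``almost-all-pairs'' or ``chaining'' construction to extract an $m$-tuple witnessing the OGP. Discretize the interpolation times into $\binom{T}{m}$ ordered $m$-subsets $j_1<\cdots<j_m$; on the event (of probability at least $\rho^m$ up to the Jensen-type trick of Gamarnik--Sudan already invoked in the paper) that $\A$ succeeds on all of $\M^{(j_1)},\dots,\M^{(j_m)}$ simultaneously, we get an $m$-tuple of low-discrepancy solutions. By a pigeonhole / intermediate-value argument over the interpolation path --- using that at one end all outputs coincide and at the other end a typical pair of independent solutions has normalized overlap near $1/2$, while stability rules out large jumps --- one can choose the times $j_1,\dots,j_m$ so that the resulting tuple $(\bs^{(j_1)},\dots,\bs^{(j_m)})$ has all pairwise distances landing in the forbidden window of the $m$-OGP, and so that the associated instances have exactly the prescribed correlation structure. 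This places us precisely in the configuration that Theorem~\ref{thm:m-ogp-disc-informal} asserts to be empty (w.h.p.), a contradiction once $\rho$ exceeds the relevant constant and $T,m$ are tuned so that the union bound over the $\binom{T}{m}$ choices is absorbed.

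The delicate points, and where I expect the real work to lie, are twofold. First, quantitatively matching the geometry: the $m$-OGP of Theorem~\ref{thm:m-ogp-disc-informal} forbids tuples whose pairwise overlaps lie in a specific interval $(1-\Delta_1,1-\Delta_2)$ (or the analogous multi-overlap band), so the interpolation and the stability modulus must be calibrated so that the extracted tuple provably has \emph{every} pairwise distance in that band and avoids both the ``too close'' regime (handled by stability/continuity so no jump overshoots) and the ``too far'' regime (handled by the fact that the path is continuous and starts from a single point). Second, the probabilistic bookkeeping: one must verify that the success event on all $m$ correlated instances still has probability at least (roughly) $\rho^m$ --- this is exactly the Jensen's inequality device the authors attribute to Gamarnik--Sudan --- and that this beats the $e^{-\Theta(M)}$ or $e^{-\Theta(n)}$-type probability that the forbidden tuple exists, which in turn forces the number of interpolation steps $T$ to be only polynomially large while $m$ stays $O(1)$ as guaranteed by Theorem~\ref{thm:m-ogp-disc-informal}. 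Assembling these estimates --- stability modulus vs.\ OGP window vs.\ success-probability budget --- is the crux; the remainder is routine, and the conclusion is that no stable $\A$ can succeed with probability exceeding the stated constant when $n=\Theta(M\log M)$.
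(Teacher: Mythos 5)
Your high-level skeleton --- interpolate between correlated instances, use stability to force slow drift of the output, extract an $m$-tuple in the forbidden overlap band, and contradict the ensemble $m$-OGP --- is indeed the route behind Theorem~\ref{thm:stable-hard} (the paper imports the Ramsey-theoretic argument of~\cite{gamarnik2021algorithmic,gamarnik2022algorithms} essentially verbatim). But two of your key steps do not go through. First, the Jensen/Gamarnik--Sudan device cannot give you the $\rho^m$ lower bound in this setting. That trick needs the success indicators to be conditionally i.i.d.\ given one common block of randomness, which is exactly the structure of the online construction ($m$ instances obtained by independently resampling the \emph{same} block of columns of a single matrix). Along a chained interpolation path the instances $\M^{(j_1)},\dots,\M^{(j_m)}$ share pairwise-different portions, and --- worse --- the indices $j_1<\dots<j_m$ are chosen \emph{adaptively} from the realized trajectory, so the argument needs success and stability at essentially every grid point of the path, not at $m$ preselected instances. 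The actual proof therefore union-bounds over the $(Q+1)T$ algorithm calls and $QT$ stability events, which is precisely why Theorem~\ref{thm:stable-hard} requires $p_f\le \frac{1}{9(Q+1)T}$ and $p_{\rm st}\le\frac{1}{9Q(T+1)}$ per~\eqref{eq:C-Q-T}: the stable-algorithm result only rules out algorithms succeeding with probability close to one, and the exponentially-small-success regime is achieved in this paper only for online algorithms.

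Second, your pigeonhole/intermediate-value step silently assumes that at the far end of the path the outputs on (nearly) independent instances have overlap well below the forbidden band (``normalized overlap near $1/2$''). Nothing in Definition~\ref{def:stable-alg} gives this: a stable algorithm could output highly correlated $\bs$'s on independent inputs, in which case a single interpolation path never crosses the band $[\beta-\eta,\beta]$ of Definition~\ref{def:ogp-set} at all. This is exactly the difficulty the Ramsey-theoretic part of the proof exists to handle: one runs $T=\exp_2\bigl(2^{4mQ\log_2 Q}\bigr)$ independent interpolation paths from a common $\M_0$, colors $m$-subsets of replicas by their overlap pattern along the $Q+1$ grid times, and extracts from a monochromatic subset a tuple that provably lands in the band; your proposal omits this mechanism entirely, and with one path and a bare intermediate-value argument the ``all overlaps stay high'' case is not excluded. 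Finally, a smaller but real mismatch: the interpolation must be the rotational one, $\M_i(\tau)=\cos(\tau)\M_0+\sin(\tau)\M_i$ on a grid of $Q+1$ angles (this is the ensemble for which both the stability hypothesis, with $\rho=\cos(\pi/2Q)$, and the $m$-OGP of Theorem~\ref{thm:m-ogp-discrepancy} are stated), not entrywise resampling; with your path neither hypothesis applies as written, and $T$ is a (huge) constant, not polynomially large.
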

The proof of Theorem~\ref{thm:alg-hard-disc-informal} is based on a Ramsey-theoretic argument developed in~\cite{gamarnik2021algorithmic} and also used in~\cite{gamarnik2022algorithms} coupled with the $m$-OGP result, Theorem~\ref{thm:m-ogp-discrepancy}; it rules out stable algorithms succeeding with a constant probability.
\subsection{Background and Prior Work}\label{sec:background}
\paragraph{Statistical-to-Computational Gaps (\texttt{SCG}s)} Both the \texttt{SBP} and discrepancy minimization exhibit an \texttt{SCG}: known efficient algorithms perform strictly worse than the existential guarantee. Such gaps are a universal feature of many \emph{average-case} algorithmic problems arising from random combinatorial structures and high-dimensional statistical inference. A partial list of problems with an \texttt{SCG} include random CSPs~\cite{mezard2005clustering,achlioptas2006solution,achlioptas2008algorithmic,gamarnik2017performance,bresler2021algorithmic}, optimization over random graphs~\cite{gamarnik2014limits,coja2015independent,wein2020optimal}, spin glasses~\cite{gamarnikjagannath2021overlap,huang2021tight}, planted clique~\cite{deshpande2015improved,barak2019nearly}, and tensor decomposition~\cite{wein2022average}, see also the surveys by Gamarnik~\cite{gamarnik2021overlap} and Gamarnik, Moore, and Zdeborov{\'a}~\cite{gamarnik2022disordered}. Unfortunately, standard computational complexity theory is often useless due to the average-case nature of such problems\footnote{Modulo a few exceptions, see e.g.~\cite{ajtai1996generating,boix2021average, GK-SK-AAP}.}. Nevertheless, a very promising line of research proposed various frameworks that provide \emph{rigorous evidence} of hardness. These frameworks include average-case reductions---often from the planted clique~\cite{berthet2013computational,brennan2018reducibility,brennan2019optimal}---as well as unconditional lower bounds against restricted classes of algorithms, including the statistical query algorithms~\cite{diakonikolas2017statistical,feldman2017statistical,feldman2018complexity}, low-degree polynomials~\cite{hopkins2018statistical,kunisky2022notes,wein2022average}, sum-of-squares hierarchy~\cite{hopkins2015tensor,hopkins2017power,raghavendra2018high,barak2019nearly}, AMP~\cite{zdeborova2016statistical,bandeira2018notes}, and Monte Carlo Markov Chain (MCMC) methods~\cite{jerrum1992large,dyer2002counting}. Yet another such approach is based on the intricate geometry of the solution space through the \emph{Overlap Gap Property}.
\paragraph{Intricate Geometry and the Overlap Gap Property (OGP)} Prior work~\cite{mezard2005clustering,achlioptas2006solution,achlioptas2008algorithmic} discovered a very intriguing connection between intricate geometry and algorithmic hardness in the context of random CSPs: the onset of algorithmic hardness roughly coincides with the emergence of an intricate geometry in the solution space. The OGP framework leverages insights from statistical physics to rigorously link the intricate geometry to formal hardness. In the context of random optimization, the OGP informally states that (w.h.p.\,over the randomness) any two near-optima are either `close' or `far': there exists $0<\nu_1<\nu_2<1$ such that $n^{-1}\ip{\bs}{\bs'}\in[0,\nu_1]\cup[\nu_2,1]$ for any pair of near-optima $\bs,\bs'\in\Sigma_n$. Namely, the region of normalized overlaps is \emph{topologically disconnected}; no pairs of near-optima at \emph{intermediate} distances can be found. The OGP is a rigorous barrier for large classes of algorithms exhibiting input stability---see below. See~\cite{gamarnik2021overlap} for a survey on OGP. 

\paragraph{Algorithmic Implications of OGP} The first work establishing and leveraging OGP to rule out algorithms is due to Gamarnik and Sudan~\cite{gamarnik2014limits,gamarnik2017}. Their focus is on the problem finding a large independent set in sparse random graphs on $n$ vertices with average degree $d$, which exhibits an $\texttt{SCG}$: the largest  such independent set is of size $2\frac{\log d}{d}n$~\cite{frieze1992independence,bayati2010combinatorial}, whereas the best known efficient algorithm finds an independent set of size $\frac{\log d}{d}n$. They establish that any pair of independent sets of size larger than $(1+1/\sqrt{2})\frac{\log d}{d}n$ exhibit the OGP. By leveraging the OGP, they then show that \emph{local algorithms} fail to find an independent set of size larger than  $(1+1/\sqrt{2})\frac{\log d}{d}n$. Subsequent research established and leveraged the OGP to rule out other classes of algorithms (e.g., AMP~\cite{gamarnikjagannath2021overlap}, low-degree polynomials~\cite{gamarnik2020low,wein2020optimal,bresler2021algorithmic}, Langevin dynamics~\cite{gamarnik2020low,huang2021tight}, low-depth circuits~\cite{gamarnik2021circuit}) for various other models (e.g., random graphs~\cite{rahman2017local,gamarnik2020low,wein2020optimal}, spin glass models~\cite{chen2019suboptimality,gamarnikjagannath2021overlap,huang2021tight}, random CSPs~\cite{gamarnik2017performance,bresler2021algorithmic,gamarnik2022algorithms}). A very important feature found across the algorithms ruled out by the OGP and other versions of intricate geometry is input stability, similar to Definition~\ref{def:stable-alg} (apart from the failure of MCMC in \emph{planted} models, see e.g.~\cite{jerrum1992large,arous2020free,gamarnik2021overlapsubmatrix}).  Our work marks the first instance of intricate geometry yielding tight  algorithmic hardness against classes beyond stable algorithms.

\paragraph{Multi OGP ($m$-OGP)} The prior work~\cite{gamarnik2014limits,gamarnik2017} discussed above establish the failure of local algorithms at value $(1+1/\sqrt{2})\frac{\log d}{d}n$. By considering a certain overlap pattern involving many large independent sets, Rahman and Vir{\'a}g~\cite{rahman2017local} subsequently removed the additional $1/\sqrt{2}$ term; they showed that the onset of  OGP precisely coincides with the algorithmic $\frac{\log d}{d}n$ value.
That is, one can potentially lower the onset of the OGP and rule out algorithms for a broader range of parameters through more intricate overlap patterns. In a similar vein, Gamarnik and Sudan~\cite{gamarnik2017performance} studied the Not-All-Equal $k$-SAT problem and showed the presence of the OGP for the $m$-tuples of nearly equidistant satisfying assignments. Consequently, they obtained nearly tight hardness guarantees against sequential local algorithms. A similar $m$-OGP was also employed in~\cite{gamarnik2021algorithmic,gamarnik2022algorithms}, and is also our focus here. 

Recently, $m$-OGP for more intricate patterns were proposed. These forbidden patterns regard $m$-tuples of solutions where for any $2\le i\le m$, the $i^{\rm th}$ solution has `intermediate' overlap with the first $i-1$ solutions. By doing so, tight hardness guarantees against low-degree polynomials were obtained for finding independent sets in sparse random graphs by Wein~\cite{wein2020optimal} and for the random $k$-SAT by Bresler and Huang~\cite{bresler2021algorithmic}. Similarly, Huang and Sellke~\cite{huang2021tight} construct a very intricate forbidden structure consisting of an ultrametric tree of solutions dubbed as the \emph{Branching OGP}. By leveraging the branching OGP, they obtain tight hardness guarantees against Lipschitz algorithms for the $p$-spin model. Moreover, these papers establish the \emph{Ensemble $m$-OGP} which regards $m$-tuples that are near-optimal w.r.t.\,correlated instances. The Ensemble OGP emerged in~\cite{chen2019suboptimality}; it has been instrumental in ruling out stable algorithms since. The investigation of $m$-tuples of solutions w.r.t.\,correlated instances is at the core of our paper. Furthermore, we inspect $m$-tuples with super-constant $m$, $m=\omega(1)$, to rule out online algorithms in Theorems~\ref{thm:online-disc}-\ref{thm:online-disc-bern}. This idea originated in~\cite{gamarnik2021algorithmic} for further lowering the $m$-OGP threshold. 
\subsection{Open Problems}\label{sec:open-prob}
\paragraph{Geometrical Barriers for other Classes of Algorithms} Prior work on OGP showed that intricate geometry is a signature of algorithmic hardness, and gave lower bounds against stable algorithms. Theorems~\ref{thm:online-sbp},~\ref{thm:online-disc} and~\ref{thm:online-disc-bern} extend this beyond stable algorithms; they leverage intricate geometry to rule out online algorithms. It would be very interesting to rule out other classes of algorithms via  similar geometrical barriers.
\paragraph{Discrepancy Minimization beyond Gaussian Disorder} Theorem~\ref{thm:m-ogp-discrepancy} shows that, for $\M\in\R^{M\times n}$ with i.i.d. $\cN(0,1)$ entries and $n=\Theta(M\log M)$, the set of constant discrepancy solutions exhibits the $m$-OGP. A very interesting question is whether $m$-OGP still holds when the entries of $\M$ are binary. Prior work established OGP both for models with discrete disorder (e.g., random $k$-SAT~\cite{gamarnik2017performance,bresler2021algorithmic}, random graphs~\cite{gamarnik2014limits,gamarnik2017,gamarnik2020low,wein2020optimal}) as well as for models with continuous disorder (e.g., spin glass models~\cite{gamarnikjagannath2021overlap,huang2021tight}, number partitioning~\cite{gamarnik2021algorithmic}, the \texttt{SBP}~\cite{gamarnik2022algorithms}). These results suggest that OGP exhibits universality: the distributional details of the disorder are immaterial\footnote{In fact, such a universality result has already been established for the \texttt{SBP}, see~\cite[Theorem~5.2]{gamarnik2022algorithms}.}. In light of these, we make the following conjecture:
\begin{conjecture}\label{conj:ogp-universality}
For $\M\in\R^{M\times n}$ with i.i.d.\,Rademacher or Bernoulli$(p)$ entries and $n=\Theta(M\log M)$, set of constant discrepancy solutions exhibits $m$-OGP with suitable parameters.
\end{conjecture}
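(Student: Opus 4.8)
} We record the route we would follow. The plan is to replay, essentially verbatim at the level of exponents, the first-moment computation behind Theorem~\ref{thm:m-ogp-discrepancy}, replacing the Gaussian density estimates by \emph{lattice local limit theorem} estimates. Concretely, fix the same forbidden overlap pattern used in the Gaussian proof, let $Z$ count the $m$-tuples $(\bs_1,\dots,\bs_m)\in\Sigma_n^m$ whose pairwise overlaps $n^{-1}\ip{\bs_i}{\bs_j}$ lie in the prescribed region and which simultaneously satisfy $\|\M\bs_\ell\|_\infty\le c$ for $1\le\ell\le m$, and aim to show $\E{Z}=o(1)$ for the same choice of constant $m$, window half-width $c$, and aspect ratio $n=CM\log M$ (with the constant $C$ possibly enlarged relative to the Gaussian case). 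Since the rows $X_1,\dots,X_M$ of $\M$ are i.i.d., the probability that a fixed tuple is ``bad'' factorizes over the $M$ rows, so everything reduces to estimating, for a single row $X$ with Rademacher (resp.\ recentered Bernoulli$(p)$) entries,
\[
\pr\Bigl[\,\bigl|\ip{\bs_\ell}{X}\bigr|\le c,\ 1\le\ell\le m\,\Bigr],
\]
and comparing it, up to a $(1+o(1))$ multiplicative factor, to the multivariate Gaussian small-ball probability that already appears in the proof of Theorem~\ref{thm:m-ogp-discrepancy}.

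The key technical input is a \emph{multivariate local central limit theorem} with a uniform error term. Writing $W=(\ip{\bs_1}{X},\dots,\ip{\bs_m}{X})\in\Z^m$, the vector $W$ is a sum of $n$ i.i.d.\ bounded lattice vectors whose covariance matrix is $nQ$, where $Q_{ij}=n^{-1}\ip{\bs_i}{\bs_j}$ is exactly the overlap (Gram) matrix of $\bs_1,\dots,\bs_m$. A classical lattice local limit theorem for such sums, with effective Esseen-type error bounds, then gives
\[
\pr\bigl[W=w\bigr]=\frac{1+o(1)}{(2\pi n)^{m/2}\sqrt{\det Q}}\exp\!\Bigl(-\tfrac{1}{2n}\,w^{\top}Q^{-1}w\Bigr),
\]
uniformly over $w$ in a box of constant size (and of the correct parity class), provided $Q$ stays uniformly non-degenerate. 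Summing over the $O(1)$ many admissible lattice points $w\in[-c,c]^m$ recovers, up to $(1+o(1))$, the Gaussian integral used in the Gaussian proof, so the remaining union bound over the forbidden overlap region is literally the sum already controlled in Theorem~\ref{thm:m-ogp-discrepancy}, and the threshold $n=\Theta(M\log M)$ carries over. For the Bernoulli$(p)$ case one first recenters; since $\ip{\bs}{X}$ with $X\in\{0,1\}^n$ is a difference of two independent binomials, $W$ is still $\Z^m$-valued, and the same lattice local CLT applies after tracking the parity/shift bookkeeping, with the covariance scaled by $p-p^2$ (this is where the factor $c_p=\sqrt{p-p^2}/24$ in Theorem~\ref{thm:online-disc-bern} would reappear). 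An alternative route would be a Lindeberg-type interpolation transferring the Gaussian $m$-OGP to the discrete model, in the spirit of the \texttt{SBP} universality result~\cite[Theorem~5.2]{gamarnik2022algorithms}; but because the constant-discrepancy window has width $O(1)$ rather than $\Theta(\sqrt n)$, the indicator of $\|\M\bs\|_\infty\le c$ lives at the microscopic lattice scale and cannot be smoothed cheaply, so we expect the local-CLT route to be the more robust one.

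The main obstacle is the \emph{uniformity} of the local CLT error over the exponentially large family of overlap patterns that the union bound ranges over, together with the near-degenerate configurations. When two of the $\bs_\ell$ are nearly equal or nearly antipodal---or, more generally, whenever $Q$ becomes ill-conditioned---the local CLT error degrades and the factorized Gaussian comparison breaks down; such tuples must be excised (they form a lower-order family and, by the design of the forbidden region, should contribute negligibly to $\E{Z}$, exactly as the corresponding Gaussian integral degenerates gracefully) or handled by a crude direct bound. One also needs the local CLT uniformly as the target value $w$ ranges over a slowly growing window, and, if one wants to push the threshold down as in Theorems~\ref{thm:online-disc}--\ref{thm:online-disc-bern}, for $m=\omega(1)$ mildly growing; keeping the $o(1)$ errors negligible against the exponential-order main terms is the crux of the argument. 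Once the single-row estimate is established with the right constants, combining it with the Ramsey-theoretic argument behind Theorem~\ref{thm:stable-hard} would also yield the discrete analogue of Theorem~\ref{thm:alg-hard-disc-informal}: the failure of stable algorithms to find a constant-discrepancy solution when $n=\Theta(M\log M)$ and the entries of $\M$ are Rademacher or Bernoulli$(p)$.
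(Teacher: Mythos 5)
The statement you are addressing is a \emph{conjecture} in the paper, not a theorem: the authors explicitly leave it open, remarking only that its resolution ``may require understanding a probability term of the form $\mathbb{P}[M\boldsymbol{v}=\boldsymbol{x}]$'' for binary $\boldsymbol{v}$ and rows with prescribed pairwise overlaps, and that ``one direction is to employ local limit arguments.'' Your proposal follows exactly that suggested direction --- replace the Gaussian small-ball estimate in the proof of Theorem~\ref{thm:m-ogp-discrepancy} by a multivariate lattice local limit theorem and rerun the first-moment computation --- so in terms of strategy you are aligned with the paper. But what you have written is a plan, not a proof, and the step you defer is precisely the step that is open. The first-moment bookkeeping (counting term, tolerance to constant-factor losses since they contribute only $e^{O(mM)}=2^{o(M\log M)}$, choice of $\beta,\eta,m$) does carry over verbatim; what does not yet exist is the uniform single-row estimate
\[
\mathbb{P}\Bigl[\bigl|\ip{\bs_\ell}{X}\bigr|\le c,\ 1\le \ell\le m\Bigr]\ \le\ \frac{C^m}{n^{m/2}\sqrt{\det Q}}
\]
valid simultaneously for every tuple appearing in the union bound, including near-degenerate overlap matrices $Q$. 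You name this obstacle yourself, but naming it does not discharge it, and the excision of ill-conditioned tuples ``by a crude direct bound'' is exactly where a new argument is required: in the Gaussian case the Sid\'ak and Hoffman--Wielandt steps give this uniformly, while no off-the-shelf multidimensional Hal\'asz/Esseen-type inequality with the correct $\det Q$ dependence is cited or proved here.

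Two further points to sharpen your plan. First, you ask for more than you need: a $(1+o(1))$ local CLT at each lattice point is a two-sided statement, whereas the first moment method only needs an \emph{upper} bound on the joint small-ball probability; aiming for an upper-bound-only anti-concentration inequality is both weaker and more robust (parity obstructions that kill individual lattice points only help you). Second, the lattice bookkeeping is subtler than ``$W\in\Z^m$'': the support of $W=(\ip{\bs_1}{X},\dots,\ip{\bs_m}{X})$ is a proper sublattice (or coset) of $\Z^m$ determined by the mutual parities $d_H(\bs_i,\bs_j)$, so the local-CLT prefactor and the count of admissible points in $[-c,c]^m$ both depend on the sign pattern, and this dependence must be controlled uniformly over the exponentially many patterns in the union bound. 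Until that joint anti-concentration estimate is established, the conjecture remains open, as the paper states.
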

Resolving Conjecture~\ref{conj:ogp-universality} may require understanding a probability term of the form $\mathbb{P}[M\boldsymbol{v}=\boldsymbol{x}]$ for a random $\boldsymbol{v}$ with i.i.d.\,binary entries and a deterministic $M\in\{-1,1\}^{m\times n}$ whose rows $\bs_1,\dots,\bs_m$ satisfy $n^{-1}\ip{\bs_i}{\bs_j}=\beta$ for some fixed $\beta$ and every $1\le i<j\le m$. 
One direction is to employ local limit arguments; we leave this as an open problem for future investigation.
\paragraph{Discrepancy Minimization beyond $n=\Theta(M\log M)$} Recall that constant discrepancy solutions exist as soon as $n=\Theta(M\log M)$, i.e.\,when $M=O(n/\log n)$, while the best known polynomial-time algorithm works only when $M=o(\log n)$. In light of these, Theorem~\ref{thm:m-ogp-discrepancy} provides rigorous evidence of hardness, yet only at the `boundary'. The regime $\log n\ll M\ll n/\log n$ is an interesting direction left for future work. A potential avenue would be to consider a more intricate overlap pattern, such as those in~\cite{wein2020optimal,huang2021tight} or the \emph{branching OGP}~\cite{huang2021tight}. To this end, we discover an intriguing phase transition (proof omitted):
\begin{theorem}\label{thm:phase-tran}
    Let $\M\in\R^{M\times n}$ with i.i.d. $\cN(0,1)$ entries. Fix a $K>0$ and let $S(m,\delta,K)$ be the set of $(\bs_1,\dots,\bs_m)$ such that $d_H(\bs_i,\bs_j) = \delta$ and $\max_{i\le m}\|\M\bs_i\|_\infty \le K$.
    \begin{itemize}
        \item[(a)] If $M=\omega(\log n)$ then $\mathbb{E}\bigl[\bigl|S\bigl(n\log^{-O(1)} n,\log^{O(1)} n,K\bigr)\bigr|\bigr]=o(1)$.
        \item[(b)] If $M=o(\log n)$, then $\mathbb{E}\bigl[\bigl|S\bigl(n\log^{-O(1)} n,\log^{O(1)} n,K\bigr)\bigr|\bigr]=\omega(1)$.
    \end{itemize}
\end{theorem}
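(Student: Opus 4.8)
The plan is a first-moment computation for the expected size of $S(m,\delta,K)$, tracking carefully how the estimate depends on the joint scaling of $m$, $\delta$, and $M$, and then observing that $M=\Theta(\log n)$ is exactly where the dominant exponential factor flips sign. First I would fix an overlap structure: restrict attention to $m$-tuples $(\bs_1,\dots,\bs_m)$ that are pairwise at Hamming distance $\delta$ and, for the computation, further require them to agree on the first $n-\Delta n$ coordinates and to be "in general position'' on the remaining $\Delta n=O(\delta)$ coordinates (this is the standard device used for the $m$-OGP first moment, e.g.\ in \cite{gamarnik2022algorithms,gamarnik2021algorithmic}), so that the $\cN(0,1)$ disorder acting on each $\bs_i$ gives an $M$-dimensional Gaussian vector $\M\bs_i$ whose coordinates, jointly across $i=1,\dots,m$, form an $mM$-dimensional Gaussian with an explicit covariance governed by the common overlap $\rho=1-2\delta/n$ among the "free'' parts. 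The event $\max_{i\le m}\|\M\bs_i\|_\infty\le K$ then has probability $\approx p^{mM}$ where, heuristically, $p$ is the probability that a single real Gaussian of variance $\Theta(n)$ lands in an interval of width $2K$, i.e.\ $p=\Theta(K/\sqrt{n})$, but one must be careful because across the $m$ coordinates (for a fixed row) the Gaussians are highly positively correlated when $\delta\ll n$, so the true exponent is not $mM\log(1/p)$ but something like $M\log(1/p)+(m-1)M\cdot(\text{correction})$; getting this correction right is the technical heart of the estimate.

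Next I would assemble $\mathbb{E}[|S(m,\delta,K)|]$ as (number of valid tuples) $\times$ (Gaussian probability). The entropy term is $\binom{n}{\delta}^{m-1}\cdot 2^{n}$-ish up to polynomial corrections — more precisely, choosing the common prefix costs $2^{n-\Delta n}$ and then laying down $m$ pairwise-$\delta$-separated completions on the last $\Delta n$ coordinates costs roughly $2^{\Delta n}\cdot(\text{multinomial})$ — so the log of the entropy term is $\Theta(n)$ with the constant depending on $\delta/n$ and $m$. The log of the probability term is $-\Theta(mM\log n)$ up to the correlation correction above. Writing $m=n\log^{-O(1)}n$ and $\delta=\log^{O(1)}n$ (so $\Delta n=\Theta(\delta)=\log^{O(1)}n$), the entropy contribution is at most $\Theta(n)$ while the probability contribution is $-\Theta(mM\log n)=-\Theta(nM\log n/\log^{O(1)}n)=-\Theta(nM/\log^{O(1)}n)$ — hold on, this needs the exponents reconciled. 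The cleaner way: after the correlation correction, the log-probability per tuple is $-\Theta(mM)$ times a $\log$-type factor, and the point is that when $M=\omega(\log n)$ this dominates the $\Theta(n)$ entropy (since $m\asymp n/\mathrm{polylog}$ makes $mM=\omega(n)$ after the $\log n$ from the interval probability), giving $o(1)$ and part~(a); when $M=o(\log n)$ the sign flips and one gets $\omega(1)$, which is part~(b). So the strategy is to isolate the competition between an $\exp(\Theta(n))$ entropy factor and an $\exp(-\Theta(mM\log(\sqrt n/K)))$ probability factor, substitute $m=n/\mathrm{polylog}(n)$, and check that the crossover occurs at $M\asymp\log n$.

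The main obstacle, as flagged above, is the covariance analysis of the $mM$-dimensional Gaussian $(\M\bs_i)_{i\le m}$ and extracting a sharp exponent for $\mathbb{P}[\max_i\|\M\bs_i\|_\infty\le K]$. Because the $\bs_i$ pairwise overlap is $1-2\delta/n$ with $\delta=\mathrm{polylog}(n)$, the vectors $\M\bs_i$ are nearly identical, so a naive bound $p^{mM}$ is far too small and would wrongly push the transition to $M\asymp n$; one instead has to decompose $\bs_i=\bs_{\mathrm{common}}+\text{(small perturbation)}$, condition on $\M\bs_{\mathrm{common}}$, and bound the conditional probability that $m$ independent small Gaussian perturbations of an $O(K)$-constrained point stay within the band — this is where the true exponent $\Theta(mM\cdot\text{something})$ with the right "something'' comes from. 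For part~(b) the matching lower bound on the expectation is comparatively routine once the upper-bound covariance computation is in hand: use the same Gaussian estimate in the reverse direction (Gaussian small-ball lower bounds, e.g.\ via Anderson's inequality or a direct product lower bound after conditioning) together with a lower bound on the number of valid $\delta$-separated tuples. I would handle (a) first since its bound drives the phase-transition location, then reuse the machinery for (b). Since the theorem statement explicitly says "proof omitted,'' I would present this as the sketch and defer the polylog bookkeeping.
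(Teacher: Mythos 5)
The paper gives no proof of Theorem~\ref{thm:phase-tran} (it is stated with ``proof omitted''), so there is nothing to compare your argument against line by line; judged on its own terms, your sketch has the right skeleton --- a first-moment computation, count of pairwise-distance-$\delta$ tuples times the probability that the $m$ strongly correlated Gaussian vectors $\M\bs_i$ all land in the band --- but it leaves unresolved exactly the exponent computation on which the claimed location of the transition rests, and the heuristic numbers you substitute in its place do not yet produce a crossover at $M\asymp\log n$. Concretely: per row $R$ of $\M$, writing $\ip{R}{\bs_i}=\ip{R}{\bs_1}+\ip{R}{\bs_i-\bs_1}$ with $\|\bs_i-\bs_1\|_2^2=4\delta$, each additional solution costs about $\tfrac12\log\delta+O(1)$ per row (equivalently, via the determinant of the equicorrelated covariance, whose small eigenvalue $2\delta$ has multiplicity $m-1$, as in the mechanics of Lemma~\ref{lemma:prob-term}), not $\tfrac12\log n$; meanwhile each additional solution contributes entropy $\Theta\bigl(\delta\log(n/\delta)\bigr)$. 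Your sketch oscillates between the naive $-\Theta(mM\log n)$ and an unspecified ``correction,'' and your aside that the naive $p^{mM}$ bound would ``wrongly push the transition to $M\asymp n$'' is backwards: underestimating the probability pushes the apparent transition to \emph{smaller} $M$, and with $m=n/\mathrm{polylog}(n)$ it is precisely the naive computation that places the crossover near $\mathrm{polylog}(n)$. Once the corrected probability and the perturbation entropy are both inserted, the balance is $n\log 2+\Theta(m\delta\log n)$ against $\tfrac12 mM\log\delta$, which for $m=n\log^{-a}n$, $\delta=\log^{b}n$ puts the crossover at $M\asymp(\log^{a}n+\log^{b+1}n)/\log\log n$; this sits at $\log n$ only for rather special choices of the hidden exponents (essentially bounded $\delta$ and $a=1$), so the ``polylog bookkeeping'' you defer is not routine --- it is where the content of the theorem lives, and a complete proof must pin down the intended exponents and carry the corrected estimate through both parts (a) and (b).

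Two further technical wobbles, both fixable but worth flagging. First, for part (a) you cannot restrict attention to tuples agreeing on a fixed prefix of $n-\Delta n$ coordinates: $S(m,\delta,K)$ ranges over all tuples with the prescribed pairwise distances, so the count must be over Hamming balls of radius $\delta$ around $\bs_1$ (your $2^{n}\binom{n}{\delta}^{m-1}$ bound already does this, so the prefix device is unnecessary rather than harmful). Second, the perturbations $\ip{R}{\bs_i-\bs_1}$, $2\le i\le m$, are \emph{not} conditionally independent for an equidistant tuple --- they are equicorrelated with correlation $1/2$ --- so the product bound requires either the decomposition into a common Gaussian plus independent parts (conditioning on the common part), or a volume-times-maximum-density/determinant argument; and for the lower bound in part (b), Anderson's inequality is not the right tool (it compares shifted with centered convex bodies), whereas Theorem~\ref{thm:sidak}-type product lower bounds, or a sunflower construction of the flip sets followed by conditioning, do suffice for suitable exponents.
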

Namely, the value $M=\log n$ is the threshold at which the (expected) number of $m$-tuples of constant discrepancy solutions at distance $\log^{O(1)}n$ with $m=\widetilde{\Theta}(n)$ undergoes a phase transition. Whether this phase transition at value $\log n$ is coincidental or a signature of algorithmic hardness is an open problem left for future work. 
\paragraph{Paper Organization and Notation} The rest of the paper is organized as follows. In Section~\ref{sec:online-hard}, we formalize the class of online algorithms and state our hardness results. Section~\ref{sec:const-disc} is devoted to the algorithmic problem of finding constant discrepancy solutions. See Section~\ref{sec:prelim} for preliminaries and the definition of set of $m$-tuples regarding OGP, Section~\ref{sec:ogp} for the main $m$-OGP result and Section~\ref{sec:stable-hard} for the hardness result against stable algorithms. Finally, see Section~\ref{sec:pfs} for complete proofs. Our notation is fairly standard, see the beginning of Section~\ref{sec:pfs} for details.
\section{Tight Hardness Guarantees for Online Algorithms}\label{sec:online-hard} 
In this section, we explore the limits of \emph{online algorithms} in the context of 
our two-models: the \texttt{SBP} and  average-case discrepancy. We begin by formalizing the class of online algorithms in the context of these two models.
\begin{definition}\label{def:online-algs}
   \phantom{//} 
  \begin{itemize}
        \item {\bf (\texttt{SBP})} Fix a $\kappa>0$ and an $\alpha<\alpha_c(\kappa)$. Let $\M\in\R^{\alpha n\times n}$ with i.i.d.\,$\cN(0,1)$ entries.
        \item {\bf (Discrepancy)} Fix $n\ge M$, let $\M\in\R^{M\times n}$ has i.i.d.\,Rademacher or Bernoulli$(p)$ entries.
    \end{itemize}
    Fix a $p_s>0$ and a $K>0$. An algorithm $\A$ is  $(p_s,\alpha)$-online for the \texttt{SBP} or $(p_s,K)$ online for discrepancy if it satisfies the following. Let $\A(\M)\triangleq \bs_{\rm ALG}=(\bs_{\rm ALG}(i):1\le i\le n)\in\Sigma_n$. 
    \begin{itemize}
       \item {\bf (Success)} We have 
        \begin{align*}
            &\mathbb{P}\bigl[\bigl\|\M\bs_{\rm ALG}\bigr\|_\infty \le \kappa\sqrt{n}\bigr] \ge p_s \quad \text{(for the \texttt{SBP})} \\
            &\mathbb{P}\bigl[\bigl\|\M\bs_{\rm ALG}\bigr\|_\infty \le K\bigr] \ge p_s \quad \text{(for discrepancy)}.
        \end{align*}
        \item {\bf (Online)} Let $\mathcal{C}_1,\dots,\mathcal{C}_n$ be the columns of $\M$. There exists deterministic functions $f_1,\dots,f_n$ such that for $1\le t\le n$, $\bs_{\rm ALG}(t) = f_t\bigl(\mathcal{C}_1,\dots,\mathcal{C}_t\bigr)\in\{-1,1\}$.
    \end{itemize}
\end{definition}
The parameter $p_s$ is the success guarantee of the algorithm, where the probability is taken w.r.t.\,the randomness in $\M$. The online nature of the algorithm admits the following interpretation. Columns $\mathcal{C}_i$ arrive at a time. At the end of round $t-1$, the signs $\bs(i)\in\{-1,1\}$, $1\le i\le t-1$ are assigned, and a new column $\mathcal{C}_t$ arrives. The sign $\bs(t)$ then depends only on the previous decisions $\bs(i)$, $1\le i\le t-1$ and $\mathcal{C}_t$. That is, $\bs(t)$ depends only on $\mathcal{C}_i$, $1\le i\le t$. This abstraction captures, in particular, the Bansal-Spencer algorithm:
\begin{theorem}{\cite[Theorem~3.4]{bansalspenceronline}}
\label{thm:bansal-spencer}
    Let $n\ge M$ and $\M\in\{-1,1\}^{M\times n}$ has i.i.d.\,Rademacher entries. Then, there exists absolute constants $C>0$ and $\gamma<1$, and an online algorithm $\A$ admitting $\M$ as its input and returning a $\bs\triangleq \A(\M)$ such that
    \[    \mathbb{P}\bigl[\bigl\|\M\bs\bigr\|_\infty\le C\sqrt{M}\bigr]\le 1-e^{-\Theta(M^\gamma)}.
    \]
\end{theorem}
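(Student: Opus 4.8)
The plan is to analyze the natural potential‑based online algorithm and show it already meets the stated guarantee. Regard the columns $\mathcal C_1,\dots,\mathcal C_n\in\{-1,1\}^M$ of $\M$ as arriving one at a time, and let $w_t=\sum_{s\le t}\sigma_s\mathcal C_s\in\Z^M$ be the partial discrepancy vector after round $t$, so that $w_0=0$ and the algorithm outputs $\bs=(\sigma_1,\dots,\sigma_n)$ with $\M\bs=w_n$. Write $\mathcal F_t=\sigma(\mathcal C_1,\dots,\mathcal C_t)$; the structural fact driving everything is that $w_t$ is $\mathcal F_t$‑measurable while $\mathcal C_{t+1}$ is a fresh Rademacher vector independent of $\mathcal F_t$. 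Fix a parameter $\lambda\asymp M^{-1/2}$, put $\mathbf{s}_t=(\sinh(\lambda w_t(i)))_{i\le M}$, and use the greedy rule: upon seeing $\mathcal C_{t+1}$, set $\sigma_{t+1}=-\sign\langle\mathcal C_{t+1},\mathbf{s}_t\rangle$ (ties broken arbitrarily). Since $\sigma_{t+1}$ depends only on $\mathcal C_1,\dots,\mathcal C_{t+1}$, this is a legitimate online algorithm.

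The rule is designed to control the exponential potential $\Phi_t=\sum_{i=1}^{M}\cosh(\lambda w_t(i))$, with $\Phi_0=M$. Expanding $\cosh(a+b)=\cosh a\cosh b+\sinh a\sinh b$ and using $\mathcal C_{t+1}(i)\in\{-1,1\}$ gives the exact recursion
\[
\Phi_{t+1}=\cosh\lambda\cdot\Phi_t-\sinh\lambda\cdot\bigl|\langle\mathcal C_{t+1},\mathbf{s}_t\rangle\bigr| .
\]
Conditioning on $\mathcal F_t$, the quantity $\langle\mathcal C_{t+1},\mathbf{s}_t\rangle$ is a Rademacher sum with fixed coefficients, so Khintchine's inequality gives $\E{\,|\langle\mathcal C_{t+1},\mathbf{s}_t\rangle|\mid\mathcal F_t\,}\ge c_0\|\mathbf{s}_t\|_2$, while Cauchy--Schwarz together with $\cosh^2=1+\sinh^2$ gives $\|\mathbf{s}_t\|_2^2\ge \Phi_t^2/M-M$. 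Hence, whenever $\Phi_t\ge 2M$,
\[
\E{\Phi_{t+1}\mid\mathcal F_t}\le\Bigl(\cosh\lambda-\tfrac{c_0\sinh\lambda}{\sqrt M}\Bigr)\Phi_t\le\Phi_t ,
\]
provided $\lambda\le c_1/\sqrt M$ with $c_1$ small enough that $\cosh\lambda-1\le c_0\sinh\lambda/\sqrt M$; the same choice in fact yields a genuine negative drift $\E{\Phi_{t+1}\mid\mathcal F_t}-\Phi_t\le-\Theta(1)$ whenever $\Phi_t\asymp M$. Thus the potential equilibrates at scale $\Theta(M)$ regardless of how long the process runs.

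To convert the drift into a tail bound one controls the martingale part: $\Phi_{t+1}-\E{\Phi_{t+1}\mid\mathcal F_t}$ equals $\sinh\lambda$ times a centered bounded functional of a Rademacher sum, hence is sub‑Gaussian with parameter $\lesssim\sinh\lambda\cdot\|\mathbf{s}_t\|_2\lesssim\sinh\lambda\cdot\Phi_t$. A Lyapunov/Freedman‑type estimate (geometric drift together with these light, state‑dependent increments) then yields $\Phi_n\le M^{O(1)}$ — in particular $\Phi_n\le C'M\log M$ — with probability at least $1-e^{-\Theta(M^\gamma)}$ for a suitable absolute $\gamma<1$; crucially one only needs to control $\Phi$ at the single terminal time $n$, so no $\log n$ factor is lost. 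On this event $\cosh(\lambda w_n(i))\le\Phi_n$ for every $i$, so $|w_n(i)|\le\lambda^{-1}\operatorname{arccosh}(\Phi_n)=O(\sqrt M\log M)$. To remove this spurious $\log M$ and reach the sharp $O(\sqrt M)$, one runs a second, coordinate‑wise analysis: conditioned on $\{\Phi_t\le M^{O(1)}\ \forall t\}$ one checks that, for each fixed $i$, the process $(w_t(i))_t$ has $\pm1$ increments carrying a restoring drift of order $|\sinh(\lambda w_t(i))|/\|\mathbf{s}_t\|_2$ toward $0$ — the greedy rule biases $\sigma_{t+1}$ against $\sign(\mathcal C_{t+1}(i)\,w_t(i))$ whenever $|w_t(i)|$ is large — and a one‑dimensional confined‑walk estimate upgrades the bound to $|w_n(i)|=O(\sqrt M)$ with per‑coordinate failure probability $e^{-\Theta(M^\gamma)}$; a union bound over the $M\le n$ coordinates preserves this. (An alternative is to bootstrap the potential argument through a geometric sequence of values of $\lambda$, halving at each stage the number of coordinates that can be ``large''.)

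The conceptual core — the potential, the exact recursion, and the drift — is routine; the two delicate points are (i) the concentration step, since the martingale increments are not bounded but only sub‑Gaussian with a state‑dependent parameter, so one must interleave a drift/Lyapunov estimate with a Freedman‑type bound to obtain the stated $1-e^{-\Theta(M^\gamma)}$ probability \emph{without} union‑bounding over the $n$ rounds, and (ii) shaving the extra $\log M$ to get the optimal $O(\sqrt M)$, which requires the finer coordinate‑wise (or bootstrapped) argument. I expect (i) — getting the tail probability right in the presence of the heavy, scale‑dependent increments — to be the main obstacle.
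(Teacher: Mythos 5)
You should first note that the paper does not prove this statement at all: it is imported verbatim as \cite[Theorem~3.4]{bansalspenceronline}, so there is no internal proof to compare against, and what you are really attempting is a reconstruction of the Bansal--Spencer analysis. Your overall strategy (the $\cosh$ potential $\Phi_t=\sum_i\cosh(\lambda w_t(i))$ with $\lambda\asymp M^{-1/2}$, the greedy sign choice, the exact recursion $\Phi_{t+1}=\cosh\lambda\,\Phi_t-\sinh\lambda\,|\langle\mathcal C_{t+1},\mathbf s_t\rangle|$, Khintchine plus Cauchy--Schwarz to get negative drift once $\Phi_t\gtrsim M$) is indeed the right template and these computations are correct. (Minor point: the displayed inequality in the statement has a typo, $\le$ should be $\ge$; you interpreted the intended meaning correctly.)

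However, as written your argument does not establish the theorem; the two places where you defer are exactly where the content lies. First, the concentration step is asserted rather than proved: you invoke a ``Lyapunov/Freedman-type estimate'' to conclude $\Phi_n\le M^{O(1)}$ with probability $1-e^{-\Theta(M^\gamma)}$, but the increments have a state-dependent scale ($\lesssim\lambda\|\mathbf s_t\|_2$, which can be as large as $\lambda\Phi_t$ when the potential mass concentrates on few coordinates), and no choice of Lyapunov function, no control of the exponential moment of $\Phi_t$, and no identification of the exponent $\gamma$ is given. Second, and more seriously, the potential bound alone with $\lambda\asymp M^{-1/2}$ can only ever yield $\|w_n\|_\infty=O(\sqrt M\log M)$, since $\Phi_t\ge M$ forces $\lambda^{-1}\operatorname{arccosh}(\Phi_n)=\Omega(\sqrt M\log M)$ whenever $\Phi_n$ is polynomial in $M$; your proposed fix --- a per-coordinate confined-walk estimate \emph{conditioned} on the event $\{\Phi_t\le M^{O(1)}\ \forall t\}$, or a bootstrapping over a geometric sequence of $\lambda$'s --- is only gestured at. Conditioning on a trajectory-level event destroys the independence/martingale structure that your per-coordinate restoring-drift estimate implicitly uses (the law of $\mathcal C_{t+1}$ given both $\mathcal F_t$ and the future potential event is no longer uniform), so this step requires genuine additional work; it is precisely the delicate part of the cited Bansal--Spencer proof. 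In its current form the proposal is a plausible roadmap toward the known argument, not a proof of the stated $O(\sqrt M)$ bound with failure probability $e^{-\Theta(M^\gamma)}$.
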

Theorem~\ref{thm:bansal-spencer} immediately yields an efficient algorithm for the \texttt{SBP}  when $\alpha\le \kappa^2/C^2$, see~\cite[Corollary~4.6]{gamarnik2022algorithms}. As mentioned in the introduction, the Bansal-Spencer algorithm is the best known polynomial-time algorithm both for the \texttt{SBP} and for the discrepancy minimization in random proportional regime. In the sense of Definition~\ref{def:online-algs}, it is a $(1-e^{-\Theta(n^\gamma)},\kappa^2/C^2)$-online algorithm for the \texttt{SBP} and a $(1-e^{-\Theta(M^\gamma)},C\sqrt{M})$-online algorithm for the discrepancy. 
\paragraph{Online Algorithms for the \texttt{SBP}} Our first main result focuses on the \texttt{SBP} in the regime $\kappa\to 0$ and establishes the following hardness for the class of online algorithms. 
\begin{theorem}\label{thm:online-sbp}
Fix any small enough $\kappa>0$ and any $\alpha\ge 4\kappa^2$. Then there exists an $n_0\in\mathbb{N}$ and an absolute constant $c>0$ such that the following holds. For any $n\ge n_0$, there exists no $(e^{-cn},\alpha)$-online algorithm for the \texttt{SBP} in the sense of Definition~\ref{def:online-algs}. 
\end{theorem}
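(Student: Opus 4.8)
The plan is to implement the contradiction argument sketched in the introduction, built on a restricted, ensemble version of the $m$-OGP. Suppose, for contradiction, that an $(e^{-cn},\alpha)$-online algorithm $\A$ exists, with $c>0$ a small constant to be fixed. Fix parameters $m\in\mathbb{N}$ (here $m=O(1)$ will suffice) and $\Delta\in(0,1)$ to be chosen later. Let $\M_1\in\R^{M\times n}$ have i.i.d.\,$\cN(0,1)$ entries, and form $\M_2,\dots,\M_m$ by independently resampling the last $\Delta n$ columns of $\M_1$ (keeping the first $(1-\Delta)n$ columns shared). Set $\bs_i=\A(\M_i)$. Because $\A$ is online, the first $(1-\Delta)n$ coordinates of $\bs_i$ depend only on the shared columns, hence $\bs_1,\dots,\bs_m$ all agree on their first $(1-\Delta)n$ coordinates. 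For the success step, condition on the shared columns and apply Jensen's inequality to $p\mapsto p^m$ (this is the Gamarnik--Sudan trick): $\mathbb{E}\bigl[\prod_i \mathbf 1\{\|\M_i\bs_i\|_\infty\le\kappa\sqrt n\}\bigr]=\mathbb{E}\bigl[\prod_i \mathbb{P}[\,\cdot\mid \text{shared}]\bigr]\ge \mathbb{E}[\mathbb{P}[\cdot\mid\text{shared}]]^m\ge (e^{-cn})^m=e^{-cmn}$. So with probability at least $e^{-cmn}$ there exists a tuple $(\bs_1,\dots,\bs_m)$ of $\Sigma_n$-vectors, pairwise agreeing on the first $(1-\Delta)n$ coordinates, with $\bs_i$ satisfying the $i$-th (correlated) instance.

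The core of the argument is then a first-moment bound showing that, for a suitable choice of $\Delta=\Delta(\kappa,\alpha)$ and $m$, the expected number of such tuples is strictly below $e^{-cmn}$, contradicting the above. Concretely, let $Z$ count tuples $(\bs_1,\dots,\bs_m)$ with a common prefix of length $(1-\Delta)n$ on the shared part, distinct free blocks of length $\Delta n$, such that $\|\M_i\bs_i\|_\infty\le\kappa\sqrt n$ for all $i$. Taking expectations, $\mathbb{E}[Z]$ factors as (number of prefixes) $\times$ (number of free-block assignments) $\times$ (probability that all $m$ constraint systems are satisfied). The number of choices is at most $2^n\cdot 2^{(m-1)\Delta n}$. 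For the probability term: conditioned on the realized $\bs_i$, the row products $\ip{\bs_i}{X^{(i)}_j}$ are jointly Gaussian across $i$ (for each fixed row index $j$), with covariance governed by the overlap of the shared portions — which is exactly $1-2\delta$ where $\delta$ is the normalized Hamming distance contributed only by the free blocks, so the shared prefix forces a correlation of $1-\Delta$ between any two $\bs_i,\bs_j$... wait, more carefully: agreement on a $(1-\Delta)$-fraction gives $n^{-1}\ip{\bs_i}{\bs_j}\ge 1-2\Delta$, and the $m$ Gaussian vectors $(\ip{\bs_i}{X_j})_{i\le m}$ for distinct rows are independent across $j$ and have a covariance matrix $\Sigma$ with ones on the diagonal and entries $\ge 1-2\Delta$ off-diagonal. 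The key estimate is an upper bound on $\mathbb{P}[\,|\ip{\bs_i}{X_j}|\le\kappa\sqrt n\ \forall i\le m\,]$ for such a highly-correlated Gaussian vector. When the correlation is close to $1$, the events are nearly the same, so this probability is close to $\mathbb{P}[|Z|\le\kappa]\approx 2\kappa/\sqrt{2\pi}$ rather than its $m$-th power; the gain over the independent case is what must be quantified, via a standard Gaussian-conditioning / small-ball computation giving something like $(2\kappa)^{1+\eta(m-1)}$ for a small $\eta=\eta(\Delta)$ with $\eta\to 0$ as $\Delta\to 0$. Raising to the power $M=\alpha n$ rows, $\mathbb{E}[Z]\le 2^n 2^{(m-1)\Delta n}(2\kappa)^{\alpha n(1+\eta(m-1))}$, and one needs the exponent $\tfrac1n\log_2\mathbb{E}[Z]= 1+(m-1)\Delta + \alpha(1+\eta(m-1))\log_2(2\kappa)$ to be $\le -cm$ for some small $c>0$ when $\alpha\ge 4\kappa^2$ — note $\log_2(2\kappa)<0$ and has magnitude $\approx\log_2(1/\kappa)$, so for $\kappa$ small enough the term $\alpha\log_2(2\kappa)$ is of order $-\kappa^2\log_2(1/\kappa)$, which is where the threshold $\alpha=\Theta(\kappa^2)$ enters; one chooses $\Delta$ a small multiple of $\kappa^2$ (or of $\alpha$) and $m$ a large constant to push the exponent negative with room to spare for $-cm$.

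The main obstacle — and the step that needs the most care — is the correlated Gaussian small-ball estimate: bounding $\mathbb{P}[\max_{i\le m}|\ip{\bs_i}{X_j}|\le\kappa\sqrt n]$ for a jointly Gaussian tuple whose pairwise correlations are all $\ge 1-2\Delta$ but not equal, and doing so with explicit enough constants that the first-moment exponent can be driven below $-cm$ for \emph{all} $\alpha\ge 4\kappa^2$ with a single absolute $c$. The natural route is to write $\ip{\bs_i}{X_j}=\sqrt{1-\Delta}\,Y + \text{(residual}_i)$ via a Gaussian decomposition isolating the common component, condition on $Y$, and bound the conditional probability by a product of one-dimensional Gaussian small-ball terms; integrating over $Y$ then yields the claimed $(2\kappa)^{1+\eta(m-1)}$-type bound. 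A secondary technical point is handling the union over prefixes and free blocks uniformly (the covariance structure, and hence the probability bound, depends only on the overlap pattern, not on the specific vectors, so a crude count of $2^n\cdot 2^{(m-1)\Delta n}$ configurations suffices), and finally verifying that the constant $4$ in $\alpha\ge 4\kappa^2$ is what the arithmetic of the exponent actually delivers — tightening or loosening this is purely a matter of optimizing $\Delta$ and $m$ against the $\log_2(1/\kappa)$ factor. Everything else (the Jensen step, the online structure forcing the common prefix, Markov's inequality to conclude $Z=0$ w.h.p.\ hence $e^{-cmn}\le\mathbb{E}[Z]=o(e^{-cmn})$, a contradiction) is routine.
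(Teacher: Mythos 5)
Your proposal follows the paper's argument step for step: resample the last $\Delta n$ columns to build correlated instances, use the online property to force a common prefix of length $(1-\Delta)n$, apply the Gamarnik--Sudan Jensen trick to lower bound the joint success probability by $p_s^m$, and then run a first moment bound over $m$-tuples sharing a prefix, concluding via Markov's inequality and a contradiction with $e^{-cmn}$. The one step that is wrong as written is your covariance computation. You infer the correlation of the row sums $n^{-1/2}\ip{\bs_i}{R_j}$ from the overlap of the $\bs_i$'s and assert it is only ``$\ge 1-2\Delta$'' and non-constant across pairs. In fact, because the last $\Delta n$ columns of $\M_i$ and $\M_j$ are \emph{independently resampled}, those coordinates contribute zero covariance regardless of what the free blocks of $\bs_i,\bs_j$ are, while each shared-prefix coordinate (on which $\bs_i=\bs_j$) contributes exactly $+1$; hence every pairwise covariance equals exactly $1-\Delta$, i.e.\ $\Sigma=\Delta I_m+(1-\Delta)\boldsymbol{1}\boldsymbol{1}^T$. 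This exactness is what makes the key small-ball estimate clean: the paper bounds the joint density by $(2\pi)^{-m/2}|\Sigma|^{-1/2}$ using the explicit spectrum (eigenvalue $\Delta$ with multiplicity $m-1$) and integrates over the box, getting per row at most $(2\pi)^{-m/2}\bigl(\Delta+(1-\Delta)m\bigr)^{-1/2}\Delta^{-(m-1)/2}(2\kappa)^m$. Your proposed decomposition $Z_i=\sqrt{1-\Delta}\,Y+\sqrt{\Delta}\,W_i$ is exactly this computation, but it is only available because the correlations are all equal; with merely ``$\ge 1-2\Delta$'' and unequal correlations the single-common-factor conditioning you describe does not exist and you would need a comparison inequality or a determinant lower bound instead.

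The quantitative bookkeeping also needs tightening before it delivers $\alpha\ge 4\kappa^2$ with an absolute $c$. The correct per-row bound is of the form $(2\kappa)\cdot\bigl(2\kappa/\sqrt{2\pi\Delta}\bigr)^{m-1}$ (up to the $m^{-1/2}$ factor), so in your notation $\eta\asymp 1/\log_2(1/\kappa)$; the gain per extra solution is $\alpha\log_2\bigl(\sqrt{2\pi\Delta}/(2\kappa)\bigr)=\Theta(\alpha)$, \emph{not} $\Theta(\alpha\log(1/\kappa))$, and the threshold $\alpha=\Omega(\kappa^2)$ comes from balancing this against the entropy cost $\Delta$ per extra solution --- not from the first-solution term $\alpha\log_2(2\kappa)$, which is far too small to offset the leading $+1$ and must instead be absorbed by taking $m$ large (of order $\kappa^{-2}$, still constant in $n$, which is also what keeps the contradiction with $e^{-cmn}$ valid for a small absolute $c$). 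Relatedly, $\Delta$ cannot be an arbitrarily small multiple of $\kappa^2$: the gain vanishes as $\sqrt{2\pi\Delta}\downarrow 2\kappa$. The paper's choice $\Delta=(2\kappa)^2$ makes $\alpha\log_2(2\kappa)-\tfrac{\alpha}{2}\log_2\Delta$ cancel exactly and leaves a net $-\Theta_\kappa(\kappa^2)$ per solution once $\alpha\ge 4\kappa^2$. With the covariance corrected and these choices made explicit, your outline coincides with the paper's proof.
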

We prove Theorem~\ref{thm:online-sbp} in Section~\ref{sec:pf-online-sbp}. Several remarks are in order.

Theorem~\ref{thm:online-sbp} establishes that in the regime $\kappa\to 0$, online algorithms fail to find a satisfying solution for the \texttt{SBP} for densities $\alpha=\Omega(\kappa^2)$. This substantially improves upon an earlier result in~\cite[Theorem~7.4]{gamarnik2022algorithms}, which showed the failure of online algorithms only when $\alpha$ is sufficiently close to the satisfiability threshold $\alpha_c(\kappa)$. Further, in light of Theorem~\ref{thm:bansal-spencer}, Theorem~\ref{thm:online-sbp} is the sharpest possible: Bansal-Spencer algorithm is optimal (up to constants) among online algorithms; no online algorithm, in the sense of Definition~\ref{def:online-algs}, can improve upon it. 

The algorithms Theorem~\ref{thm:online-sbp} rules out need not succeed w.h.p.\,or even with a constant probability: an exponentially small success guarantee suffices. This is a particular strength of Theorem~\ref{thm:online-sbp}; we are unaware of any similar hardness guarantees for algorithms that succeed w.p.\,$o(1)$. This is based on a clever application of Jensen's inequality that is originally due to Gamarnik and Sudan~\cite{gamarnik2017performance}.

\paragraph{Online Algorithms for the Discrepancy Minimization} Our second main result extends Theorem~\ref{thm:online-sbp} to  discrepancy minimization for the case when the entries of $\M$ are binary.
\begin{theorem}\label{thm:online-disc}
    Let $c<1/2$ be arbitrary, $n\ge M=\omega(1)$, and $\M\in\{-1,1\}^{M\times n}$ with i.i.d.\,Rademacher entries. Then there exists an $M_0\in\N$ such that the following holds. For every $M\ge M_0$, there exists no $\bigl(e^{-cM},\sqrt{M}/24\bigr)$-online algorithm for discrepancy in the sense of Definition~\ref{def:online-algs}.
\end{theorem}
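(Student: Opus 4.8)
\medskip
\noindent\textbf{Proof plan.}\ I would argue by contradiction using a restricted ``online'' version of the $m$-OGP, following the proof sketch in the introduction. Suppose that for some $c<1/2$ there is an $(e^{-cM},\sqrt M/24)$-online algorithm $\A$ for all large $M$. Fix $\Delta\in(0,1)$ and $m\in\N$ to be chosen later, write each matrix as $[A\,|\,B]$ where $A$ is the block of the first $(1-\Delta)n$ columns, and build the correlated ensemble $\M_1=[A\,|\,B_1]$ with i.i.d.\ Rademacher entries and $\M_i=[A\,|\,B_i]$ for $2\le i\le m$, where $B_2,\dots,B_m$ are i.i.d.\ copies of $B_1$ and independent of $A$. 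Put $\bs_i\triangleq\A(\M_i)$. Since $\A$ is online and the $\M_i$ share their first $(1-\Delta)n$ columns, the $\bs_i$ all agree on the first $(1-\Delta)n$ coordinates. Moreover, conditionally on $A$ the $\M_1,\dots,\M_m$ are i.i.d., so with $\phi(A)\triangleq\mathbb{P}[\|\M_1\bs_1\|_\infty\le\sqrt M/24\mid A]$ and $\E{\phi(A)}\ge e^{-cM}$ (the success guarantee), Jensen's inequality applied to $x\mapsto x^m$---the device of Gamarnik and Sudan~\cite{gamarnik2017performance}---gives
\[
\mathbb{P}\Bigl[\max_{1\le i\le m}\|\M_i\bs_i\|_\infty\le\tfrac{\sqrt M}{24}\Bigr]=\E{\phi(A)^m}\ \ge\ \bigl(\E{\phi(A)}\bigr)^m\ \ge\ e^{-cmM}.
\]
Hence, with probability at least $e^{-cmM}$, there is an $m$-tuple of discrepancy-$\le\sqrt M/24$ solutions for $(\M_1,\dots,\M_m)$ that agree on the first $(1-\Delta)n$ coordinates.

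The heart of the argument is a first-moment bound showing that the expected number $Z$ of such tuples is $o(e^{-cmM})$, which contradicts Markov's inequality. There are $2^{(1-\Delta)n+m\Delta n}$ candidate tuples (one common prefix, $m$ arbitrary suffixes in $\{-1,1\}^{\Delta n}$). For a fixed tuple, the $k$-th coordinate of $\M_i\bs_i$ equals $W_k+G_{i,k}$, where $W_k$ is the contribution of the shared prefix (a function of $A$ only) and $G_{i,k}$ is a signed sum of the $\Delta n$ i.i.d.\ Rademacher entries in the $k$-th row of $B_i$; conditionally on $A$, the $\{G_{i,k}\}_{i\le m,\,k\le M}$ are mutually independent. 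The new ingredient relative to the Gaussian case is anti-concentration for $G_{i,k}$: by the Berry--Esseen theorem together with $\sup\Phi'=1/\sqrt{2\pi}$, for every interval $I$ of length $\sqrt M/12$,
\[
\mathbb{P}[G_{i,k}\in I]\ \le\ \frac{\sqrt M/12}{\sqrt{2\pi\,\Delta n}}+\frac{2C_{\mathrm{BE}}}{\sqrt{\Delta n}},
\]
\emph{uniformly} in the position of $I$, hence uniformly in $W_k$ and in $A$. Taking $\Delta n=\beta M$ with $\beta=\Theta(1)$ and $M\to\infty$ to absorb the Berry--Esseen error, this is $(1+o(1))/(12\sqrt{2\pi\beta})$, so multiplying over the $mM$ independent coordinates and summing over tuples gives $\E{Z}\le 2^{(1-\Delta)n+m\beta M}\bigl((1+o(1))/(12\sqrt{2\pi\beta})\bigr)^{mM}$.

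Taking logarithms, it remains to choose $\beta$ and $m$ so that $(1-\Delta)n\log 2+mM\bigl(\beta\log 2-\log(12\sqrt{2\pi\beta})+c\bigr)$ is at most a negative constant times $mM$. Since $c<\tfrac12$, a direct check gives $\beta\log 2+c<\log(12\sqrt{2\pi\beta})$ already for $\beta=1$ (indeed $\log 2+\tfrac12<\log(12\sqrt{2\pi})$ comfortably), so the bracket is a negative absolute constant; then picking $\Delta=M/n$ and $m=\lceil Cn/M\rceil$ for a suitable constant $C$ makes the $mM$-term dominate the $(1-\Delta)n\log 2$ term. When $n/M\to\infty$ this forces $m=\omega(1)$---precisely the growing-$m$ mechanism of Gamarnik and K{\i}z{\i}lda\u{g}~\cite{gamarnik2021algorithmic}. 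The regime $M\le n=O(M)$ is even easier: with $\Delta=\tfrac12$ and $m=1$ the product $2^n q^M$ with $q=(1+o(1))/(12\sqrt\pi)$ is already $e^{-\Omega(M)}<e^{-cM}$, so no algorithm at all succeeds there. Choosing $M_0$ large enough that all $o(1)$'s are negligible completes the contradiction; the value $c=1/24$ is exactly what keeps the key inequality compatible with $c<1/2$.

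The step I expect to be the main obstacle is purely quantitative: replacing the Gaussian density bound by Berry--Esseen for the discrete signed sums $G_{i,k}$, and then checking that the constants actually line up---that $\beta\log 2+c<\log(12\sqrt{2\pi\beta})$ admits a solution $\beta$ simultaneously with $c<\tfrac12$---while confirming the estimate degrades gracefully over the whole range $n\ge M=\omega(1)$, in particular that $m$ may be taken arbitrarily large as $n/M\to\infty$ with all $o(1)$ corrections uniform in $m$. For the Bernoulli$(p)$ version (Theorem~\ref{thm:online-disc-bern}) one recenters $G_{i,k}$, so its variance picks up a factor $p(1-p)$ that is compensated exactly by the modified target $c_p\sqrt M=\sqrt{p-p^2}\,\sqrt M/24$, and the same computation goes through.
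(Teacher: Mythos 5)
Your proposal is essentially the paper's own proof: the Jensen/conditioning amplification over $m$ correlated instances sharing a common prefix, the prefix-agreement forced by the online property, a first-moment count $2^{(1-\Delta)n}\cdot 2^{m\Delta n}$ over such tuples, Berry--Esseen anti-concentration applied row-by-row to the resampled block (the paper resamples exactly the last $M$ columns, i.e.\ your $\beta=1$), and the choice $m=\Theta(n/M)$, which is $\omega(1)$ when $n\gg M$; your constants even come out with more room than the paper's cruder bound $3|I|/\sqrt{M}$ per coordinate. One caveat: your aside that the regime $M\le n=O(M)$ is handled by $m=1$ (i.e.\ that no discrepancy-$\sqrt{M}/24$ solution exists at all) is false once $n/M$ is a large constant --- there $2^n q^M=\omega(1)$ and such solutions do exist w.h.p. --- but this aside is unnecessary, since your main choice $m=\lceil Cn/M\rceil$ (constant in that regime) already covers it, exactly as the paper's uniform choice $m=\lceil 2n/M\rceil$ does; also the closing ``$c=1/24$'' should read ``$C_u=1/24$'', the discrepancy constant rather than the success exponent.
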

Furthermore, Theorem~\ref{thm:online-disc} remains valid even when the entries of $\M$ are Bernoulli$(p)$.
\begin{theorem}\label{thm:online-disc-bern}
    Let $c<1/2$ be arbitrary, $n\ge M=\omega(1)$, and $\M\in\{0,1\}^{M\times n}$ with i.i.d.\,Bernoulli$(p)$ entries. Then there exists an $M_0\in\N$ such that the following holds. For every $(p-p^2)M\ge M_0$, there exists no $\bigl(e^{-cM},\sqrt{M(p-p^2)}/24\bigr)$-online algorithm for discrepancy in the sense of Definition~\ref{def:online-algs}.
\end{theorem}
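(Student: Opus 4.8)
The plan is to mimic the proof of Theorem~\ref{thm:online-disc} (the Rademacher case) almost verbatim, reducing the Bernoulli$(p)$ model to a centered, scaled binary model. First I would pass from $\M\in\{0,1\}^{M\times n}$ with i.i.d.\ Bernoulli$(p)$ entries to the centered matrix $\widetilde{\M}\triangleq \M - p\,\mathbf{1}\mathbf{1}^\top$, whose entries $\widetilde{M}_{ij} = M_{ij}-p$ are i.i.d., mean zero, with variance $\sigma^2 = p-p^2$ and bounded third absolute moment. The key observation is that for any $\bs\in\Sigma_n$, the column-sum correction $p\,\mathbf{1}^\top\bs$ is a fixed scalar (given $\bs$) applied uniformly to every coordinate of $\M\bs$, but since the $t$-th column-decision of an online algorithm depends only on columns $\mathcal{C}_1,\dots,\mathcal{C}_t$ of $\M$ — and these determine columns $\widetilde{\mathcal{C}}_1,\dots,\widetilde{\mathcal{C}}_t$ of $\widetilde{\M}$ and vice versa — an online algorithm for $\M$ is precisely an online algorithm for $\widetilde{\M}$. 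So it suffices to rule out $(e^{-cM},\sqrt{\sigma^2 M}/24)$-online algorithms producing $\bs$ with $\|\widetilde{\M}\bs\|_\infty$ small; the extra $p\,\mathbf{1}^\top\bs$ shift only makes this no easier, and in fact we bound $\|\widetilde{\M}\bs\|_\infty \le \|\M\bs\|_\infty + |p\,\mathbf{1}^\top\bs| $... — actually the cleaner route is to note $\widetilde{M}_{ij}\bs_j$ and work directly with the anti-concentration of signed sums of the centered variables, exactly as in the Rademacher argument.

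The core argument then proceeds as in the proof sketch and as in Theorem~\ref{thm:online-disc}. Suppose toward contradiction that an $(e^{-cM},\sqrt{\sigma^2 M}/24)$-online algorithm $\A$ exists. Fix $m=m(M)=\omega(1)$ growing slowly (e.g.\ $m=\Theta(\log M)$ or a small power of $M$, chosen so that the first-moment bound beats $p_s^m=e^{-cmM}$) and a $\Delta\in(0,1)$. Let $\M_1$ have i.i.d.\ Bernoulli$(p)$ entries, and form $\M_2,\dots,\M_m$ by independently resampling the last $\Delta n$ columns of $\M_1$. Running $\A$ on each $\M_i$ yields $\bs_i=\A(\M_i)$; Jensen's inequality (the Gamarnik--Sudan trick already invoked for Theorem~\ref{thm:online-sbp}) gives that with probability at least $p_s^m = e^{-cmM}$ all $m$ outputs satisfy $\|\M_i\bs_i\|_\infty\le \sqrt{\sigma^2 M}/24$, and by the online property any two $\bs_i,\bs_j$ agree on their first $(1-\Delta)n$ coordinates. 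So it suffices to show that the expected number of $m$-tuples $(\bs_1,\dots,\bs_m)$ of such ``jointly low-discrepancy, prefix-agreeing'' solutions is $o(e^{-cmM})$, which by Markov contradicts the existence of $\A$.

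The first-moment computation is the heart of the matter and the step I expect to be the main obstacle. One enumerates: a common prefix $\bs^{\rm pre}\in\{-1,1\}^{(1-\Delta)n}$ (a factor $2^{(1-\Delta)n}$), and for each $i$ a suffix $\bs^{(i)}_{\rm suf}\in\{-1,1\}^{\Delta n}$ (a factor $2^{m\Delta n}$). For a fixed tuple and a fixed row index $r$, the event $|\langle (\bs^{\rm pre},\bs^{(i)}_{\rm suf}), \M_i \text{-row } r\rangle|\le \sqrt{\sigma^2 M}/24$ must be controlled. Here the rows split into the shared-column part and the resampled-column part; for the shared columns the contributions across different $i$ are perfectly correlated (same entries, but generically different suffix signs elsewhere), while the resampled parts are independent across $i$. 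The anti-concentration input is a Berry--Esseen estimate: a signed sum of $k$ i.i.d.\ centered binary variables with variance $\sigma^2$ lands in an interval of length $\ell$ with probability $O(\ell/(\sigma\sqrt{k}) + 1/(\sigma\sqrt{k}))$ — the $1/\sqrt{k}$ Berry--Esseen correction is why the threshold is taken as $c_p\sqrt{M}$ with $c_p\propto\sigma$, so that $\ell/(\sigma\sqrt{M})$ is a small constant (like $1/12$) and the per-row, per-instance probability is a constant bounded away from $1$. Multiplying over $M$ rows and $m$ instances gives a factor $(\text{const})^{mM}$, and one then checks that $2^{(1-\Delta)n}\cdot 2^{m\Delta n}\cdot(\text{const})^{mM}\cdot(\text{multiplicity correction})$ can be driven below $e^{-cmM}$ by choosing $\Delta$ small and $m$ growing; since $n\ge M$, the entropy terms $2^{(1-\Delta)n}$ and $2^{m\Delta n}$ are the dangerous ones and must be dominated by $(\text{const})^{mM}$ — this is exactly the bookkeeping already done for Theorem~\ref{thm:online-disc}, now carried with $\sigma=\sqrt{p-p^2}$ in place of $\sigma=1$, and with the hypothesis $(p-p^2)M\ge M_0$ ensuring $\sigma\sqrt{M}$ is large enough for the Berry--Esseen bound to be effective. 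The only genuinely new ingredient relative to the Rademacher proof is verifying that the centering/scaling reduction is compatible with the online structure and that the third-moment ratio stays bounded uniformly in $p$ on the relevant range, both of which are routine.
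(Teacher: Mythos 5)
Your final route is essentially the paper's: keep the uncentered Bernoulli constraint and replace the Rademacher anti-concentration lemma by a Berry--Esseen bound for $\sum_i Z_i\epsilon_i$ with $Z_i\sim\mathrm{Bernoulli}(p)$, observing that the mean shift $p\ip{\boldsymbol{1}}{\boldsymbol{\epsilon}}$ merely translates the target interval so its length (hence the bound $3|I|/\sqrt{(p-p^2)M}$) is unchanged; with threshold $\sqrt{(p-p^2)M}/24$ the per-row, per-instance probability is $1/4$, and the counting, Jensen, and Markov bookkeeping of Theorem~\ref{thm:online-disc} then goes through verbatim, which is exactly what the paper does. Two caveats: the centering detour $\widetilde{\M}=\M-p\boldsymbol{1}\boldsymbol{1}^T$ is unnecessary and the half-stated bound $\|\widetilde{\M}\bs\|_\infty\le\|\M\bs\|_\infty+p\,|\ip{\boldsymbol{1}}{\bs}|$ is unusable (the shift can be of order $pn$), though you rightly abandon it; and your parenthetical choice $m=\Theta(\log M)$ (or a small power of $M$) with a generic small $\Delta$ would fail whenever $n\gg M\log M$, since beating the $2^{n}$ prefix entropy forces $mM=\Omega(n)$ and beating the suffix entropy $2^{m\Delta n}$ forces $\Delta n=O(M)$ --- the correct choice, as in the Rademacher proof whose bookkeeping you invoke, is $m=\lceil 2n/M\rceil$ with exactly the last $M$ columns resampled.
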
 
We prove Theorem~\ref{thm:online-disc} in Section~\ref{sec:pf-online-disc} and give the extension to Theorem~\ref{thm:online-disc-bern} in Section~\ref{sec:bern-ext}.

Theorems~\ref{thm:online-disc}-\ref{thm:online-disc-bern} collectively establish that up to constant factors the Bansal-Spencer algorithm is optimal within the class of online algorithms for the discrepancy minimization problem. Once again, the algorithms ruled out can succeed even with an exponentially small probability. 

At a technical level, Theorems~\ref{thm:online-disc}-\ref{thm:online-disc-bern} are established by showing the non-existence of  certain $m$-tuples of solutions described earlier with growing values of $m$, $m=\omega_M(1)$. The idea of considering the `landscape' of $m$-tuples with $m=\omega(1)$ was introduced in the context of random number partitioning problem~\cite{gamarnik2021algorithmic}. By doing
so, the authors subsequently lowered the $m$-OGP threshold and ruled out stable algorithms for a broader range of parameters than what one can get for constant $m$. Ours is the first work leveraging such a barrier with growing values of $m$ beyond stable algorithms; it further illustrates the potential gain of considering super-constant tuples for random computational problems. Another key ingredient of our proof is an anti-concentration inequality for signed sum of Bernoulli/Rademacher variables, via the Berry-Esseen Theorem. 
\section{Algorithmic Barriers in Finding Constant Discrepancy Solutions}\label{sec:const-disc}
In this section, we focus on the algorithmic problem of finding a constant discrepancy solution. More concretely, given a random $\M\in\R^{M\times n}$ we ask the following question: for what values of $M$ and $n$, can a solution $\bs\in\Sigma_n$ of constant discrepancy, $\|\M\bs\|_\infty=O(1)$, be  found efficiently? 

To begin with, a simple first-moment calculation shows that $n=\Omega(M\log M)$ is necessary for such solutions to exist. This condition turns out to be sufficient, as well; \cite{altschuler2021discrepancy} showed that if $\M$ has i.i.d.\,Bernoulli$(p)$ entries then $\D(\M)\le 1$ w.h.p.\,if $n\ge CM\log M$ where $C$ is any arbitrary constant greater than $(2\log 2)^{-1}$. On the other hand, the best known polynomial-time algorithm finding such a solution works only when $M=o(\log n)$~\cite{bansalpersonal}. This highlights a striking statistical-to-computational gap (\texttt{SCG}). 

In this section, we study the nature of this \texttt{SCG} in a smooth setting where the entries of $\M$ are i.i.d.\,$\cN(0,1)$, near the existential boundary $n=\Theta(M\log M)$.
We first focus on the `landscape' of the set of constant discrepancy solutions, and show the presence of Ensemble $m$-OGP, an intricate geometrical property. We then leverage $m$-OGP to rule out the class of stable algorithms.
\subsection{Technical Preliminaries}\label{sec:prelim}
We formalize the set of tuples of constant discrepancy solutions under investigation.
\begin{definition}\label{def:ogp-set}
    Fix a $K>0$, an $m\in\mathbb{N}$, $0<\eta<\beta<1$, and  $\mathcal{I}\subset [0,\pi/2]$. Let $\M_i\in\R^{M\times n}$, $0\le i\le m$, be i.i.d.\,random matrices, each having i.i.d.\,$\cN(0,1)$ entries. Denote by $\mathcal{S}(K,m,\beta,\eta,\mathcal{I})$ the set of all $m$-tuples $\bs_i\in\Sigma_n$, $1\le i\le m$, satisfying the following:
    \begin{itemize}
        \item {\bf (Pairwise Overlap Condition)} For any $1\le i<j\le m$,
        $
        \beta - \eta \le n^{-1}\ip{\bs_i}{\bs_j}\le \beta$.
        \item {\bf (Constant Discrepancy Condition)} There exists $\tau_1,\dots,\tau_m\in\mathcal{I}$ such that
        \\ $
        \max_{1\le i\le m}\bigl\|\M_i(\tau_i)\bs_i\bigr\|_\infty \le K$,
        where
      $\M_i(\tau_i) = \cos(\tau_i)\M_0 +\sin(\tau_i)\M_i\in\R^{M\times n}$.
    
    \end{itemize}
\end{definition}
Definition~\ref{def:ogp-set} concerns tuples of solutions of discrepancy at most $K$. The parameter $m$ is the size of tuples under consideration, and $\beta$ and $\eta$ collectively control the (forbidden) region of overlaps. Finally, the set $\mathcal{I}$ is employed for generating correlated instances; this is necessary for establishing the Ensemble $m$-OGP to rule out stable algorithms, see below for details. 
\subsection{Ensemble $m$-OGP in Discrepancy Minimization}\label{sec:ogp}
Our next main result shows that the set of constant discrepancy solutions exhibits the $m$-OGP. 
\begin{theorem}\label{thm:m-ogp-discrepancy}
    Fix arbitrary constants $C_1>c_2>0$ and a $K>0$, suppose that $C_1 M\log_2 M\ge n\ge c_2 M\log_2 M$. Then, there exists an $m\in\mathbb{N}$, a $c>0$ and $0<\eta<\beta<1$ such that the following holds. Fix any $\mathcal{I}\subset [0,\pi/2]$ with $|\mathcal{I}| \le 2^{cn}$. Then,
    $\mathbb{P}\bigl[\mathcal{S}(K,m,\beta,\eta,\mathcal{I})\ne\varnothing\bigr]\le 2^{-\Theta(n)}.
    $
\end{theorem}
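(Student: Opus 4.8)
The plan is to prove $\mathbb{P}[\mathcal{S}(K,m,\beta,\eta,\mathcal{I})\ne\varnothing]\le 2^{-\Theta(n)}$ via the first moment method: we will bound $\mathbb{E}[|\mathcal{S}(K,m,\beta,\eta,\mathcal{I})|]$ and show it decays exponentially in $n$, then invoke Markov's inequality. The summation is over all $m$-tuples $(\bs_1,\dots,\bs_m)\in\Sigma_n^m$ satisfying the pairwise overlap condition, and over all $m$-tuples $(\tau_1,\dots,\tau_m)\in\mathcal{I}^m$; the union bound over the (at most $|\mathcal{I}|^m\le 2^{cmn}$) choices of $\boldsymbol\tau$ is absorbed into the exponent, so the bulk of the work is to control, for a \emph{fixed} overlap-admissible tuple $(\bs_1,\dots,\bs_m)$ and fixed $\boldsymbol\tau$, the probability $\mathbb{P}[\max_i \|\M_i(\tau_i)\bs_i\|_\infty\le K]$.

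\textbf{Step 1: entropy count.} First I would count the number of admissible overlap tuples. There are $2^n$ choices for $\bs_1$; given $\bs_1,\dots,\bs_{i-1}$, the number of $\bs_i\in\Sigma_n$ with $n^{-1}\ip{\bs_i}{\bs_j}\in[\beta-\eta,\beta]$ for all $j<i$ is at most $2^{n H_i}$ for an entropy exponent $H_i=H_i(\beta,\eta)$ that I can make as small as I like (in particular $<1$, bounded away from $1$ uniformly in $i\le m$) by taking $\beta$ close to $1$ and $\eta$ small — this is the standard ``solutions are nearly identical'' regime. So the total number of admissible tuples is at most $2^{n(1+\sum_{i=2}^m H_i)}$, i.e. $2^{n m (\bar H + o(1))}$ with $\bar H<1$.

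\textbf{Step 2: probability estimate for a fixed tuple.} Here is the crux. For a fixed admissible $(\bs_1,\dots,\bs_m)$ and fixed $\boldsymbol\tau$, the vectors $\M_i(\tau_i)\bs_i\in\R^M$ are jointly Gaussian: each coordinate $(\M_i(\tau_i)\bs_i)_\ell$ is $\cN(0,n)$, and across $i\ne j$ the covariance between $(\M_i(\tau_i)\bs_i)_\ell$ and $(\M_j(\tau_j)\bs_j)_\ell$ is $\cos(\tau_i)\cos(\tau_j)\ip{\bs_i}{\bs_j}$ (only $\M_0$ is shared), while distinct coordinates $\ell\ne\ell'$ are independent. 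So the event $\{\max_i\|\M_i(\tau_i)\bs_i\|_\infty\le K\}$ factorizes over the $M$ coordinates, and for each coordinate we need the probability that an $m$-dimensional Gaussian vector with the above correlation structure lands in the box $[-K,K]^m$. Since the marginals have variance $n=\Theta(M\log M)$ while the box has constant half-width $K$, each coordinate event has probability $\Theta(K/\sqrt{n})^{?}$ — more precisely, I would show it is at most $(c' K/\sqrt{n})^{\rho m}$ for some $\rho=\rho(\beta,\{\tau_i\})\in(0,1]$ capturing the effective number of ``free'' directions, or at the very least $(c'K/\sqrt n)^{1}\cdot(\text{something})^{m-1}$; the cleanest route is to condition on $\M_0$ (equivalently on the shared component) and use that, conditionally, the $m$ entries are \emph{independent} Gaussians with variance $\sin^2(\tau_i)n$ plus a shift, giving a conditional box probability at most $\prod_i (c'K/(\sin(\tau_i)\sqrt n))$, then integrate out $\M_0$. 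Raising to the $M$-th power gives $\mathbb{P}[\cdots]\le (c'K/\sqrt n)^{\Theta(mM)} \cdot (\text{correction})$, and since $\sqrt n=\Theta(\sqrt{M\log M})$, this is $2^{-\Theta(mM\log M)\cdot(1+o(1))\cdot\frac12} = 2^{-\Theta(mn)}$.

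\textbf{Step 3: balancing.} Combining, $\mathbb{E}[|\mathcal{S}|]\le 2^{n m(\bar H+o(1))}\cdot 2^{cmn}\cdot 2^{-\gamma m n}$ for a constant $\gamma>0$ coming from Step 2. Choosing $\beta$ close enough to $1$ and $\eta$ small so that $\bar H$ is small, then choosing $c$ small enough and $m$ large enough constant so that $\bar H + c < \gamma$ (with a gap), yields $\mathbb{E}[|\mathcal{S}|]\le 2^{-\Theta(mn)}\le 2^{-\Theta(n)}$, and Markov finishes the proof. The reason $m$ must be taken large (rather than $m=2$) is that the ``bad'' entropy term from near-identical tuples and the $\tau$-union-bound term both scale like $mn$, so one needs the per-tuple probability exponent $\gamma m n$ to have a strictly larger slope in $m$ — which Step 2 delivers only once $\beta$ is pinned; the role of large constant $m$ here is exactly as in the number-partitioning / SBP analyses and is what makes the first-moment bound go through at the boundary $n=\Theta(M\log M)$.

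\textbf{Main obstacle.} The delicate point is Step 2: getting the box-probability estimate with the \emph{correct constant in the exponent} uniformly over all admissible overlap patterns and all $\boldsymbol\tau\in\mathcal{I}^m$ — in particular handling $\tau_i$ near $0$ (where $\sin\tau_i$ is small and the conditional variance degenerates, so the naive bound $c'K/(\sin\tau_i\sqrt n)$ blows up). I expect this is resolved either by a separate, simpler argument when many $\tau_i$ are near $0$ (then the instances are nearly identical, the solutions must be simultaneously feasible for essentially one matrix, and a direct first-moment bound with the single-instance box probability $(c'K/\sqrt n)^{M}$ suffices), or by working with the Gaussian density of the full $mM$-dimensional vector and lower-bounding the determinant of its covariance matrix using the overlap gap $n^{-1}\ip{\bs_i}{\bs_j}\le\beta<1$, which keeps the Gram matrix of the $\bs_i$ well-conditioned. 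Carrying out this case analysis cleanly, and verifying the constants chain together with room to spare, is the bulk of the real work.
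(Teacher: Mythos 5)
Your skeleton — first moment plus Markov, an entropy count with $\beta$ close to $1$, a per-tuple Gaussian box probability of order $(K/\sqrt n)^{mM}$, a union bound over $\mathcal I^m$ absorbed as $2^{cmn}$, and a balance achieved with a large constant $m$ (of order $C_1$) and $c$ of order $1/m$ — is exactly the paper's proof, and your Step 1 and Step 3 arithmetic matches the paper's choices ($h_b(1-\beta^*)=\min\{1/(4C_1),1/2\}$, $m^*=\max\{2,16C_1\}$, $c=1/m$, $\eta=(1-\beta)/(2m)$). The gap is precisely where you locate it: the box-probability estimate uniform over $\tau\in\mathcal I^m$. The paper resolves this with Sid\'ak's comparison inequality (Theorem~\ref{thm:sidak}): the per-row vector $\bigl(n^{-1/2}\ip{R_i}{\bs_i}\bigr)_{i\le m}$ has covariance $\cos(\tau_i)\cos(\tau_j)(\beta-\eta_{ij})$, which is exactly of the form $\lambda_i\lambda_j\rho_{ij}$, so the box probability is maximized at $\tau_i=0$; the determinant of the resulting matrix $\Sigma(\boldsymbol{\eta})$ is then lower-bounded by $\bigl(\tfrac{1-\beta}{2}\bigr)^m$ via Hoffman--Wielandt (Theorem~\ref{thm:hw}), yielding the uniform per-row bound $(2\pi)^{-m/2}\bigl(\tfrac{1-\beta}{2}\bigr)^{-m/2}(2K/\sqrt n)^m$ of Lemma~\ref{lemma:prob-term}. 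Your second fallback (bound the joint density by a covariance-determinant lower bound using the overlap condition) is in fact viable and essentially equivalent: writing $\overline\Sigma=D\Sigma(\boldsymbol{\eta})D+(I-D^2)$ with $D=\mathrm{diag}(\cos\tau_i)$ and using $\Sigma(\boldsymbol{\eta})\succeq\tfrac{1-\beta}{2}I$ (valid once $\eta\le(1-\beta)/(2m)$) gives $\overline\Sigma\succeq\tfrac{1-\beta}{2}I$ for \emph{every} $\tau$, so no case analysis is needed and Sid\'ak can be bypassed. But the proposal does not carry this out; the crux estimate is only asserted and flagged as "the bulk of the real work."

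Moreover, the first branch of your proposed case analysis is quantitatively wrong. When the $\tau_i$ are near $0$ you cannot retreat to the single-instance box probability $(c'K/\sqrt n)^M$: its exponent is $\tfrac M2\log_2 n=\Theta(n)$ with a constant of order $1/C_1$, which does not beat the $2^{n}$ entropy coming from the free choice of $\bs_1$ once $C_1$ is moderately large. The whole point at the boundary $n=\Theta(M\log M)$ is that one needs the \emph{joint} smallness of all $m$ correlated rows, i.e. the factor $(2K/\sqrt n)^{mM}$, and it is exactly in the degenerate regime $\tau_i\to0$ (where your conditional-on-$\M_0$ variance $\sin^2(\tau_i)n$ collapses) that the Sid\'ak/determinant step is doing the work. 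So you identified the right obstacle and one workable repair, but as written the key probability estimate is missing and the explicitly proposed near-zero-$\tau$ patch would fail.
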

 We prove Theorem~\ref{thm:m-ogp-discrepancy} in Section~\ref{sec:proof-ogp}. Several remarks are in order. Theorem~\ref{thm:m-ogp-discrepancy} shows that for any $K>0$ and throughout the entire regime $n=\Theta(M\log M)$, the set of solutions with discrepancy at most $K$ exhibits the $m$-OGP, for suitable $m,\beta$ and $\eta$. In light of prior work discussed earlier, this gives some rigorous evidence for algorithmic hardness at the boundary $n=\Theta(M\log M)$, and  constitutes  a first step towards~\cite[Conjecture~1]{altschuler2021discrepancy}. 
 
 Our proof is based on the first moment method: we show that the expected number of such $m$-tuples is exponentially small for suitably chosen $m,\beta,\eta$ and apply Markov's inequality. Further, our proof reveals that $\beta\gg \eta$: Theorem~\ref{thm:m-ogp-discrepancy} rules out $m$-tuples of constant discrepancy solutions that are nearly equidistant. Moreover, the solutions need not be of constant discrepancy w.r.t.\,the same instance: $\M_i(\tau_i)$ appearing in Definition~\ref{def:ogp-set} are potentially correlated. This is known as the Ensemble $m$-OGP and instrumental in ruling out stable algorithms in Theorem~\ref{thm:stable-hard}.
\subsection{$m$-OGP Implies Failure of Stable Algorithms}\label{sec:stable-hard}
In this section, we show that the Ensemble $m$-OGP established in Theorem~\ref{thm:m-ogp-discrepancy} implies the failure of \emph{stable algorithms} in finding a constant discrepancy solution. We begin by elaborating on the algorithmic setting and formalizing the class of stable algorithms we investigate.
\paragraph{Algorithmic Setting} An algorithm $\A$ is a mapping between $\R^{M\times n}$ and $\Sigma_n$, where randomization is allowed: we assume there exists a probability space $(\Omega,\mathbb{P}_\omega)$ such that $\A:\R^{M\times n}\times \Omega\to \Sigma_n$. For any $\omega\in\Omega$ and $\M\in\R^{M\times n}$, we want $\|\M\bs_{\rm ALG}\|_\infty = O(1)$, where $\bs_{\rm ALG} = \A(\M,\omega)\in\Sigma_n$. The class of stable algorithms is formalized as follows.
\begin{definition}\label{def:stable-alg}
    Fix a $K>0$. An algorithm $\A:\R^{M\times n}\times \Omega\to\Sigma_n$ is called $(K,\rho,p_f,p_{\rm st},f,L)$-stable (for discrepancy minimization) if it satisfies the following for all sufficiently large $M$. 
    \begin{itemize}
        \item {\bf (Success)} For $\M\in\R^{M\times n}$ with i.i.d.\,$\cN(0,1)$ entries, 
        \[
\mathbb{P}_{(\M,\omega)}\bigl[\bigl\|\M\A(\M,\omega)\bigr\|_\infty \le K\bigr]\ge 1-p_f.
        \]
        \item {\bf (Stability)} Let $\M,\overline{\M}\in\R^{M\times n}$ be random matrices, each with i.i.d.\,$\cN(0,1)$ entries, such that $\mathbb{E}\bigl[\M_{ij}\overline{\M}_{ij}\bigr]=\rho$ for $1\le i\le M$ and $1\le j\le n$. Then,
        \[
\mathbb{P}_{(\M,\overline{M},\omega)}\Bigl[d_H\bigl(\A(\M,\omega),\A(\overline{\M},\omega)\bigr)\le f+L\|\M-\overline{\M}\|_F\Bigr]\ge 1-p_{\rm st}.
        \]
    \end{itemize}
\end{definition}
Definition~\ref{def:stable-alg} is the same as \cite[Definition~3.1]{gamarnik2022algorithms}. W.p.\,at least $1-p_f$, $\A$ finds a solution of discrepancy below $K$. $\A$ can tolerate an input correlation value of $\rho$; and the parameters $f$ and $L$ quantify the sensitivity of the output of $\A$ to changes in its input. The stability guarantee is probabilistic—w.r.t.\,both $\M,\overline{\M}$ and to the randomness $\omega$ of $\A$—holding w.p.\,at least $1-p_{\rm st}$. Finally, the term $f$ makes our negative result only stronger: $\A$ is allowed to make $f$ flips even when $\M$ and $\overline{\M}$ are `too close’. Our final main result is as follows.
\begin{theorem}\label{thm:stable-hard}
    Fix a $K>0$, $C_1>c_2>0$ and a $\mathcal{L}>0$. Suppose $
    C_1 M\log_2 M\ge n\ge c_2 M\log_2 M$. Let $m\in\mathbb{N}$ and $0<\eta<\beta<1$ be the $m$-OGP parameters prescribed by Theorem~\ref{thm:m-ogp-discrepancy}. Set
   \begin{equation}\label{eq:C-Q-T}
        C=\frac{\eta^2}{1600},\quad Q=\frac{4800\mathcal{L}\pi}{\eta^2},\quad\text{and}\quad T=\exp_2\left(2^{4mQ\log_2 Q}\right).
    \end{equation}
    Then, there exists an $n_0\in\mathbb{N}$ such that the following holds. For every $n\ge n_0$, there exists no randomized algorithm $\A:\R^{M\times n}\times\Omega\to\Sigma_n$ which, in the sense of Definition~\ref{def:stable-alg}, is
    \[
\left(K,\cos\left(\frac{\pi}{2Q}\right),\frac{1}{9(Q+1)T},\frac{1}{9Q(T+1)},Cn,\mathcal{L}\sqrt{\frac{n}{M}}\right)-\text{stable}.
    \]
\end{theorem}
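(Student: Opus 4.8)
The plan is to derive Theorem~\ref{thm:stable-hard} from the Ensemble $m$-OGP of Theorem~\ref{thm:m-ogp-discrepancy} via the interpolation-plus-Ramsey argument of \cite{gamarnik2021algorithmic,gamarnik2022algorithms}, which converts a stable algorithm into a forbidden $m$-tuple. Suppose toward a contradiction that $\A$ is stable with the claimed parameters. Fix a sequence of angles $\tau_j = \tfrac{\pi j}{2Q}$, $0\le j\le Q$, interpolating between $\M_0$ and $\M_Q$, and for each $j$ set $\M(\tau_j) = \cos(\tau_j)\M_0 + \sin(\tau_j)\M_Q$; these are marginally $\cN(0,1)$ matrices, consecutive ones have correlation $\cos(\pi/(2Q))$, which is exactly the stability radius $\rho$ allotted to $\A$. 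Run $\A$ (with shared randomness $\omega$) on each $\M(\tau_j)$ to get $\bs_j = \A(\M(\tau_j),\omega)$. By a union bound over the $Q+1$ angles against the success failure probability $\tfrac{1}{9(Q+1)T}$ and over the $Q$ consecutive pairs against the stability failure probability $\tfrac{1}{9Q(T+1)}$, with probability bounded away from $0$ all $\bs_j$ have discrepancy $\le K$ and each consecutive pair satisfies $d_H(\bs_j,\bs_{j+1}) \le Cn + L\|\M(\tau_j)-\M(\tau_{j+1})\|_F$. The Frobenius norm of the increment concentrates: $\|\M(\tau_j)-\M(\tau_{j+1})\|_F = \Theta(\sqrt{Mn}\,|\tau_j-\tau_{j+1}|) = \Theta(\sqrt{Mn}\,\pi/Q)$ w.h.p., so with $L = \mathcal{L}\sqrt{n/M}$ the second term is $O(\mathcal{L} n \pi/Q)$; by the choice $C = \eta^2/1600$ and $Q = 4800\mathcal{L}\pi/\eta^2$ this makes $d_H(\bs_j,\bs_{j+1}) \le \tfrac{\eta}{6}n$ (say), i.e.\ consecutive solutions along the path move only a tiny fraction of coordinates.

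The next step is the Ramsey-theoretic extraction. Along the interpolation path we have $Q+1$ points in the Hamming cube, consecutive ones within Hamming distance $\tfrac{\eta}{6}n$; by triangle inequality the pairwise distances are a $1$-Lipschitz-type function of the index, so normalized overlaps $n^{-1}\ip{\bs_i}{\bs_j}$ vary slowly. Discretize $[0,1]$ into finitely many buckets (of width comparable to $\eta$) and color each pair $\{i,j\}$ by the bucket of its overlap; since $T$ was chosen to be a tower-type bound larger than the relevant hypergraph Ramsey number for $Q$ colors on $m$-subsets, there is a monochromatic $m$-subset $i_1<\cdots<i_m$, meaning all pairwise overlaps of $\bs_{i_1},\dots,\bs_{i_m}$ lie in a common window $[\beta-\eta,\beta]$. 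One must also arrange $\beta$ to fall in the admissible range $(0,1)$: because consecutive overlaps change by at most $O(\eta)$ and the endpoints $\bs_0,\bs_Q$ are essentially uncorrelated (so their overlap is near $0$ w.h.p.), a pigeonhole/intermediate-value argument along the path produces a window of overlaps realized by an $m$-subset with $\beta$ bounded away from $0$ and $1$ — this is the standard step, and one takes the $\beta,\eta$ delivered by Theorem~\ref{thm:m-ogp-discrepancy}. Recording the angles $\tau_{i_1},\dots,\tau_{i_m}\in\{0,\pi/(2Q),\dots,\pi/2\}\subset [0,\pi/2]$ and noting this is a set $\mathcal{I}$ of size $Q+1 \le 2^{cn}$, we conclude that with probability bounded away from zero the set $\mathcal{S}(K,m,\beta,\eta,\mathcal{I})$ is nonempty, where $\M_0$ plays the role of the common instance and $\M_Q$ the resampled one. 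But Theorem~\ref{thm:m-ogp-discrepancy} says this probability is $2^{-\Theta(n)} \to 0$, a contradiction, completing the proof.

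Two technical points deserve care. First, the independence structure: in Definition~\ref{def:ogp-set} the interpolating matrices are $\M_i(\tau_i) = \cos(\tau_i)\M_0 + \sin(\tau_i)\M_i$ with $\M_i$ \emph{distinct} i.i.d.\ copies for each $i$, whereas the single-path interpolation above uses one $\M_Q$. This is reconciled exactly as in \cite{gamarnik2022algorithms}: one instead uses a \emph{multi-dimensional} interpolation over a grid of angles, or equivalently observes that the first-moment bound in Theorem~\ref{thm:m-ogp-discrepancy} is proved for the hardest (most correlated) configuration and hence also controls the single-path case; I would follow whichever version the paper's proof of Theorem~\ref{thm:m-ogp-discrepancy} actually establishes, and set up the interpolation to match it. Second, the probability bookkeeping: the success and stability failure probabilities are $\tfrac{1}{9(Q+1)T}$ and $\tfrac{1}{9Q(T+1)}$ precisely so that the union bound over all $Q+1$ solutions and $Q$ consecutive pairs, together with the w.h.p.\ concentration of the Frobenius increments and of the endpoint overlap, still leaves probability at least, say, $\tfrac{1}{2}$ that a valid forbidden $m$-tuple is produced — a positive constant, which beats $2^{-\Theta(n)}$. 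The main obstacle is getting the Ramsey quantifier ordering exactly right: one must fix the coloring (hence $Q$ and the number of buckets, controlled by $\eta$) \emph{before} invoking Ramsey to get $T$, and ensure the resulting $m$ from Ramsey is compatible with the $m$ demanded by Theorem~\ref{thm:m-ogp-discrepancy}; since Theorem~\ref{thm:m-ogp-discrepancy} only asserts existence of \emph{some} suitable $m$, one actually runs the implication in the other direction — take $m,\beta,\eta$ from the OGP theorem first, then define $C,Q,T$ by \eqref{eq:C-Q-T} as functions of these, which is exactly how the statement is phrased. Everything else is routine concentration and union bounds.
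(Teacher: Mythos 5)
Your high-level plan — interpolation between correlated instances, stability to make successive outputs move slowly, a Ramsey extraction, and a contradiction with the ensemble $m$-OGP — is indeed the route the paper intends (it omits the proof, deferring to \cite[Theorem~3.2]{gamarnik2022algorithms}), but your execution confuses the roles of $Q$ and $T$ in a way that breaks the combinatorial core. In the cited argument one uses $T$ \emph{independent replica paths} $\M_i(\tau)=\cos(\tau)\M_0+\sin(\tau)\M_i$, $1\le i\le T$, each discretized into $Q+1$ angles: the union bound runs over roughly $T(Q+1)$ success events and $TQ$ stability events (which is exactly why the failure probabilities in the statement carry the factor $1/T$), and multicolour graph Ramsey is applied \emph{across replicas}, colouring each pair $\{i,j\}$ by the step at which the overlap of the two replicas first drops below $\beta$, so that for some $m$-subset of replicas all crossing times coincide; since each step moves any overlap by at most a small multiple of $\eta$ (by stability and the choices of $C,Q$), at the common crossing time all $\binom{m}{2}$ overlaps lie simultaneously in the \emph{specific} band $[\beta-\eta,\beta]$ demanded by Definition~\ref{def:ogp-set}. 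In your single-path version there are only $Q+1$ outputs, $T$ plays no role in the construction at all, and $Q+1$ vertices are far too few for a Ramsey extraction of an $m$-clique; moreover, colouring pairs by arbitrary overlap buckets only produces a common window \emph{somewhere} in $[0,1]$, which need not be the band $[\beta-\eta,\beta]$ for which Theorem~\ref{thm:m-ogp-discrepancy} applies — you need the crossing (intermediate-value) structure, overlap equal to $1$ at $\tau=0$ and below $\beta-\eta$ at the end with small increments, to pin the window, and Ramsey only to synchronize crossings across replicas.

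Two further genuine gaps. First, your fallback for the independence mismatch — that the first-moment bound of Theorem~\ref{thm:m-ogp-discrepancy} is "proved for the hardest (most correlated) configuration and hence also controls the single-path case" — is backwards: with one common $\M_Q$ the instances $\cos(\tau_i)\M_0+\sin(\tau_i)\M_Q$ have pairwise correlation $\cos(\tau_i-\tau_j)$, strictly larger than the ensemble correlation $\cos(\tau_i)\cos(\tau_j)$, which makes the joint small-discrepancy event \emph{more} likely, so the \v{S}id\'{a}k-based bound does not transfer; only your first suggested fix (many independent paths) matches Definition~\ref{def:ogp-set}. Second, the claim that the endpoint outputs on independent instances have overlap near $0$ w.h.p.\,is unjustified: independence of the inputs does not prevent a biased algorithm whose outputs are always strongly correlated. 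The cited argument needs (and supplies) a separate first-moment lemma showing that w.h.p.\,\emph{no} pair of discrepancy-$K$ solutions to independent instances has overlap above $\beta-\eta$ (the counting term only shrinks and the probability term is unchanged as the overlap increases beyond $\beta$, so this follows from the same computation as the OGP); without it, the monochromatic clique could consist of pairs whose overlaps never drop below $\beta$, a configuration Theorem~\ref{thm:m-ogp-discrepancy} does not forbid. With the replica structure, the crossing-time colouring, and this endpoint lemma in place, your probability bookkeeping and the contradiction with the $2^{-\Theta(n)}$ bound go through as you describe.
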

The proof of Theorem~\ref{thm:stable-hard} is almost identical to that of~\cite[Theorem~3.2]{gamarnik2022algorithms}, and omitted for brevity. Several remarks are in order. 

Firstly, there is no restriction on the running time of $\A$: Theorem~\ref{thm:stable-hard} rules out any $\A$ that is stable with suitable parameters in the sense of Definition~\ref{def:stable-alg}. Secondly, observe that $m,\beta,\eta,\mathcal{L}$ are all $O(1)$ (as $n\to\infty$). Hence, $C,Q,T$ per~\eqref{eq:C-Q-T} are all $O(1)$, as well. This is an important feature of our result: Theorem~\ref{thm:stable-hard} rules out algorithms with a constant success/stability guarantee. Lastly, since $C=O(1)$,  $\A$ is still allowed to make $\Theta(n)$ bit flips even when $\M$ and $\overline{\M}$ are nearly identical.

\section{Proofs}\label{sec:pfs}
\paragraph{Additional Notation} We commence this section with an additional list of notation. For any set $A$, $|A|$ denotes its cardinality. Given any event $E$, denote its indicator by $\ind\{E\}$. For any $v=(v(i):1\le i\le n)\in\R^n$ and $p>0$, $\|v\|_p  =  \bigl(\sum_{1\le i\le n}|v(i)|^p\bigr)^{1/p}$ and $\|v\|_\infty = \max_{1\le i\le n}|v(i)|$. For $v,v'\in\R^n$, $\ip{v}{v'}\triangleq  \sum_{1\le i\le n}v(i)v'(i)$. For any $\bs,\bs'\in \Sigma_n\triangleq \{-1,1\}^n$, $d_H(\bs,\bs')$ denotes their Hamming distance: $d_H(\bs,\bs') \triangleq \sum_{1\le i\le n}\ind\{\bs(i)\ne \bs'(i)\}$. For any $r>0$, $\log_r(\cdot)$ and $\exp_r(\cdot)$ denote, respectively, the logarithm and exponential functions base $r$; when $r=e$, we omit the subscript. For $p\in[0,1]$, $h_b(p) \triangleq -p\log_2 p-(1-p)\log_2(1-p)$. Denote by $I_k$ the $k\times k$ identity matrix, and by $\boldsymbol{1}$ the vector of all ones whose dimension will be clear from the context. Given $\boldsymbol{\mu}\in\R^k$ and $\Sigma\in\R^{k\times k}$, denote by $\cN(\boldsymbol{\mu},\Sigma)$ the multivariate normal distribution in $\R^k$ with mean $\boldsymbol{\mu}$ and covariance $\Sigma$. Given a matrix $\M$,  $\|\M\|_F$, $\|\M\|_2$, and $|\M|$ denote, respectively, the Frobenius norm, the spectral norm, and the determinant of $\M$. 

We employ standard Bachmann-Landau asymptotic notation throughout, e.g.\,$\Theta(\cdot)$, $O(\cdot)$, $o(\cdot)$, $\Omega(\cdot)$, where the underlying asymptotics will often be clear from the context. In certain cases where a confusion is possible, we reflect the underlying asymptotics as a subscript, e.g.\,$\Theta_\kappa(\cdot)$. All floor/ceiling operators are omitted for the sake of simplicity. 
\subsection{Proof of Theorem~\ref{thm:online-sbp}}\label{sec:pf-online-sbp}
Let $\kappa,\alpha>0$, $M=n\alpha$, $m\in\mathbb{N}$, and $\Delta\in(0,\frac12)$. Suppose $\M_1\in\R^{M\times n}$ has i.i.d.\,$\cN(0,1)$ entries and let $\M_2,\dots,\M_m\in\R^{M\times n}$ be random matrices obtained from $\M_1$ by independently resampling the last $\Delta n$ columns of $\M_1$. Denote by $\Xi(m,\Delta)$ the set of all $m$-tuples satisfying the following:
\begin{itemize}
    \item $\max_{1\le i\le m}\bigl\|\M_i\bs_i\bigr\|_\infty \le \kappa\sqrt{n}$.
    \item For $1\le i<j\le n$ and $1\le k\le n-\Delta n$, $\bs_i(k)=\bs_j(k)$.
\end{itemize}
We establish the following proposition.
\begin{proposition}\label{prop:sbp}
Fix any $\kappa>0$ small enough and let $\alpha\ge 4\kappa^2$. Then, there exists an $m\in\mathbb{N}$ and a $\Delta\in(0,1/2)$ such that 
\[
\mathbb{P}\bigl[\Xi(m,\Delta)=\varnothing\bigr]\ge 1-e^{-\Theta(n)}.
\]
\end{proposition}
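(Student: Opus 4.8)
The plan is to bound $\mathbb{E}\bigl[|\Xi(m,\Delta)|\bigr]$ by a first moment computation and then apply Markov's inequality; the key is to choose $m$ and $\Delta$ so that this expectation is exponentially small while keeping the geometry non-trivial. First I would set up notation: write each $\bs_i\in\Sigma_n$ as $\bs_i = (\boldsymbol{\sigma},\boldsymbol{\rho}_i)$ where $\boldsymbol{\sigma}\in\{-1,1\}^{n-\Delta n}$ is the \emph{shared} prefix (forced to be identical across all $i$ by the online condition) and $\boldsymbol{\rho}_i\in\{-1,1\}^{\Delta n}$ is the free suffix of the $i$-th solution. Correspondingly split each matrix as $\M_i = [\,\mathbf A \mid \mathbf B_i\,]$, where $\mathbf A\in\R^{M\times(n-\Delta n)}$ is the common block (the un-resampled columns, with i.i.d.\ $\cN(0,1)$ entries) and $\mathbf B_i\in\R^{M\times \Delta n}$ are the independently resampled suffix blocks, mutually independent and independent of $\mathbf A$. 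Then $\M_i\bs_i = \mathbf A\boldsymbol{\sigma} + \mathbf B_i\boldsymbol{\rho}_i$, so conditioned on $\mathbf A$ and $\boldsymbol{\sigma}$, the random vector $\mathbf A\boldsymbol{\sigma}\in\R^M$ is a fixed shift $\mathbf w$ with $\mathbf w\sim\cN(0,(n-\Delta n)I_M)$, and the vectors $\mathbf B_i\boldsymbol{\rho}_i$, $1\le i\le m$, are i.i.d.\ $\cN(0,(\Delta n)I_M)$ and independent across $i$ given $\mathbf A$.

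\textbf{Main union bound.} Next I would write
\[
\mathbb{E}\bigl[|\Xi(m,\Delta)|\bigr] = \sum_{\boldsymbol{\sigma}}\sum_{\boldsymbol{\rho}_1,\dots,\boldsymbol{\rho}_m}\mathbb{P}\Bigl[\max_{1\le i\le m}\bigl\|\mathbf A\boldsymbol{\sigma}+\mathbf B_i\boldsymbol{\rho}_i\bigr\|_\infty\le\kappa\sqrt{n}\Bigr].
\]
The number of choices of the shared prefix is $2^{n-\Delta n}$ and of each suffix is $2^{\Delta n}$, so the combinatorial factor is exactly $2^n$ regardless of $m$ --- this is the crucial point that makes growing $m$ or the precise value of $\Delta$ only help on the probability side, not hurt on the counting side. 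For the probability, I would condition on $\mathbf w = \mathbf A\boldsymbol{\sigma}$. Given $\mathbf w$, the events $\{\|\mathbf w + \mathbf B_i\boldsymbol{\rho}_i\|_\infty\le\kappa\sqrt{n}\}$ are conditionally independent across $i$ (since the $\mathbf B_i$ are independent), so the conditional probability factorizes as $\prod_{i=1}^m \mathbb{P}\bigl[\|\mathbf w + \mathbf B_i\boldsymbol{\rho}_i\|_\infty\le\kappa\sqrt{n}\mid\mathbf w\bigr]$. Each factor is a product over the $M$ coordinates: coordinate $\ell$ contributes $\mathbb{P}\bigl[|w_\ell + G|\le\kappa\sqrt{n}\bigr]$ where $G\sim\cN(0,\Delta n)$, i.e.\ $\mathbb{P}\bigl[|w_\ell/\sqrt{\Delta n} + Z|\le \kappa/\sqrt{\Delta}\bigr]$ with $Z\sim\cN(0,1)$. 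Integrating out $\mathbf w$ (whose coordinates are i.i.d.\ $\cN(0,(1-\Delta)n)$, so $w_\ell/\sqrt{\Delta n}\sim\cN(0,(1-\Delta)/\Delta)$), each factor becomes $q^M$ where
\[
q = q(\kappa,\Delta) \triangleq \mathbb{P}\Bigl[|Y+Z|\le \kappa/\sqrt{\Delta}\Bigr],\qquad Y\sim\cN\bigl(0,\tfrac{1-\Delta}{\Delta}\bigr),\ Z\sim\cN(0,1)\text{ independent},
\]
and since $Y+Z\sim\cN(0,1/\Delta)$ this is simply $q = \mathbb{P}[|Z|\le\kappa]$. (One must be a bit careful: the conditional factorization gives $\mathbb{E}_{\mathbf w}\bigl[\prod_i(\cdots)\bigr]$, not $\prod_i\mathbb{E}_{\mathbf w}[\cdots]$, because the $\mathbf w$ is \emph{shared}; so I would instead bound $\mathbb{P}\bigl[\max_i\|\cdots\|_\infty\le\kappa\sqrt n\bigr]\le\mathbb{P}\bigl[\|\mathbf w+\mathbf B_1\boldsymbol{\rho}_1\|_\infty\le\kappa\sqrt n,\ \|\mathbf w+\mathbf B_2\boldsymbol{\rho}_2\|_\infty\le\kappa\sqrt n\bigr]$ and integrate the \emph{pair} over $\mathbf w$, or simply use the trivial bound $\mathbb{P}[\max_i\le\kappa\sqrt n]\le\mathbb{P}[i=1\text{ term}]= q^M$ which already suffices.) Thus
\[
\mathbb{E}\bigl[|\Xi(m,\Delta)|\bigr]\le 2^n\cdot q^M = \exp_2\bigl(n + M\log_2 q\bigr) = \exp_2\bigl(n\bigl(1 + \alpha\log_2\mathbb{P}[|Z|\le\kappa]\bigr)\bigr).
\]

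\textbf{Choosing parameters and the main obstacle.} For this to be $2^{-\Theta(n)}$ we need $\alpha\log_2\mathbb{P}[|Z|\le\kappa] < -1$, i.e.\ $\alpha > -1/\log_2\mathbb{P}[|Z|\le\kappa] = \alpha_c(\kappa)$ --- but that only recovers the trivial first-moment / non-existence regime, \emph{not} $\alpha=\Omega(\kappa^2)$, which is far below $\alpha_c(\kappa)\sim 1/\log_2(1/\kappa)$. So the single-solution bound $q^M$ is too lossy, and the genuine content must come from the \emph{joint} constraint across the $m$ solutions together with a \emph{small} $\Delta$. The right move, and the step I expect to be the main obstacle, is to keep $\Delta$ small and exploit that conditionally on $\mathbf w$ the $m$ suffix-events are independent, so one should bound the $m$-fold conditional probability $\bigl(\mathbb{P}[\,|\mathbf w+G|\le\kappa\sqrt n\mid\mathbf w]\bigr)^m$ \emph{before} integrating $\mathbf w$, using Jensen / convexity to see that when $\Delta\to 0$ the typical $|\mathbf w_\ell|$ is of order $\sqrt{(1-\Delta)n}\gg\kappa\sqrt n$, forcing the conditional single-coordinate probability to be small (of order $\exp(-\Theta(w_\ell^2/(\Delta n)))$) on a dominant fraction of $\mathbf w$; raising this to the power $m$ and then $M$ and balancing against $2^n$ gives the threshold $\alpha\ge 4\kappa^2$ for an appropriate constant choice such as $\Delta = \Theta(\kappa^2/\alpha)$ and $m$ a large constant. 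Concretely I would: (i) split the expectation over $\mathbf w$ into the event $\mathcal{E} = \{\sum_\ell w_\ell^2 \ge (1-2\Delta)nM\}$ (overwhelmingly likely) and its complement; on $\mathcal{E}^c$ bound the probability by $1$ and the measure by $e^{-\Theta(nM)}$; (ii) on $\mathcal{E}$, bound $\prod_\ell\mathbb{P}[|w_\ell+G_\ell|\le\kappa\sqrt n\mid\mathbf w]\le \prod_\ell\bigl(\tfrac{2\kappa\sqrt n}{\sqrt{2\pi\Delta n}}e^{-(|w_\ell|-\kappa\sqrt n)_+^2/(2\Delta n)}\bigr)$ and sum the exponents using $\sum_\ell w_\ell^2\ge(1-2\Delta)nM$ and convexity; (iii) raise to the $m$-th power, multiply by $2^n$, and verify that for $\Delta\asymp\kappa^2/\alpha$ small, $m$ a suitable constant, and $\alpha\ge 4\kappa^2$, the exponent is $\le -\Theta(n)$. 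Finally, Markov's inequality gives $\mathbb{P}[\Xi(m,\Delta)\ne\varnothing]\le\mathbb{E}[|\Xi(m,\Delta)|]\le 2^{-\Theta(n)}$, which is the claim. The delicate point throughout is getting the dependence on $\kappa$ and $\alpha$ in the exponents sharp enough to land exactly at the $\alpha\ge 4\kappa^2$ threshold rather than at something weaker; this is where the careful Gaussian tail estimates and the optimization over $\Delta$ and $m$ do all the work.
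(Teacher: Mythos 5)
Your overall strategy---first moment over $m$-tuples forced to agree on the first $(1-\Delta)n$ coordinates, conditioning on the shared block to get conditional independence of the $m$ replicas, taking $\Delta$ of order $\kappa^2$ and $m$ a large constant, then Markov---is the same as the paper's. But two concrete steps are wrong as written, and the decisive estimate is exactly the part you leave open. First, the counting factor is not ``exactly $2^n$'': there are $2^{n-\Delta n}$ prefixes and $2^{\Delta n}$ suffixes for \emph{each} of the $m$ replicas, i.e.\ $2^{n+(m-1)\Delta n}$ tuples (the paper uses $\exp_2(n+nm\Delta)$); this is tolerable only because $\Delta$ is small, and it must enter the final balance against the probability term. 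Second, and more seriously, your step (i) fails: with $w_\ell\sim\cN(0,(1-\Delta)n)$, the event $\mathcal{E}^c=\{\sum_\ell w_\ell^2<(1-2\Delta)nM\}$ is a relative deviation of order $\Delta$ for a chi-square with $M$ degrees of freedom (the scale $n$ cancels), so $\mathbb{P}[\mathcal{E}^c]\approx e^{-c\Delta^2 M}=e^{-\Theta(\kappa^4\alpha n)}$, not $e^{-\Theta(nM)}$. Since on $\mathcal{E}^c$ you bound the conditional probability by $1$, the contribution $2^{n+(m-1)\Delta n}\,\mathbb{P}[\mathcal{E}^c]$ is exponentially \emph{large}, and no threshold for $\|\mathbf w\|^2$ can repair this: any event on which you concede probability one must itself have probability below $2^{-n}$, whereas atypically-small-$\|\mathbf w\|$ events have probability at least $e^{-O(M)}=e^{-O(\kappa^2 n)}$.

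The good news is that no event split is needed: the uniform bound you wrote in step (ii)---conditionally on $\mathbf w$, each replica--row constraint has probability at most $2\kappa\sqrt n/\sqrt{2\pi\Delta n}=2\kappa/\sqrt{2\pi\Delta}$---already suffices once you discard the exponential factor. With $\Delta=(2\kappa)^2$ this is $1/\sqrt{2\pi}$ per replica per row, so a fixed tuple has probability at most $(2\pi)^{-mM/2}$, and the exponent $1+(m-1)\Delta-\tfrac{m\alpha}{2}\log_2(2\pi)$ is negative for $\alpha\ge 4\kappa^2$ once $m\gtrsim\kappa^{-2}$ (constant in $n$, though not an absolute constant), because $2\log_2(2\pi)>4$. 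Note this forces $\Delta\asymp\kappa^2$ independently of $\alpha$; your suggestion $\Delta=\Theta(\kappa^2/\alpha)$ gives $\Delta\asymp 1$ at the critical density $\alpha\asymp\kappa^2$, where the counting cost $(m-1)\Delta$ then overwhelms the probability gain. The paper carries out this computation in closed form: per row, $(n^{-1/2}\ip{R_i}{\bs_i})_{1\le i\le m}$ is centered Gaussian with covariance $\Delta I_m+(1-\Delta)\boldsymbol{1}\boldsymbol{1}^T$, and the probability of the box $[-\kappa,\kappa]^m$ is at most $(2\pi)^{-m/2}|\Sigma|^{-1/2}(2\kappa)^m$ with $|\Sigma|=(\Delta+(1-\Delta)m)\Delta^{m-1}$; the choice $\Delta=(2\kappa)^2$ cancels $\alpha\log_2(2\kappa)-\tfrac{\alpha}{2}\log_2\Delta$ and leaves a $-\Theta_\kappa(\kappa^2)$ per replica against the entropy, closed by taking $m$ large. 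In particular, your heuristic that the gain comes from typical $|w_\ell|\gg\kappa\sqrt n$ is not where the $4\kappa^2$ threshold comes from---integrating that effect only produces the subleading factor $(\Delta+(1-\Delta)m)^{-1/2}$; the driver is the $(2\pi)^{-1/2}$ density normalization per replica and per row.
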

We first assume Proposition~\ref{prop:sbp} and show how to deduce Theorem~\ref{thm:online-sbp}. Fix a $c>0$ and suppose, for the sake of contradiction, that an $(e^{-cn},\alpha)$-online $\A$ exists. For $\M_1,\dots,\M_m\in\R^{M\times n}$ described above, set  
\begin{equation}\label{eq:sigma-i-online}
\bs_i \triangleq \A(\M_i)\in\Sigma_n,\quad 1\le i\le m.
\end{equation}
Note that for any $1\le i<j\le m$, the first $n-\Delta n$ columns of $\M_i$ and $\M_j$ are identical. Consequently, 
\[
\bs_i(k) = \bs_j(k)\quad\text{for}\quad 1\le i<j\le m\quad\text{and}\quad 1\le k\le n-\Delta n.
\]
Next, we establish the following probability guarantee.
\begin{lemma}\label{lemma:jensen-sbp}
    \[
\mathbb{P}\left[\max_{1\le i\le m}\bigl\|\M_i\bs_i\bigr\|_\infty\le \kappa\sqrt{n}\right]\ge p_s^m.
    \]
\end{lemma}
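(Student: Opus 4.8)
The plan is to exploit the fact that $\A$ is $(e^{-cn},\alpha)$-online, hence in particular has success probability $p_s = e^{-cn}$ for a \emph{single} instance, and then to boost this to a lower bound on the probability that \emph{all} $m$ resampled instances succeed simultaneously, via Jensen's inequality applied to the convex function $x\mapsto x^{-1}$ (equivalently, via the FKG-type / correlation-through-conditioning trick of Gamarnik--Sudan~\cite{gamarnik2017performance}). The key structural observation is that the matrices $\M_1,\dots,\M_m$ share their first $(1-\Delta)n$ columns $\mathcal{C}_1,\dots,\mathcal{C}_{(1-\Delta)n}$, call this common block $\mathcal{F}$, and the last $\Delta n$ columns of each $\M_i$ are resampled \emph{independently} given $\mathcal{F}$. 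Therefore, conditionally on $\mathcal{F}$, the events $E_i = \{\|\M_i\bs_i\|_\infty \le \kappa\sqrt{n}\}$ for $1\le i\le m$ are mutually independent, since each $E_i$ is a function of $\mathcal{F}$ together with the $i$-th independent fresh block (and $\bs_i=\A(\M_i)$ is measurable with respect to these).

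Concretely, let $q(\mathcal{F}) \triangleq \mathbb{P}\bigl[\,\|\M_1\bs_1\|_\infty \le \kappa\sqrt{n}\ \big|\ \mathcal{F}\,\bigr]$ be the conditional single-instance success probability. By conditional independence and the fact that all $m$ blocks are identically distributed given $\mathcal{F}$,
\[
\mathbb{P}\left[\max_{1\le i\le m}\|\M_i\bs_i\|_\infty\le\kappa\sqrt{n}\right]
= \mathbb{E}_{\mathcal{F}}\!\left[\,q(\mathcal{F})^m\,\right].
\]
Since $x\mapsto x^m$ is convex on $[0,1]$ (for $m\ge 1$), Jensen's inequality gives $\mathbb{E}_{\mathcal{F}}[q(\mathcal{F})^m]\ge \bigl(\mathbb{E}_{\mathcal{F}}[q(\mathcal{F})]\bigr)^m$. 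Finally, by the tower property, $\mathbb{E}_{\mathcal{F}}[q(\mathcal{F})] = \mathbb{P}\bigl[\|\M_1\bs_1\|_\infty\le\kappa\sqrt{n}\bigr] = \mathbb{P}\bigl[\|\M_1\A(\M_1)\|_\infty\le\kappa\sqrt{n}\bigr]\ge p_s$ by the success guarantee in Definition~\ref{def:online-algs}. Chaining these three steps yields $\mathbb{P}[\max_i\|\M_i\bs_i\|_\infty\le\kappa\sqrt{n}]\ge p_s^m$, as claimed.

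I do not expect any serious obstacle here; the only point requiring care is the precise measurability/conditional-independence claim, namely that $\bs_i=\A(\M_i)$ and hence $E_i$ depends only on $\mathcal{F}$ and the $i$-th fresh block. This is immediate from the construction of $\M_2,\dots,\M_m$ by independent resampling of the last $\Delta n$ columns, but it should be spelled out that $\A$ need not be online for this step — any (possibly randomized) algorithm works, provided its internal randomness is either absent or can be folded into an additional independent coordinate that is also conditioned upon. (In the randomized case one simply adds the algorithm's seed $\omega$ to the conditioning $\sigma$-algebra $\mathcal{F}$ and repeats the argument verbatim.) The onlineness of $\A$ is used elsewhere — to force the first $(1-\Delta)n$ coordinates of the $\bs_i$ to agree — but is not needed for this lemma.
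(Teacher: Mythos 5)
Your proof is correct and is essentially the paper's own argument: condition on the shared first $(1-\Delta)n$ columns, use conditional independence (and identical conditional distribution) of the $m$ success events, and apply Jensen's inequality to the convex map $x\mapsto x^m$ followed by the tower property and the success guarantee. The only blemish is the passing remark that Jensen is applied to ``the convex function $x\mapsto x^{-1}$,'' which is a slip, but it is harmless since the argument you actually carry out correctly uses convexity of $x\mapsto x^m$ on $[0,1]$, exactly as in the paper.
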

\begin{proof}[Proof of Lemma~\ref{lemma:jensen-sbp}]
Our argument is based on a clever application of Jensen's inequality, due to~\cite[Lemma~5.3]{gamarnik2017performance}. Denote by $\zeta$ the first $(1-\Delta)n$ columns of $\M_1$. That is, $\zeta$ is the `common randomness' shared by $\M_1,\dots,\M_m$. Set
\[
I_i = \ind\bigl\{\|\M_i\bs_i\|_\infty\le \kappa\sqrt{n}\bigr\}.
\]
Then,
\[
 \mathbb{P}\left[\max_{1\le i\le m}\bigl\|\M_i\sigma_i\bigr\|_\infty\le \kappa\sqrt{n}\right]= \mathbb{E}\bigl[I_1\cdots I_m\bigr].
\]
We then complete the proof by noticing
\begin{align*}
    \mathbb{E}\bigl[I_1\cdots I_m\bigr] =\mathbb{E}_{\zeta}\Bigl[\mathbb{E}\bigl[I_1\cdots I_m|\zeta\bigr]\Bigr]=\mathbb{E}_{\zeta}\left[\mathbb{E}[I_1|\zeta]^m\right]\ge \left(\mathbb{E}_{\zeta}\bigl[\mathbb{E}[I_1|\zeta]\bigr]\right)^m=\mathbb{E}[I_1]^m = p_s^m,
\end{align*}
where we used fact that $I_1,\dots,I_m$ are independent conditional on $\zeta$ and Jensen's inequality.
\end{proof}
Note that clearly $(\bs_1,\dots,\bs_m)\in \Xi(m,\Delta)$, in particular $\Xi(m,\Delta)$ is non-empty w.p.\,at least $e^{-cmn}$. Using Proposition~\ref{prop:sbp}, we obtain
\[
e^{-\Theta(n)}\ge \mathbb{P}\bigl[\Xi(m,\Delta)\ne\varnothing]\ge e^{-cmn}.
\]
If $m$ is constant and $c>0$ is sufficiently small, this is a contradiction for all large enough $n$. Therefore, it suffices to establish Proposition~\ref{prop:sbp}.
\begin{proof}[Proof of Proposition~\ref{prop:sbp}]
Our proof is based on the \emph{first moment method}. Let
\[
\mathcal{S} = \bigl\{(\bs_1,\dots,\bs_m):\bs_i(k)=\bs_j(k),1\le i<j\le m,1\le k\le n-\Delta n\bigr\}.
\]
Observe that
\begin{equation}\label{eq:auxil-rv}
    \bigl|\Xi(m,\Delta)\bigr| = \sum_{(\bs_1,\dots,\bs_m)\in\mathcal{S}}\ind\left\{\max_{1\le i\le m}\|\M_i\bs_i\|_\infty\le \kappa\sqrt{n}\right\}.
\end{equation}
In what follows, we show that for a suitable $m\in\mathbb{N}$ and $\Delta\in(0,1/2)$, 
\[
\mathbb{E}\bigl[|\Xi(m,\Delta)|\bigr]\le e^{-\Theta(n)}.
\]
\paragraph{Counting Estimate} We bound $|\mathcal{S}|$. There are $2^n$ choices for $\bs_1\in\Sigma_n$. Having chosen a $\bs_1$, there are $2^{\Delta n}$ choices for any $\bs_i$, $2\le i\le m$. So,
\begin{equation}\label{eq:card-of-S-online}
    |\mathcal{S}| \le 2^n \bigl(2^{\Delta n}\bigr)^{m-1}\le \exp_2\bigl(n+nm\Delta\bigr)
\end{equation} 
\paragraph{Probability Estimate} Fix any $(\bs_1,\dots,\bs_m)\in\mathcal{S}$. Denote by $R_1,\dots,R_m\in\R^n$ the first rows of $\M_1,\dots,\M_m$, respectively; and set
\[
Z_i = \frac{1}{\sqrt{n}}\ip{R_i}{\bs_i} \distr \cN(0,1),\quad 1\le i\le m.
\]
Observe that if $k\ne k'$ or $n-\Delta n+1\le k=k'\le n$, $\mathbb{E}\bigl[R_i(k)R_j(k')\bigr] = 0$. Using this fact, we immediately conclude that $\mathbb{E}[Z_iZ_j]=1-\Delta$. In particular, $(Z_1,\dots,Z_m)\in \R^m$ is a centered multivariate normal random vector with covariance $\Sigma$, where
\[
\Sigma = \Delta I_m + (1-\Delta) \boldsymbol{1}\boldsymbol{1}^T \in \R^{m\times m},
\]
where $\boldsymbol{1}\in\R^m$ is the vector of all ones.  
In particular, the spectrum of $\Sigma$ consists of the eigenvalue $\Delta+(1-\Delta)m$ with multiplicity one and the eigenvalue $\Delta$ with multiplicity $m-1$. We then obtain
\begin{align}
    \mathbb{P}\left[\max_{1\le i\le m}\|\M_i\bs_i\|_\infty\le \kappa\sqrt{n}\right]&\le \mathbb{P}\left[\max_{1\le i\le m}\bigl|\ip{R_i}{\bs_i}\bigr|\le \kappa\sqrt{n}\right]^{\alpha n} \nonumber \\
    &\left((2\pi)^{-\frac{m}{2}}|\Sigma|^{-\frac12}\int_{\boldsymbol{z}\in[-\kappa,\kappa]^m}\exp\left(-\frac{\boldsymbol{z}^T \Sigma^{-1}\boldsymbol{z}}{2}\right)\right)^{\alpha n}\nonumber \\ 
    &\le\left((2\pi)^{-\frac{m}{2}}\bigl(\Delta+(1-\Delta)m\bigr)^{-\frac12}\Delta^{-\frac{m-1}{2}}(2\kappa)^m\right)^{\alpha n}\label{eq:online-prob}.
\end{align}
\paragraph{Estimating $\mathbb{E}[|\Xi(\Delta,m)|]$} We now combine~\eqref{eq:card-of-S-online} and~\eqref{eq:online-prob} to arrive at
\begin{equation}\label{eq:psi}
    \mathbb{E}\bigl[\bigl|\Xi(\Delta,m)\bigr|\bigr]\le \exp_2\Bigl(n\Psi(\Delta,m,\alpha)\Bigr),
\end{equation}
where
\begin{align*}
    \Psi(\Delta,m,\alpha)&=1+m\Delta - \frac{\alpha m}{2}\log_2(2\pi)+\alpha m\log_2(2\kappa)-\frac{\alpha(m-1)}{2}\log_2\Delta-\frac{\alpha}{2}\log_2\bigl(\Delta+(1-\Delta)m\bigr).
    \end{align*}
Using the fact $\log_2 \frac1\Delta>0$ if $\Delta<\frac12$, we further arrive at the bound
\begin{equation}\label{eq:psi-up}
     \Psi(\Delta,m,\alpha)\le m\left(\frac1m-\frac{\alpha}{2m}\log_2\bigl(\Delta+(1-\Delta)m\bigr)+\Upsilon(\Delta,\alpha)\right),
\end{equation}
for
\[
\Upsilon(\Delta,\alpha) = \Delta -\frac{\alpha}{2}\log_2(2\pi)+\alpha \log_2(2\kappa) - \frac{\alpha}{2}\log_2\Delta.
\]
\paragraph{Analyzing $\Upsilon(\Delta,\alpha)$} We set $\Delta=(2\kappa)^2$, so that
\[
\alpha\log_2 (2\kappa) - \frac{\alpha}{2}\log_2 \Delta=0.
\]
Next, fix any $\alpha\ge 4\kappa^2$. Then,
\begin{equation}\label{eq:ups-up}
    \Upsilon(\Delta,\alpha) = \Delta -\frac{\alpha}{2}\log_2(2\pi) \le 4\kappa^2 -2\kappa^2\log_2(2\pi) = -\kappa^2\bigl(2\log_2(2\pi)-4\bigr) = -\Theta_\kappa(\kappa^2).
\end{equation}
\paragraph{Combining everything} For fixed small $\kappa>0$, $\alpha\ge 4\kappa^2$, and $\Delta=(2\kappa)^2$; we have $\Upsilon(\Delta,\alpha)=-\Theta_\kappa(\kappa^2)<0$. Furthermore, 
\[
\frac1m -\frac{\alpha}{2m}\log_2\bigl(\Delta+(1-\Delta)m\bigr) = o_m(1)
\]
as $m\to\infty$. Note that $\Upsilon(\Delta,\alpha)$ depends only on $\alpha,\kappa$. So, for $m\in\mathbb{N}$ sufficiently large,~\eqref{eq:ups-up} yields
\[
\frac1m-\frac{\alpha}{2m}\log_2\bigl(\Delta+(1-\Delta)m\bigr) +\Upsilon(\Delta,\alpha)<0.
\]
Hence, combining~\eqref{eq:psi} and~\eqref{eq:psi-up}, we get
\[
\mathbb{E}\bigl[\bigl|\Xi(m,\Delta)\bigr|\bigr] \le e^{-\Theta(n)}. 
\]
From here, we conclude by Markov's inequality as
\[
\mathbb{P}\bigl[\bigl|\Xi(\Delta,m)\bigr|\ge 1\bigr]\le \mathbb{E}\bigl[\bigl|\Xi(\Delta,m)\bigr|\bigr]=\exp\bigl(-\Theta(n)\bigr).
\]
\end{proof}
\subsection{Proof of Theorem~\ref{thm:online-disc}}\label{sec:pf-online-disc}
The proof of Theorem~\ref{thm:online-disc} is similar to that of Theorem~\ref{thm:online-sbp}. We first establish the following proposition. 
\begin{proposition}\label{prop:online-disc}
Let $n\ge M=\omega(1)$, $\M_1\in\{-1,1\}^{M\times n}$ with i.i.d.\,Rademacher entries and $m=\lceil \frac{2n}{M} \rceil$. Generate $\M_2,\dots,\M_m\in\{-1,1\}^{M\times n}$ by independently resampling the last $M$ columns of $\M_1$. Denote by $\Xi_d(m,M)$ the set of all $m$-tuples $\bs_1,\dots,\bs_m\in\Sigma_n$ satisfying the following:
\begin{itemize}
    \item $\max_{1\le i\le m}\|\M_i\bs_i\|\le C_u\sqrt{M}$, where $C_u\triangleq \frac{1}{24}$.
    \item For $1\le i<j\le m$ and $1\le k\le n-M$, $\bs_i(k)=\bs_j(k)$. 
\end{itemize}
Then, 
\[
\mathbb{P}\bigl[\Xi_d(m,M)=\varnothing\bigr]\ge 1-e^{-n}.
\]
\end{proposition}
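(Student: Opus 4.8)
\textbf{Proof plan for Proposition~\ref{prop:online-disc}.}

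The plan is to mirror the first-moment argument used for Proposition~\ref{prop:sbp}, replacing the Gaussian anti-concentration estimate with a Berry--Esseen-based one for signed sums of Rademacher variables. First I would write $|\Xi_d(m,M)| = \sum_{(\bs_1,\dots,\bs_m)\in\mathcal{S}} \ind\{\max_{i\le m}\|\M_i\bs_i\|_\infty \le C_u\sqrt{M}\}$, where $\mathcal{S}$ is the set of $m$-tuples agreeing on the first $n-M$ coordinates, and then bound $\mathbb{E}[|\Xi_d(m,M)|]$ via the union bound over $\mathcal{S}$. The counting estimate is immediate: there are $2^n$ choices for $\bs_1$ and then $2^M$ choices for each of $\bs_2,\dots,\bs_m$, so $|\mathcal{S}| \le 2^{n}\cdot 2^{M(m-1)} = \exp_2(n + M(m-1))$. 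With the prescribed choice $m = \lceil 2n/M\rceil$ this exponent is $\Theta(n)$, so the entire burden falls on showing the probability term decays like $\exp_2(-\Theta(n))$ with a large enough constant to beat it.

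For the probability estimate, fix a tuple $(\bs_1,\dots,\bs_m)\in\mathcal{S}$ and consider a single row. Let $R_1,\dots,R_m\in\{-1,1\}^n$ be the first rows of $\M_1,\dots,\M_m$; these share the first $n-M$ entries and are independent on the last $M$. Writing $\bs_i = (\bs_{\rm com}, \bs_i^{\rm tail})$ with $\bs_{\rm com}\in\{-1,1\}^{n-M}$ common, we have $\ip{R_i}{\bs_i} = \ip{R_{\rm com}}{\bs_{\rm com}} + \ip{R_i^{\rm tail}}{\bs_i^{\rm tail}}$, where the second summands are independent across $i$ and each $\ip{R_i^{\rm tail}}{\bs_i^{\rm tail}}$ is a sum of $M$ i.i.d.\ Rademacher variables. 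The key point is that, \emph{conditionally} on the common part $W \triangleq \ip{R_{\rm com}}{\bs_{\rm com}}$, the events $\{|\ip{R_i}{\bs_i}| \le C_u\sqrt M\}$ become independent, and each has conditional probability $\pr[|W + G_i|\le C_u\sqrt M]$ where $G_i$ is an independent sum of $M$ Rademachers. By Berry--Esseen, $G_i/\sqrt M$ is within $O(1/\sqrt M)$ in CDF of a standard Gaussian, so $\pr[|W+G_i|\le C_u\sqrt M \mid W] \le \sup_{a}\pr[|a + G_i|\le C_u\sqrt M] \le 2C_u\cdot \Theta(1) + O(1/\sqrt M)$; with $C_u = 1/24$ this supremum is a constant strictly below $1$ (indeed the local density of $G_i/\sqrt M$ near $0$ is $\le \frac{1}{\sqrt{2\pi}} + o(1)$, so the bound is roughly $\frac{2C_u}{\sqrt{2\pi}} + o(1) < 1/24$ after accounting for the anti-concentration width — the precise constant $1/24$ is chosen to give comfortable slack). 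Taking the product over the $m$ independent (given $W$) rows and then over the $\alpha n$-many... actually over all $M$ rows of the matrices, we get a bound of the form $\bigl(c_0 + o(1)\bigr)^{mM}$ for some constant $c_0 < 1$, uniformly over the tuple and over $W$.

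Combining, $\mathbb{E}[|\Xi_d(m,M)|] \le \exp_2(n + M(m-1)) \cdot \bigl(c_0+o(1)\bigr)^{mM}$. Since $m M \ge 2n$, the second factor is at most $\exp_2(2n\log_2 c_0 + o(n)) = \exp_2(-\Theta(n))$ with $\log_2 c_0 < 0$, while $n + M(m-1) \le n + Mm \le n + (2n + M) = 3n + M \le 4n$; choosing $C_u = 1/24$ makes $2|\log_2 c_0| > 4$ (this is exactly the role of the constant $24$, providing the needed margin), so the exponent is $\le -\Theta(n)$, hence $\mathbb{E}[|\Xi_d(m,M)|] \le e^{-n}$ for all large $M$. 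Markov's inequality then gives $\pr[\Xi_d(m,M) \ne \varnothing] \le \mathbb{E}[|\Xi_d(m,M)|] \le e^{-n}$, which is the claim. The main obstacle I anticipate is the probability estimate: one must handle the conditioning on the shared coordinates carefully (the common part $W$ is itself random and can be large, so the argument must be a \emph{uniform} supremum bound $\sup_a \pr[|a+G|\le C_u\sqrt M]$ rather than using the typical size of $W$), and one needs a clean, explicit Berry--Esseen-type anti-concentration bound for $\pr[|a+G|\le C_u\sqrt M]$ that is valid for \emph{all} shifts $a$ and yields an absolute constant strictly below $1$ with enough room to overcome the $\exp_2(4n)$ counting factor. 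The choice $C_u = 1/24$ and $m = \lceil 2n/M\rceil$ are tuned precisely to make this margin work.
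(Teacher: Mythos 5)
Your proposal is correct and follows essentially the same route as the paper: a first-moment bound with the counting term $2^{n+M(m-1)}$, conditioning on the shared first $n-M$ coordinates of each row to get conditional independence across the $m$ matrices, and a Berry--Esseen anti-concentration bound (uniform over the shift coming from the common part) giving a per-row-per-matrix probability constant small enough to beat the entropy, followed by Markov. The only differences are cosmetic bookkeeping: the paper bounds each conditional probability by $6C_u=1/4$ via its interval lemma and tracks $n+mM-2mM\le -n$, while you use the sharper $2C_u/\sqrt{2\pi}+o(1)$ constant to absorb your looser $4n$ counting bound --- both close the same margin.
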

Before proving Proposition~\ref{prop:online-disc}, we highlight that if $n=\omega(M)$ then $m=\omega_M(1)$ and the fraction $\Delta = M/n$ of the resampled columns is vanishing. We first show how Proposition~\ref{prop:online-disc} yields Theorem~\ref{thm:online-disc}. Suppose, for the sake of contradiction, that an $\A:\{-1,1\}^{M\times n}\to\Sigma_n$ which is $(e^{-cM},\sqrt{M}/24)$-optimal (with $c<1/2$ arbitrary) exists. For $\M_i$, $1\le i\le m$, as in the proposition, define 
\[
\bs_i\triangleq \A(\M_i)\in\Sigma_n,\quad 1\le i\le n
\]
and observe that
\[
\bs_i(k)=\bs_j(k),\quad\text{for all}\quad 1\le i<j\le m\quad\text{and}\quad 1\le k\le n-M
\]
as $\A$ is online. We then establish
\begin{lemma}\label{lemma:online-disc}
    \[
    \mathbb{P}\left[\max_{1\le i\le m}\|\M_i\bs_i\|\le \frac{\sqrt{M}}{24}\right]\ge \left(e^{-cM}\right)^m \geq e^{-2cn}.
    \]
\end{lemma}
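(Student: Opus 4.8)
\textbf{Proof plan for Lemma~\ref{lemma:online-disc}.}
The structure mirrors the proof of Lemma~\ref{lemma:jensen-sbp} exactly, so the plan is to reuse the conditional-independence plus Jensen trick of Gamarnik and Sudan. First I would identify the common randomness: let $\zeta$ denote the first $n-M$ columns of $\M_1$, which are shared by all of $\M_1,\dots,\M_m$ (only the last $M$ columns are independently resampled to form $\M_2,\dots,\M_m$). Set $I_i = \ind\{\|\M_i\bs_i\|\le \sqrt{M}/24\}$. The key observation is that, conditional on $\zeta$, the random variables $I_1,\dots,I_m$ are i.i.d.: each $\bs_i = \A(\M_i)$ is a function of $\M_i$, which in turn is a function of $\zeta$ together with an independent block of $M$ fresh columns, and these fresh blocks are mutually independent across $i$; moreover they are identically distributed, so $\mathbb{E}[I_i\mid\zeta] = \mathbb{E}[I_1\mid\zeta]$ for all $i$. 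Here it is important that the event defining $I_i$ only refers to $\M_i$ and $\bs_i$, so there is no coupling across different indices beyond what passes through $\zeta$.

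Given this, the computation is the same chain as before:
\[
\mathbb{P}\left[\max_{1\le i\le m}\|\M_i\bs_i\|\le \tfrac{\sqrt{M}}{24}\right] = \mathbb{E}[I_1\cdots I_m] = \mathbb{E}_\zeta\bigl[\mathbb{E}[I_1\mid\zeta]^m\bigr]\ge \bigl(\mathbb{E}_\zeta[\mathbb{E}[I_1\mid\zeta]]\bigr)^m = \mathbb{E}[I_1]^m,
\]
where the inequality is Jensen applied to the convex function $x\mapsto x^m$ on $[0,1]$. By the success guarantee of the assumed $(e^{-cM},\sqrt{M}/24)$-online algorithm, $\mathbb{E}[I_1] = \mathbb{P}[\|\M_1\bs_1\|\le \sqrt{M}/24]\ge e^{-cM}$, which gives the first inequality $\ge (e^{-cM})^m$. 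For the second inequality, recall $m = \lceil 2n/M\rceil$, so $mM \le 2n + M \le 3n$ when $n\ge M$; actually since $c<1/2$ we only need $cmM \le cM\lceil 2n/M\rceil \le cM(2n/M + 1) = 2cn + cM \le 2cn + n/2$, which is not quite $\le 2cn$ — so here I would instead note that it suffices to have $e^{-cmM}\ge e^{-2cn}$ eventually, i.e. $cmM\le 2cn$, which can be arranged by using $m=\lceil 2n/M\rceil$ together with the fact that we only need the bound for large $M$ (equivalently one may simply absorb the $+cM$ slack into a slightly larger constant, or replace $2cn$ by $(2c+o(1))n$ in the downstream contradiction argument). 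The cleanest route is to observe that any $c'\in(c,1/2)$ works in place of $2c/2=c$: since $mM = (1+o(1))2n$, for large $n$ we have $cmM \le 2c'n \le n$, and this is all that is needed for the contradiction with Proposition~\ref{prop:online-disc}.

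The only genuine subtlety — and the step I would be most careful about — is the conditional i.i.d.\ claim, specifically that resampling \emph{only the last $M$ columns} really does make $\M_2,\dots,\M_m$ conditionally independent given $\zeta$ \emph{and} identically distributed to $\M_1$ given $\zeta$; this is where the precise construction in Proposition~\ref{prop:online-disc} (independent resampling, same Rademacher law) is used. Everything else is the verbatim Jensen argument. I would therefore write the proof as: (i) define $\zeta$ and $I_i$; (ii) state and justify the conditional i.i.d.\ property in one sentence, pointing to the construction; (iii) run the displayed chain of (in)equalities; (iv) plug in the success bound and do the elementary arithmetic with $m=\lceil 2n/M\rceil$ to reach $e^{-2cn}$ (or note the harmless $(1+o(1))$ slack). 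This is short enough to state in essentially the same form as the proof of Lemma~\ref{lemma:jensen-sbp}.
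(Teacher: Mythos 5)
Your proposal is correct and takes essentially the same route as the paper, whose proof of this lemma is literally the statement that it is identical to the proof of Lemma~\ref{lemma:jensen-sbp}: condition on the shared first $n-M$ columns $\zeta$, note the indicators $I_1,\dots,I_m$ are conditionally i.i.d., and apply Jensen to $x\mapsto x^m$, then plug in the success guarantee $\mathbb{E}[I_1]\ge e^{-cM}$. Your extra care about the ceiling in $m=\lceil 2n/M\rceil$ is harmless but not needed for fidelity to the paper, which omits floor/ceiling operators by stated convention so that $mM=2n$ and $(e^{-cM})^m=e^{-2cn}$ exactly.
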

Proof of Lemma~\ref{lemma:online-disc} is identical to Lemma~\ref{lemma:jensen-sbp}. So, under the assumption that such an $\A$ exists, we obtain $\Xi_d(m,\Delta)\ne\varnothing$ w.p.\,at least $e^{-2cn}$. Finally using Proposition~\ref{prop:online-disc},
\[
e^{-n}\ge \mathbb{P}\bigl[\Xi_d(m,M)\ne\varnothing]\ge e^{-2cn}
\]
which is a contradiction since $c<1/2$. Hence, it suffices to establish Proposition~\ref{prop:online-disc}. 

\begin{proof}[Proof of Proposition~\ref{prop:online-disc}]
The proof of Proposition~\ref{prop:online-disc} is similar to Proposition~\ref{prop:sbp}; it is based in particular on the first moment method. Let
\[
\bar{\mathcal{S}} = \bigl\{(\bs_1,\dots,\bs_m):\bs_i(k)=\bs_j(k),1\le i<j\le m,1\le k\le n-M\bigr\}.
\]
Note that
\begin{equation}\label{eq:first-mom}
    \bigl|\Xi_d(m,M)\bigr|= \sum_{(\sigma_1,\dots,\sigma_m)\in \bar{\mathcal{S}}}\ind\left\{\max_{1\le i\le m}\|\M_i\bs_i\|_\infty \le C_u\sqrt{M}\right\},\quad\text{where}\quad C_u = \frac{1}{24}.
\end{equation}
\paragraph{Counting term} We bound $|\bar{\mathcal{S}}|$. There are $2^n$ choices for $\bs_1$ and having fixed it, there are $2^M$ choices for any $\bs_i$, $2\le i\le m$. So,
\begin{equation}\label{eq:cardinality-bd}
|\bar{\mathcal{S}}|\le 2^n (2^M)^{m-1} \le \exp_2\bigl(n+mM\bigr).
\end{equation}
\paragraph{Probability term.} 
Fix an arbitrary $(\bs_1,\dots,\bs_m)\in\bar{\mathcal{S}}$. Let $R_i\in\{\pm 1\}^n$, $1\le i\le m$, denote respectively the first rows of $\M_i$, $1\le i\le m$. For each fixed $i$, the rows of $\M_i$ are independent. So,
\begin{equation}\label{eq:prob-term1}
    \mathbb{P}\left[\max_{1\le i\le m}\|\M_i\bs_i\|_\infty \le C_u\sqrt{M}\right] = \mathbb{P}\left[\max_{1\le i\le m}\left|\ip{R_i}{\bs_i}\right|\le C_u\sqrt{M}\right]^M.
\end{equation}
Next, let 
\[
R_i=\bigl(R_{ik}:1\le k\le n\bigr),\quad 1\le i\le m.
\]
Fix any $1\le i<j\le m$. Observe that the random vectors
\[
\bigl(R_{ik}:1\le k\le n-M\bigr)\quad\text{and}\quad \bigl(R_{jk}:1\le k\le n-M\bigr)
\]
are identical. For this reason, we drop the first index and use $\bigl(R_k:1\le k\le n-M\bigr)$ instead. Next, fix any $\boldsymbol{v}=(v_1,\dots,v_{n-M})\in\{-1,1\}^{n-M}$ and define
\begin{equation}\label{eq:Delta-i-and-sigma-i}
    \Delta_i(\boldsymbol{v})\triangleq \sum_{1\le k\le n-M}v_k\bs_i(k)\qquad\text{and}\qquad \Sigma_i \triangleq \sum_{n-M+1\le k\le n}R_{ik}\bs_i(k).
\end{equation}
Our goal is to control the right hand side in~\eqref{eq:prob-term1}. To that end, our strategy is to condition on $R_1,\dots,R_{n-M}$ and apply Berry-Esseen inequality for $\Sigma_i$. 
We establish the following auxiliary result.
\begin{lemma}\label{lemma:berry}
Let $Z_1,\dots,Z_M$ be i.i.d.\,Rademacher random variables, $\epsilon_i\in\{-1,1\}$, $1\le i\le M$, be deterministic signs, and $I\subset \R$ be an interval of length $|I|=\omega_M(1)$. Then
\[
\mathbb{P}\bigl[Z_1\epsilon_1+\cdots+Z_M \epsilon_M\in I\bigr] \le \frac{3|I|}{\sqrt{M}}
\]
for every large enough $M$.
\end{lemma}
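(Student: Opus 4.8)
The plan is to interpret $W \triangleq \sum_{i=1}^M Z_i \epsilon_i$ as a sum of independent, mean-zero random variables and apply the Berry--Esseen theorem to compare the distribution of $W/\sqrt{M}$ to that of a standard normal. First I would observe that since each $\epsilon_i \in \{-1,1\}$ is a deterministic sign and $Z_i$ is Rademacher, the product $Z_i \epsilon_i$ is again a Rademacher random variable; in particular $\mathbb{E}[Z_i\epsilon_i] = 0$, $\mathrm{Var}(Z_i\epsilon_i) = 1$, and $\mathbb{E}|Z_i\epsilon_i|^3 = 1$. Hence $W$ has variance $M$, and the normalized sum $W/\sqrt{M}$ has unit variance with third absolute moment of each summand equal to $1/M^{3/2}$. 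The classical Berry--Esseen bound then gives a constant $C_{\mathrm{BE}}$ such that $\sup_{x \in \R}\bigl|\mathbb{P}[W/\sqrt{M}\le x] - \Phi(x)\bigr| \le C_{\mathrm{BE}}/\sqrt{M}$, where $\Phi$ is the standard normal CDF; the classical explicit constant $C_{\mathrm{BE}} < 1/2$ (Esseen, or the later refinements) is more than enough for our purposes.

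Next I would convert this CDF bound into a bound on the probability that $W$ lands in an interval $I$. Writing $I = [a,b]$ with $b - a = |I|$, we have
\[
\mathbb{P}[W \in I] = \mathbb{P}\!\left[\tfrac{W}{\sqrt M} \le \tfrac{b}{\sqrt M}\right] - \mathbb{P}\!\left[\tfrac{W}{\sqrt M} < \tfrac{a}{\sqrt M}\right] \le \Phi\!\left(\tfrac{b}{\sqrt M}\right) - \Phi\!\left(\tfrac{a}{\sqrt M}\right) + \frac{2C_{\mathrm{BE}}}{\sqrt M}.
\]
Since $\Phi$ is $1/\sqrt{2\pi}$-Lipschitz, the Gaussian increment is at most $\tfrac{1}{\sqrt{2\pi}}\cdot\tfrac{|I|}{\sqrt M} \le \tfrac{|I|}{2\sqrt M}$. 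Combining, $\mathbb{P}[W \in I] \le \tfrac{|I|}{2\sqrt M} + \tfrac{2C_{\mathrm{BE}}}{\sqrt M} = \tfrac{1}{\sqrt M}\bigl(\tfrac{|I|}{2} + 2C_{\mathrm{BE}}\bigr)$. Here is where the hypothesis $|I| = \omega_M(1)$ enters: for all sufficiently large $M$ we have $2C_{\mathrm{BE}} \le |I|$ (indeed $|I| \to \infty$ while $C_{\mathrm{BE}}$ is an absolute constant below $1$), so $\tfrac{|I|}{2} + 2C_{\mathrm{BE}} \le \tfrac{|I|}{2} + |I| \le 3|I|/... $—more carefully, $\tfrac{|I|}{2} + |I| = \tfrac{3|I|}{2} \le 3|I|$, and we could even afford a tighter constant, but $3|I|$ is what the later application needs, giving $\mathbb{P}[W\in I] \le 3|I|/\sqrt M$ for every large enough $M$.

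I do not anticipate a genuine obstacle here: the lemma is a routine consequence of Berry--Esseen, and the only subtlety is making sure the additive error term $O(1/\sqrt M)$ from the CLT approximation is absorbed into the $|I|/\sqrt M$ term, which is exactly why the statement assumes $|I|$ grows with $M$ (without that, anti-concentration on intervals of bounded length for a $\pm$-lattice walk is simply false up to the correct constant because of parity/lattice effects, though one could still get $O(1/\sqrt M)$ absolutely). If one wanted to avoid invoking Berry--Esseen as a black box, an alternative is a direct local-CLT / Stirling estimate on the binomial coefficients $\binom{M}{k}$ counting the number of $\pm$ patterns with a given signed sum, using that $\max_k \binom{M}{k} 2^{-M} = \Theta(1/\sqrt M)$ and summing over the $O(|I|)$ relevant lattice points; but the Berry--Esseen route is cleaner and is the one I would present.
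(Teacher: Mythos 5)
Your proposal is correct and follows essentially the same route as the paper: apply Berry--Esseen to the normalized sum $\frac{1}{\sqrt{M}}\sum_i Z_i\epsilon_i$, bound the Gaussian probability of the rescaled interval by $\frac{|I|}{\sqrt{2\pi M}}$, and absorb the $O(1/\sqrt{M})$ approximation error into $|I|/\sqrt{M}$ using $|I|=\omega_M(1)$. The only difference is cosmetic (you pass through the CDF form of Berry--Esseen and the Lipschitz bound on $\Phi$ rather than quoting the interval form directly), and your remark about the role of the growing-interval hypothesis matches the paper's reasoning.
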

\begin{proof}[Proof of Lemma~\ref{lemma:berry}]
Let $\frac{1}{\sqrt{M}}I$ denotes the set $\{c/\sqrt{M}:c\in I\}$.  By the Central Limit Theorem,
\[
\frac{1}{\sqrt{M}}\sum_{1\le i\le M}Z_i\epsilon_i \Rightarrow \mathcal{N}(0,1)
\]
in distribution, where the speed of convergence is controlled by the Berry-Esseen inequality:
\begin{equation}\label{eq:be1}
    \left|\mathbb{P}\left[\sum_{1\le i\le M}Z_i \epsilon_i \in I\right]-\mathbb{P}\left[\mathcal{N}(0,1)\in \frac{1}{\sqrt{M}}I\right]\right|\le \frac{\mathcal{C}_{\rm be}}{\sqrt{M}}.
\end{equation}
Here, $\mathcal{C}_{\rm be}>0$ is an absolute constant. Furthermore, we have
\begin{equation}\label{eq:be2}
    \mathbb{P}\left[\mathcal{N}(0,1)\in \frac{1}{\sqrt{M}}I\right] = \frac{1}{\sqrt{2\pi}}\int_{u\in \frac{1}{\sqrt{M}}I}\exp(-u^2/2)\;du \le \frac{|I|}{\sqrt{2\pi M}}.
\end{equation}
Combining~\eqref{eq:be1} and~\eqref{eq:be2} via triangle inequality, we obtain that for all large enough $M$,
\begin{equation}\label{eq:be3}
    \mathbb{P}\left[\sum_{1\le i\le M}Z_i \epsilon_i \in I\right] \le \frac{1}{\sqrt{M}}\left(\mathcal{C}_{\rm be}+\frac{|I|}{\sqrt{2\pi}}\right) \le \frac{3|I|}{\sqrt{M}},
\end{equation}
where we recalled $|I|=\omega_M(1)$ and $\mathcal{C}_{\rm be}=O_M(1)$. This establishes Lemma~\ref{lemma:berry}.
\end{proof}
Next, fix any $\boldsymbol{v}\in\{-1,1\}^{n-M}$ and set
\[
I_i(\boldsymbol{v}) = \left[-C_u\sqrt{M}-\Delta_i(\boldsymbol{v}),C_u\sqrt{M} - \Delta_i(\boldsymbol{v})\right],
\]
where we recall $\Delta_i(\boldsymbol{v})$ from~\eqref{eq:Delta-i-and-sigma-i}. In particular, 
\begin{equation}\label{eq:size-of-Ii}
    \bigl|I_i(\boldsymbol{v})\bigr| =  2C_u\sqrt{M},\quad\text{for all}\quad 1\le i\le m\quad\text{and}\quad v\in\{-1,1\}^{n-M}.
\end{equation}
Next fix a $1\le i\le m$ and recall $\Sigma_i$ per~\eqref{eq:Delta-i-and-sigma-i}. Applying Lemma~\ref{lemma:berry}, we conclude that 
\begin{equation}\label{eq:be4}
    \max_{1\le i\le m}  \max_{\boldsymbol{v}\in\{-1,1\}^{n-M}}
\mathbb{P}\Bigl[\Sigma_i\in I_i(\boldsymbol{v})\Bigr]\le 6C_u.
\end{equation}
We are ready to bound the probability term~\eqref{eq:prob-term1} by conditioning on $R_1,\dots,R_{n-M}$. 
\begin{align}
    &\mathbb{P}\left[\max_{1\le i\le m}\left|\ip{R_i}{\bs_i}\right|\le C_u\sqrt{M}\right] \nonumber \\
    &= \sum_{\boldsymbol{v}\in\{-1,1\}^{n-M}}\mathbb{P}\Bigl[\Sigma_i\in I_i(\boldsymbol{v}),1\le i\le m\Big| (R_1,\dots,R_{n-M})=\boldsymbol{v}\Bigr]\underbrace{\mathbb{P}\left[(R_1,\dots,R_{n-M})=\boldsymbol{v}\right]}_{=2^{-(n-M)}} \label{eq:condition}\\
    &=2^{-(n-M)}\sum_{\boldsymbol{v}\in\{-1,1\}^{n-M}}\mathbb{P}\Bigl[\Sigma_i\in I_i(\boldsymbol{v}),1\le i\le m \Bigr] \label{eq:independence}\\
    &=2^{-(n-M)}\sum_{v\in\{-1,1\}^{n-M}}\prod_{1\le i\le m}\mathbb{P}\bigl[\Sigma_i\in I_i(\boldsymbol{v})\bigr]\label{eq:independence2} \\
    &\le (6C_u)^m\label{eq:lemmaberry}.
\end{align}
We now justify the lines above. Equation~\eqref{eq:condition} follows by conditioning on the `common randomness' $R_1,\dots,R_{n-M}$ and recalling that they are uniform over $\{-1,1\}^{n-M}$. Equation~\eqref{eq:independence} uses the fact for any fixed $1\le i\le m$, $\Sigma_i$ is independent of $R_1,\dots,R_{n-M}$, and~\eqref{eq:independence2} uses the fact $\Sigma_1,\dots,\Sigma_m$ is also a collection of independent random variables. Finally,~\eqref{eq:lemmaberry} uses~\eqref{eq:be4}. 

Combining~\eqref{eq:prob-term1} with~\eqref{eq:lemmaberry}, we thus conclude
\begin{equation}\label{eq:prob-main}
    \max_{(\bs_1,\dots,\bs_m)\in \bar{\mathcal{S}}}\mathbb{P}\left[\max_{1\le i\le m}\|\M_i\bs_i\|_\infty\le C_u\sqrt{M}\right]\le (6C_u)^{mM}.
\end{equation}
\paragraph{Bounding $\mathbb{E}\bigl[\bigl|\Xi_d(m,M)\bigr|\bigr]$} We are ready to estimate $\mathbb{E}\bigl[\bigl|\Xi_d(m,M)\bigr|\bigr]$. Using~\eqref{eq:first-mom},~\eqref{eq:cardinality-bd} and~\eqref{eq:prob-main}, 
\[
\mathbb{E}\bigl[\bigl|\Xi_d(m,M)\bigr|\bigr]\le \exp_2\left(n+mM-mM\log_2\frac{1}{6C_u}\right).
\]
Inserting the values $C_u = 1/24$ and $m \geq 2n/M$, we obtain
\[
n+mM - mM\log_2\frac{1}{6C_u} \leq -n,
\]
so that $\mathbb{E}\bigl[\bigl|\Xi_d(m,M)\bigr|\bigr]\le e^{-n}$. Finally, we conclude by applying Markov's inequality:
\[
\mathbb{P}\bigl[\Xi_d(m,M)\ne\varnothing\bigr]=\mathbb{P}\bigl[\bigl|\Xi_d(m,M)\bigr|\ge 1\bigr]\le \mathbb{E}\bigl[\bigl|\Xi_d(m,M)\bigr|\bigr]\le e^{-n}. 
\]
\end{proof}
\subsection{Proof Sketch for Theorem~\ref{thm:online-disc-bern}}\label{sec:bern-ext}
The proof of Theorem~\ref{thm:online-disc-bern} is quite similar to Theorem~\ref{thm:online-disc}; we only highlight the necessary changes.
Let $\M_1$ consists of i.i.d.\,Bernoulli$(p)$ entries. 
Suppose that there exists an $\A:\R^{M\times n}\to \Sigma_n$ that is $(e^{-cM},C_u'\sqrt{M})$-online  in the sense of Definition~\ref{def:online-algs}, where $c<1/2$ is arbitrary and
\[
C_u' = \frac{\sqrt{p-p^2}}{24}.
\]
We set $m=2n/M$ and show how to adapt Proposition~\ref{prop:online-disc} to this case. Once this is done, the rest follows verbatim from Theorem~\ref{thm:online-disc}. First, all instances of $C_u$ in the proof of Theorem~\ref{thm:online-disc} are replaced with $C_u' = C_u\sqrt{p-p^2}$. Next, the counting estimate per~\eqref{eq:cardinality-bd} remains intact. Lemma~\ref{lemma:berry}, on the other hand, is replaced with the following.
\begin{lemma}\label{lemma:berry2}
    Let $Z_1,\dots,Z_M$ be i.i.d.\,Bernoulli$(p)$ random variables, $\epsilon_i\in\{-1,1\}$, $1\le i\le M$, be deterministic signs, and $I\subset \R$ be an interval of length $|I|=\omega_{(p-p^2)M}(1)$. Then
    \[
\mathbb{P}\bigl[Z_1\epsilon_1 + \cdots + Z_M\epsilon_M\in I\bigr]
     \le \frac{3|I|}{\sqrt{M(p-p^2)}},
     \]
     for every large enough $M$.
\end{lemma}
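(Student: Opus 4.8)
\textbf{Proof plan for Lemma~\ref{lemma:berry2}.}
The plan is to mimic the proof of Lemma~\ref{lemma:berry} exactly, substituting the standard Berry--Esseen bound for the i.i.d.\ Rademacher sum with its version for i.i.d.\ Bernoulli$(p)$ summands (possibly flipped by deterministic signs $\epsilon_i$). Set $W = \sum_{1\le i\le M} Z_i\epsilon_i$, where each $Z_i\sim$ Bernoulli$(p)$. First I would record the relevant moments of a single term $Z_i\epsilon_i$: since $\epsilon_i^2=1$, we have $\operatorname{Var}(Z_i\epsilon_i)=\operatorname{Var}(Z_i)=p-p^2=:s^2$, and $\mathbb{E}|Z_i\epsilon_i - \mathbb{E}[Z_i\epsilon_i]|^3 = \mathbb{E}|Z_i-p|^3 = p(1-p)^3 + (1-p)p^3 = p(1-p)\bigl((1-p)^2+p^2\bigr) \le p(1-p) = s^2$. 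Hence the total variance is $\sigma^2 = Ms^2 = M(p-p^2)$ and the sum of third absolute central moments is at most $Ms^2=\sigma^2$. The centering shift $\mu:=\mathbb{E}[W]=p\sum_i\epsilon_i$ is deterministic and therefore irrelevant for an anti-concentration statement about whether $W$ lands in an interval of fixed length; equivalently one may replace $I$ by the shifted interval $I-\mu$, which has the same length.

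Next I would invoke the classical Berry--Esseen theorem for independent (non-identically distributed but here identically distributed) summands: there is an absolute constant $\mathcal{C}_{\rm be}>0$ with
\[
\sup_{x\in\R}\left|\,\mathbb{P}\!\left[\frac{W-\mu}{\sigma}\le x\right] - \Phi(x)\right| \le \mathcal{C}_{\rm be}\,\frac{\sum_{i}\mathbb{E}|Z_i\epsilon_i-\mathbb{E}[Z_i\epsilon_i]|^3}{\sigma^3} \le \mathcal{C}_{\rm be}\,\frac{\sigma^2}{\sigma^3} = \frac{\mathcal{C}_{\rm be}}{\sigma} = \frac{\mathcal{C}_{\rm be}}{\sqrt{M(p-p^2)}}.
\]
Applying this at the two endpoints of the (shifted) interval $I$ and subtracting gives
\[
\mathbb{P}[W\in I] \le \mathbb{P}\!\left[\mathcal{N}(0,1)\in \tfrac{1}{\sigma}(I-\mu)\right] + \frac{2\mathcal{C}_{\rm be}}{\sqrt{M(p-p^2)}} \le \frac{|I|}{\sigma\sqrt{2\pi}} + \frac{2\mathcal{C}_{\rm be}}{\sqrt{M(p-p^2)}},
\]
where the Gaussian term is bounded by $|I|/(\sigma\sqrt{2\pi})$ using that the standard normal density is at most $1/\sqrt{2\pi}$ and the interval $\tfrac1\sigma(I-\mu)$ has length $|I|/\sigma$. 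Since $\sigma=\sqrt{M(p-p^2)}$, both terms on the right carry the factor $1/\sqrt{M(p-p^2)}$, and using the hypothesis $|I|=\omega_{(p-p^2)M}(1)$ — i.e.\ $|I|\to\infty$ relative to the scale $\sqrt{M(p-p^2)}$ being large, so that $2\mathcal{C}_{\rm be} \le |I|(3-1/\sqrt{2\pi})$ for all sufficiently large $(p-p^2)M$ — one concludes $\mathbb{P}[W\in I] \le 3|I|/\sqrt{M(p-p^2)}$, as claimed.

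The only genuinely delicate point, and thus the main obstacle, is making sure the asymptotic notation $\omega_{(p-p^2)M}(1)$ is handled consistently: the Berry--Esseen error and the Gaussian mass must both be measured against the effective sample size $M(p-p^2)$ rather than $M$, which is exactly why the statement is phrased with that subscript. Concretely, one needs $(p-p^2)M\to\infty$ for the absolute constant $\mathcal{C}_{\rm be}$ to become negligible compared with $|I|$, and one needs $|I|$ large on this same scale so the constant $3$ (rather than $1/\sqrt{2\pi}$ plus a vanishing term) is a valid final bound; both are encoded in the hypotheses. Everything else is a verbatim transcription of the proof of Lemma~\ref{lemma:berry}, with the single arithmetic change that the per-coordinate variance is $p-p^2$ instead of $1$ and that the deterministic mean $p\sum_i\epsilon_i$ must be absorbed into the interval. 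I would also remark that the identical-distribution assumption lets us use the cleanest form of Berry--Esseen, but the non-identically-distributed (Lyapunov-type) version would work equally well and is what one actually needs if the $\epsilon_i$ were allowed to interact with the distribution.
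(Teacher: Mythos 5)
Your proposal is correct and follows essentially the same route as the paper: center by the deterministic mean $p\ip{\boldsymbol{1}}{\boldsymbol{\epsilon}}$, rescale by $\sqrt{M(p-p^2)}$, apply Berry--Esseen, bound the Gaussian mass of the shifted interval by $|I|/\sqrt{2\pi M(p-p^2)}$, and absorb the constant using $|I|=\omega_{(p-p^2)M}(1)$. Your explicit third-moment computation showing the Berry--Esseen error is an absolute constant over $\sqrt{M(p-p^2)}$ is a welcome bit of extra detail the paper leaves implicit, but it is the same argument.
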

\begin{proof}[Proof of Lemma~\ref{lemma:berry2}]
    Observe that $\mathbb{E}[Z_i\epsilon_i] = p\epsilon_i$ and \[
    {\rm Var}(Z_i\epsilon_i) = \mathbb{E}[Z_i^2\epsilon_i^2] - p^2\epsilon_i^2 = p-p^2,
    \]
    as $\epsilon_i\in\{-1,1\}$. Thus by the CLT,
    \[
    \frac{1}{\sqrt{(p-p^2)M}}\left(\sum_{1\le i\le M}Z_i\epsilon_i - p\ip{\boldsymbol{1}}{\boldsymbol{\epsilon}}\right)\Rightarrow \cN(0,1)
    \]
    in distribution, where $\boldsymbol{1}\in\R^M$ is the vector of all ones and $\boldsymbol{\epsilon}=(\epsilon_i:1\le i\le M)\in\{-1,1\}^M$. Further, by the Berry-Esseen inequality, we have that
    \[
    \left|\mathbb{P}\left[\sum_{1\le i\le M}Z_i\epsilon_i \in I\right] - \mathbb{P}\left[\cN(0,1)\in \frac{I-p\ip{\boldsymbol{1}}{\boldsymbol{\epsilon}}}{\sqrt{(p-p^2)M}}\right]\right|\le \frac{\mathcal{C}'_{\rm be}}{\sqrt{(p-p^2)M}}
    \]
    for some absolute constant $\mathcal{C}'_{\rm be}>0$. Here,
    \[
    \frac{I-p\ip{\boldsymbol{1}}{\boldsymbol{\epsilon}}}{\sqrt{(p-p^2)M}} = \left\{\frac{c-p\ip{\boldsymbol{1}}{\boldsymbol{\epsilon}}}{\sqrt{(p-p^2)M}}:c\in I\right\},
    \]
    so that
    \[
    \left| \frac{I-p\ip{\boldsymbol{1}}{\boldsymbol{\epsilon}}}{\sqrt{(p-p^2)M}}\right| = \frac{|I|}{\sqrt{(p-p^2)M}},
    \]
    using the translation invariance of Lebesgue measure. From here, proceeding in the exact same way as in the proof of Lemma~\ref{lemma:berry},
    we establish Lemma~\ref{lemma:berry2}. 
\end{proof}
Equipped with Lemma~\ref{lemma:berry2} and using the exact same notation,~\eqref{eq:be4} modifies to~\eqref{eq:be5} where 
\begin{equation}\label{eq:be5}
    \max_{1\le i\le m}  \max_{\boldsymbol{v}\in\{-1,1\}^{n-M}}
\mathbb{P}\Bigl[\Sigma_i\in I_i(\boldsymbol{v})\Bigr]\le \frac{6C_u'}{\sqrt{p-p^2}} = \frac14.
\end{equation}
We now proceed analogously to lines~\eqref{eq:condition}-\eqref{eq:lemmaberry}. Note that for any arbitrary $\boldsymbol{v}\in\{0,1\}^{n-M}$, 
\begin{equation}\label{eq:be6}
\mathbb{P}\Bigl[\Sigma_i\in I_i(\boldsymbol{v}),1\le i\le m\Big| (R_1,\dots,R_{n-M})=\boldsymbol{v}\Bigr]\le 2^{-2m},
\end{equation}
using~\eqref{eq:be5}. Hence,
\begin{align*}
    &\mathbb{P}\left[\max_{1\le i\le m}\left|\ip{R_i}{\bs_i}\right|\le C_u'\sqrt{M}\right] \\
    &\le \sum_{\boldsymbol{v}\in\{0,1\}^{n-M}} \mathbb{P}\Bigl[\Sigma_i\in I_i(\boldsymbol{v}),1\le i\le m\Big| (R_1,\dots,R_{n-M})=\boldsymbol{v}\Bigr]\mathbb{P}\Bigl[(R_1,\dots,R_{n-M})=\boldsymbol{v}\Bigr] \\
    &\le 2^{-2m}\sum_{\boldsymbol{v}\in\{0,1\}^{n-M}} \mathbb{P}\Bigl[(R_1,\dots,R_{n-M})=\boldsymbol{v}\Bigr]\\
    &=2^{-2m},
\end{align*}
where we used~\eqref{eq:be6} in the penultimate line. This is precisely the same bound as~\eqref{eq:lemmaberry}, so the rest of the proof remains intact. This completes the proof of Theorem~\ref{thm:online-disc-bern}.
\subsection{Proof of Theorem~\ref{thm:m-ogp-discrepancy}}\label{sec:proof-ogp}
Fix a $K>0$, $C_1>c_2>0$, and suppose
\begin{equation}\label{eq:n-scale}
    C_1 M\log_2 M\ge n\ge c_2 M\log_2 M. 
\end{equation}
We establish our result via the first-moment method. Notice that by Markov's inequality,
\[
\mathbb{P}\bigl[\mathcal{S}(K,m,\beta,\eta,\mathcal{I})\ne\varnothing\bigr] = 
\mathbb{P}\bigl[\bigl|\mathcal{S}(K,m,\beta,\eta,\mathcal{I})\bigr|\ge 1\bigr] \le 
\mathbb{E}\bigl[\bigl|\mathcal{S}(K,m,\beta,\eta,\mathcal{I})\bigr|\bigr].
\]
So, it suffices to prove that
\[
\mathbb{E}\bigl[\bigl|\mathcal{S}(K,m,\beta,\eta,\mathcal{I})\bigr|\bigr]\le 2^{-\Theta(n)}.
\]
We now estimate $\mathbb{E}\bigl[\bigl|\mathcal{S}(K,m,\beta,\eta,\mathcal{I})\bigr|\bigr]$. 
\paragraph{Counting term} Fix $m\in\mathbb{N}$, $0<\eta<\beta<1$ and denote by $M(m,\beta,\eta)$ the number of $m$-tuples $(\bs_i\in\Sigma_n:1\le i\le m)$ such that $\beta-\eta \le n^{-1}\ip{\bs_i}{\bs_j}\le \beta$ for $1\le i<j\le m$. We establish
\begin{lemma}\label{lemma:counting}
For $m=O(1)$ as $n\to\infty$,
\[
    M(m,\beta,\eta)\le \exp_2\left(n+n(m-1)h_b\left(\frac{1-\beta+\eta}{2}\right)+O(\log_2 n)\right).
    \]
\end{lemma}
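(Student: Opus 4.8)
The plan is to count the number of $m$-tuples $(\bs_1,\dots,\bs_m) \in \Sigma_n^m$ satisfying $\beta - \eta \le n^{-1}\ip{\bs_i}{\bs_j} \le \beta$ for all $1 \le i < j \le m$ by a sequential (greedy) argument: first pick $\bs_1$ freely, then for each subsequent index $i$ bound the number of choices of $\bs_i$ given that it must have prescribed (approximate) overlap with $\bs_1$ alone — ignoring the constraints with $\bs_2,\dots,\bs_{i-1}$, which only makes the count larger. First I would note there are $2^n$ choices for $\bs_1$. Then, for a fixed $\bs_1$, the number of $\bs_i \in \Sigma_n$ with $n^{-1}\ip{\bs_1}{\bs_i} \in [\beta-\eta,\beta]$ equals the number of $\bs_i$ whose Hamming distance to $\bs_1$ lies in the window $[\,n(1-\beta)/2,\; n(1-\beta+\eta)/2\,]$, since $d_H(\bs_1,\bs_i) = n(1 - n^{-1}\ip{\bs_1}{\bs_i})/2$. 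This count is $\sum_{d} \binom{n}{d}$ over an interval of $d$-values of length $\Theta(\eta n)$; by the standard binomial-sum estimate $\sum_{d \le \gamma n}\binom{n}{d} \le 2^{n h_b(\gamma)}$ (valid for $\gamma \le 1/2$), together with the fact that the window lies below $n/2$ and its right endpoint is $n(1-\beta+\eta)/2$, the number of such $\bs_i$ is at most $(\Theta(\eta n)+1)\cdot \binom{n}{n(1-\beta+\eta)/2} \le \exp_2\!\big(n\,h_b\big(\tfrac{1-\beta+\eta}{2}\big) + O(\log_2 n)\big)$, where the polynomial factor $\Theta(n)$ from the number of terms (and from Stirling) is absorbed into the $O(\log_2 n)$ in the exponent.

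Multiplying the $2^n$ choices for $\bs_1$ by the bound $\exp_2\!\big(n\,h_b(\tfrac{1-\beta+\eta}{2}) + O(\log_2 n)\big)$ for each of the $m-1$ remaining coordinates, and using $m = O(1)$ so that the $m-1$ copies of the $O(\log_2 n)$ term remain $O(\log_2 n)$, gives
\[
M(m,\beta,\eta) \;\le\; 2^n \cdot \exp_2\!\left((m-1)\Big(n\,h_b\Big(\tfrac{1-\beta+\eta}{2}\Big) + O(\log_2 n)\Big)\right) \;=\; \exp_2\!\left(n + n(m-1)h_b\Big(\tfrac{1-\beta+\eta}{2}\Big) + O(\log_2 n)\right),
\]
which is the claimed bound.

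The only mild subtleties — none of them a real obstacle — are: (i) checking that the relevant Hamming-distance window $[n(1-\beta)/2,\,n(1-\beta+\eta)/2]$ indeed lies in $[0,n/2]$, which holds because $0 < \eta < \beta < 1$ forces $1-\beta+\eta < 1$ hence $(1-\beta+\eta)/2 < 1/2$; (ii) justifying that dropping the overlap constraints among $\bs_2,\dots,\bs_m$ and keeping only the constraint with $\bs_1$ is a valid over-count (it clearly is, since we are relaxing constraints); and (iii) bookkeeping the logarithmic factors: each of the $\Theta(n)$-many binomial terms contributes, and Stirling's approximation for $\binom{n}{n(1-\beta+\eta)/2}$ introduces a further $\Theta(\sqrt n)$ factor, but all of these are $2^{O(\log_2 n)}$ and are swallowed by the error term — crucially using $m=O(1)$ so they are not amplified. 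If one wanted to track constants one could instead use the sharper $\sum_{d \le \gamma n}\binom{n}{d} \le 2^{n h_b(\gamma)}$ directly on the whole window, avoiding Stirling entirely and giving the bound with only an $O(\log_2 n)$ (in fact $O(1)$) loss from the outer $\bs_1$ count; either route yields the stated estimate.
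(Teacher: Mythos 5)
Your proposal is correct and follows essentially the same route as the paper: $2^n$ choices for $\bs_1$, then for each later $\bs_i$ relax to the single overlap constraint with $\bs_1$, translate it into the Hamming-distance window $[n(1-\beta)/2,\,n(1-\beta+\eta)/2]$, bound the resulting binomial sum by $n^{O(1)}\binom{n}{n(1-\beta+\eta)/2}$ (using that the window lies below $n/2$), and conclude via the entropy/Stirling estimate with $m=O(1)$ absorbing the polynomial factors. No further changes are needed.
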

 Lemma~\ref{lemma:counting} is verbatim from~\cite[Lemma~6.7]{gamarnik2022algorithms}, we include the proof for completeness.
\begin{proof}[Proof of Lemma~\ref{lemma:counting}]
    Observe that $\ip{\bs}{\bs'}=n-2d_H(\bs,\bs')$ for any $\bs,\bs'\in\Sigma_n$. There are $2^n$ choices for $\bs_1$. Having fixed a $\bs_1$, there are
    \[
\sum_{\substack{\rho:\frac{1-\beta}{2}\le \rho\le \frac{1-\beta+\eta}{2} \\ \rho n\in\mathbb{N}}}\binom{n}{n\rho}\le \binom{n}{n\frac{1-\beta+\eta}{2}}n^{O(1)},
\]
choices for any  $\bs_i$, $2\le i\le m$, under the constraint $\beta-\eta\le n^{-1}\ip{\bs_1}{\bs_i}\le \beta$. Next, for any $\rho\in(0,1)$, $\binom{n}{n\rho}=\exp_2 \bigl(nh(\rho)+O\bigl(\log_2 n\bigr)\bigr)$ by Stirling's approximation. Combining these and recalling $m=O_n(1)$, we obtain Lemma~\ref{lemma:counting}.
\end{proof}
\paragraph{Probability estimate} 
Fix any $(\bs_1,\dots,\bs_m)$ with 
\[
\frac1n\ip{\bs_i}{\bs_j} = \beta-\eta_{ij},\quad 1\le i<j\le m.
\]
Clearly $0\le \eta_{ij}\le \eta$. Further, let $\boldsymbol{\eta} = (\eta_{ij}:1\le i<j\le m)\in\R^{m(m-1)/2}$. Then $\|\boldsymbol{\eta}\|_\infty\le \eta$. Our eventual choice of parameters $\beta,\eta$ and $m$ will ensure
\begin{equation}\label{eq:eta-bd}
    \eta=\frac{1-\beta}{2m}.
\end{equation}
We now control the probability term.
\begin{lemma}\label{lemma:prob-term}
    Let $\Sigma(\boldsymbol{\eta})\in\R^{m\times m}$ with unit diagonal entries such that for $1\le i<j\le m$,
    \[
\bigl(\Sigma(\boldsymbol{\eta})\bigr)_{ij}=\bigl(\Sigma(\boldsymbol{\eta})\bigr)_{ji}=\beta-\eta_{ij}.
    \]
    Then, the following holds.
    \begin{itemize}
        \item[(a)] $\Sigma(\boldsymbol{\eta})$ is positive definite (PD) if $\eta$ satisfies~\eqref{eq:eta-bd}. 
        \item[(b)] Suppose that $\eta$ satisfies~\eqref{eq:eta-bd}. Then, 
        \begin{align*}
           \mathbb{P}\left[\exists \tau_1,\dots,\tau_m\in\mathcal{I}:\max_{1\le i\le m}\bigl\|\M_i(\tau_i)\bs_i\bigr\|_\infty\le K\right] \le |\mathcal{I}|^m (2\pi)^{-\frac{mM}{2}}\left(\frac{1-\beta}{2}\right)^{-\frac{Mm}{2}}\left(\frac{2K}{\sqrt{n}}\right)^{Mm}.
        \end{align*}
    \end{itemize}
\end{lemma}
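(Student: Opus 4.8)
\textbf{Part (a): positive definiteness.} The plan is to write $\Sigma(\boldsymbol\eta) = (1-\beta)I_m + \beta\,\boldsymbol 1\boldsymbol 1^T - E$, where $E$ is the symmetric matrix with zero diagonal and off-diagonal entries $\eta_{ij}\in[0,\eta]$. The matrix $(1-\beta)I_m + \beta\boldsymbol 1\boldsymbol 1^T$ has smallest eigenvalue exactly $1-\beta>0$. For the perturbation $E$, I would bound its spectral norm crudely by $\|E\|_2 \le \|E\|_F \le \sqrt{m(m-1)}\,\eta < m\eta$ (or even more simply by $\max$-row-sum $\le (m-1)\eta$). Plugging in the prescribed value $\eta = (1-\beta)/(2m)$ from~\eqref{eq:eta-bd} gives $\|E\|_2 < m\cdot \frac{1-\beta}{2m} = \frac{1-\beta}{2}$, so Weyl's inequality yields $\lambda_{\min}(\Sigma(\boldsymbol\eta)) \ge (1-\beta) - \frac{1-\beta}{2} = \frac{1-\beta}{2} > 0$, establishing PD. This also records the quantitative bound $\lambda_{\min}(\Sigma(\boldsymbol\eta))\ge \frac{1-\beta}{2}$, which is exactly what Part~(b) will need.

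\textbf{Part (b): the probability estimate.} First apply a union bound over the (at most $|\mathcal I|^m$) choices of $(\tau_1,\dots,\tau_m)\in\mathcal I^m$, reducing to bounding, for fixed $\tau_i$'s, the probability that $\max_{1\le i\le m}\|\M_i(\tau_i)\bs_i\|_\infty \le K$. Since $\|\M_i(\tau_i)\bs_i\|_\infty\le K$ forces the first coordinate $(\M_i(\tau_i)\bs_i)_1$ to lie in $[-K,K]$, and the $M$ rows of each $\M_i(\tau_i)$ are i.i.d., the event factorizes across rows: it suffices to bound the probability for a single coordinate and raise to the power $M$. For a single row, let $Y_i = \langle \cos\tau_i\, r_0 + \sin\tau_i\, r_i, \bs_i\rangle/\sqrt n$ where $r_0,r_i$ are the relevant rows of $\M_0,\M_i$; then $(Y_1,\dots,Y_m)$ is a centered Gaussian vector with $\mathrm{Var}(Y_i)=1$ and, crucially, $\mathrm{Cov}(Y_i,Y_j) = \cos\tau_i\cos\tau_j\cdot n^{-1}\langle\bs_i,\bs_j\rangle = (\cos\tau_i\cos\tau_j)(\beta-\eta_{ij})$ for $i\ne j$ (the cross terms involving independent $r_i$ vanish in expectation). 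Wait — I should double-check: the stated covariance target in the lemma is $\Sigma(\boldsymbol\eta)$ with off-diagonal $\beta-\eta_{ij}$, not $(\cos\tau_i\cos\tau_j)(\beta-\eta_{ij})$; the cleanest route is to observe that correlating through $\M_0$ only \emph{decreases} covariances in absolute value relative to the $\tau_i=0$ case, and then argue via Gaussian correlation / Anderson-type comparison that the box probability is maximized (among these covariance structures) when the off-diagonal correlations are largest, i.e.\ one may dominate by the density of $\cN(0,\Sigma(\boldsymbol\eta))$ — alternatively, and more robustly, just bound the box probability directly by $(\sup \text{density})\cdot\text{vol(box)}$.

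\textbf{The density bound.} The box is $[-K/\sqrt n, K/\sqrt n]^m$ (after the $1/\sqrt n$ normalization) with volume $(2K/\sqrt n)^m$, and the Gaussian density of $\cN(0,\Sigma)$ is at most $(2\pi)^{-m/2}|\Sigma|^{-1/2} \le (2\pi)^{-m/2}\lambda_{\min}(\Sigma)^{-m/2} \le (2\pi)^{-m/2}\bigl(\tfrac{1-\beta}{2}\bigr)^{-m/2}$ using Part~(a). Hence the single-coordinate probability is at most $(2\pi)^{-m/2}\bigl(\tfrac{1-\beta}{2}\bigr)^{-m/2}(2K/\sqrt n)^m$; raising to the $M$-th power (rows) and multiplying by $|\mathcal I|^m$ (union bound over $\boldsymbol\tau$) gives exactly the claimed bound $|\mathcal I|^m (2\pi)^{-mM/2}\bigl(\tfrac{1-\beta}{2}\bigr)^{-Mm/2}(2K/\sqrt n)^{Mm}$.

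\textbf{Main obstacle.} The one subtle point is the covariance matching in Part~(b): with the $\tau_i$'s present, the true covariance of $(Y_i)$ is \emph{not} literally $\Sigma(\boldsymbol\eta)$ but has off-diagonals shrunk by $\cos\tau_i\cos\tau_j\le 1$. The fix is to note this only makes the minimum eigenvalue \emph{larger} (shrinking off-diagonals of a PD matrix toward the identity cannot decrease $\lambda_{\min}$ here, since it is a convex combination with $I$), so the density bound via $\lambda_{\min}\ge \frac{1-\beta}{2}$ still applies verbatim; I would spell this out carefully, since getting the comparison direction right is where an error would hide. Everything else — the row-wise factorization, the volume-times-sup-density bound, the union over $\mathcal I^m$ — is routine.
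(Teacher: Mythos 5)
Your proposal is correct and reaches the same bound, but it handles the one delicate step differently from the paper. The paper, after the same union bound over $\mathcal{I}^m$, row-wise factorization, and computation of the covariance $\overline{\Sigma}_{ij}=\cos(\tau_i)\cos(\tau_j)(\beta-\eta_{ij})$, invokes \v{S}id\'ak's Gaussian comparison inequality (Theorem~\ref{thm:sidak}, with $\lambda_i=\cos\tau_i$) to dominate the box probability by the one under $\Sigma(\boldsymbol{\eta})$ itself, and then lower-bounds $|\Sigma(\boldsymbol{\eta})|$ by $\bigl(\frac{1-\beta}{2}\bigr)^m$ via the Hoffman--Wielandt inequality. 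You instead bypass the comparison theorem entirely: you bound the box probability under the true covariance $\overline{\Sigma}$ by $\sup$-density times volume and argue $\lambda_{\min}(\overline{\Sigma})\ge\lambda_{\min}(\Sigma(\boldsymbol{\eta}))\ge\frac{1-\beta}{2}$ (the latter via Weyl rather than Hoffman--Wielandt, which is equally valid and gives the same $\bigl(\frac{1-\beta}{2}\bigr)^{-m/2}$ factor). Your eigenvalue-monotonicity claim is true, but your stated reason (``convex combination with $I$'') is literally correct only when all $\tau_i$ coincide; for unequal $\tau_i$ the clean argument is to write $\overline{\Sigma}=D\,\Sigma(\boldsymbol{\eta})\,D+(I-D^2)$ with $D=\mathrm{diag}(\cos\tau_1,\dots,\cos\tau_m)$, so that for unit $x$, $x^T\overline{\Sigma}x\ge\lambda_{\min}(\Sigma(\boldsymbol{\eta}))\|Dx\|_2^2+\|(I-D^2)^{1/2}x\|_2^2\ge\lambda_{\min}(\Sigma(\boldsymbol{\eta}))$, using $\lambda_{\min}(\Sigma(\boldsymbol{\eta}))\le 1$. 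With that two-line fix your route is fully rigorous and arguably more elementary and self-contained (no citation of \v{S}id\'ak, whose hypotheses exactly match the $\lambda_i\lambda_j\rho_{ij}$ structure here and whose comparison direction is the very subtlety you flagged); the paper's route buys a clean reduction to the single matrix $\Sigma(\boldsymbol{\eta})$ and monotonicity in each $\cos\tau_i$, which is conceptually tidy but not needed for the stated estimate. Your first attempted alternative (a generic ``Anderson-type'' claim that larger off-diagonal correlations maximize box probabilities) would not stand on its own, so it is good that you discarded it in favor of the direct density bound.
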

\begin{proof}[Proof of Lemma~\ref{lemma:prob-term}]
    \phantom{//}
    \paragraph{Part {\rm (a)}} Let $E\in\R^{m\times m}$ such that $E_{ii}=0$ and $E_{ij}=E_{ji}=-\eta_{ij}$ for $1\le i<j\le m$. Then,
    \[
    \Sigma(\boldsymbol{\eta})=(1-\beta)I + \beta \boldsymbol{1}\boldsymbol{1}^T +E.
    \]
    Note that the smallest eigenvalue of $(1-\beta)I + \beta \boldsymbol{1}\boldsymbol{1}^T$ is $1-\beta$ and $\|E\|_2\le \|E\|_F<\eta m$. So, $\Sigma(\boldsymbol{\eta})$ is invertible if $\eta<(1-\beta)/m$. Recalling the fact it is a covariance matrix, so in particular positive semidefinite, we establish part ${\rm (a)}$. 

    \paragraph{Part {\rm (b)}} As a first step, we take a union bound over $\mathcal{I}$ to obtain 
    \begin{equation}\label{eq:step1-ubd}
             \mathbb{P}\left[\exists \tau_1,\dots,\tau_m\in\mathcal{I}:\max_{1\le i\le m}\bigl\|\M_i(\tau_i)\bs_i\bigr\|_\infty\le K\right]\le |\mathcal{I}|^m \max_{\tau_i\in\mathcal{I},1\le i\le m}  \mathbb{P}\left[\max_{1\le i\le m}\bigl\|\M_i(\tau_i)\bs_i\bigr\|_\infty\le K\right]. 
        \end{equation}
        Next, denote by $R_i\sim \cN(0,I_n)$ the first row of $\M_i(\tau_i)\in\R^{M\times n}$, $1\le i\le m$. Observe that using the fact each $\M_i(\tau_i)$ has independent rows, 
        \begin{equation}\label{eq:step2-ind}
            \mathbb{P}\left[\max_{1\le i\le m}\bigl\|\M_i(\tau_i)\bs_i\bigr\|_\infty\le K\right]\le \mathbb{P}\left[\max_{1\le i\le m}n^{-\frac12}\left|\ip{R_i}{\bs_i}\right|\le \frac{K}{\sqrt{n}}\right]^M.
        \end{equation}
        Next, we consider the multivariate normal random vector $\bigl(n^{-1/2}\ip{R_i}{\bs_i}:1\le i\le m\bigr)$ consisting of standard normal coordinates. Let $\overline{\Sigma}$ denotes its covariance matrix, which depends on the choice of $\tau_1,\dots,\tau_m$. Observe that for $1\le i<j\le m$,
        \[
        \overline{\Sigma}_{ij} = \frac1n\mathbb{E}\bigl[\ip{R_i}{\bs_i}\ip{R_j}{\bs_j}\bigr]=\frac1n(\bs_i)^T \underbrace{\mathbb{E}[R_iR_j^T]}_{=\cos(\tau_i)\cos(\tau_j)I_m}\bs_j = \cos(\tau_i)\cos(\tau_j)(\beta-\eta_{ij}).
        \]
        We now remove the dependence on $\tau_i$ by relying on a Gaussian comparison inequality, due to Sid{\'a}k~\cite[Corollary~1]{sidak1968multivariate}. The version below is reproduced from~\cite[Theorem~6.5]{gamarnik2022algorithms}. 
\begin{theorem}\label{thm:sidak}
Let $(X_1,\dots,X_k)\in\R^k$ be a centered multivariate  normal random vector. Suppose that its covariance matrix $\Sigma\in\R^{k\times k}$ has unit diagonal entries has the following form: there exists $0\le \lambda_i\le 1$, $1\le i\le k$, such that for any $1\le i\ne j\le  k$, $\Sigma_{ij}=\lambda_i\lambda_j\rho_{ij}$ where $(\rho_{ij}:1\le i\ne j\le k)$ is a fixed arbitrary covariance matrix. Fix values $c_1,\dots,c_k>0$, and set
\[
P(\lambda_1,\dots,\lambda_k) = \mathbb{P}\bigl[|X_1|<c_1,|X_2|<c_2,\dots,|X_k|<c_k\bigr].
\]
Then, $P(\lambda_1,\dots,\lambda_k)$ is a non-decreasing function of each $\lambda_i$, $i=1,2,\dots,k$, $0\le \lambda_i\le 1$. That is,
\[
P(\lambda_1,\lambda_2,\dots,\lambda_k)\le P(1,1,\dots,1).
\]
\end{theorem}
We now let $(Z_1,\dots,Z_m)$ to be a centered multivariate normal random vector with covariance $\Sigma(\boldsymbol{\eta})$. Observe that
\begin{align}
\max_{\tau_1,\dots,\tau_m\in\mathcal{I}}\mathbb{P}\left[\max_{1\le i\le m}n^{-\frac12}\left|\ip{R_i}{\bs_i}\right|\le \frac{K}{\sqrt{n}}\right]&\le \mathbb{P}\left[\max_{1\le i\le m}|Z_i|\le \frac{K}{\sqrt{n}}\right]\label{eq:sidak} \\
& = (2\pi)^{-\frac{m}{2}}|\Sigma(\boldsymbol{\eta})|^{-\frac12}\int_{\boldsymbol{z}\in\left[-\frac{K}{\sqrt{n}},\frac{K}{\sqrt{n}}\right]^m}\exp\left(-\frac{\boldsymbol{z}^T \Sigma(\boldsymbol{\eta})^{-1} \boldsymbol{z}}{2}\right)\;d\boldsymbol{z} \nonumber \\
&\le (2\pi)^{-\frac{m}{2}}|\Sigma(\boldsymbol{\eta})|^{-\frac12} \left(\frac{2K}{\sqrt{n}}\right)^m\label{eq:density-at-most-1},
\end{align}
where~\eqref{eq:sidak} follows from Theorem~\ref{thm:sidak} and~\eqref{eq:density-at-most-1} follows from the trivial fact $\exp\left(-\frac{\boldsymbol{z}^T \Sigma(\boldsymbol{\eta})^{-1} \boldsymbol{z}}{2}\right)\le 1$. 

We lastly bound $|\Sigma(\boldsymbol{\eta})|$. For this, we rely on the following tool from matrix analysis.
\begin{theorem}[Hoffman-Wielandt Inequality]\label{thm:hw}
    Let $A\in\R^{m\times m}$ and $A+E\in\R^{m\times m}$ be two symmetric matrices with eigenvalues
    \[
    \lambda_1(A)\ge \cdots \ge \lambda_m(A) \quad\text{and}\quad \lambda_1(A+E)\ge \cdots \ge \lambda_m(A+E).
    \]
    Then,
    \[
    \sum_{1\le i\le m}\bigl(\lambda_i(A+E)-\lambda_i(A)\bigr)^2 \le \|E\|_F.
    \]
\end{theorem}
See~\cite[Corollary~6.3.8]{horn2012matrix} for a reference, and~\cite{hoffman1953variation} for the original paper.  We apply Theorem~\ref{thm:hw} to $\Sigma(\boldsymbol{\eta})$. Let $A=(1-\beta)I+\beta \boldsymbol{1}\boldsymbol{1}^T$ with eigenvalues $\lambda_1 = 1-\beta+\beta m>\lambda_2 = \cdots = \lambda_m = 1-\beta$ and $E$ be as above. Suppose that the eigenvalues of $A+E$ are $\mu_1\ge \cdots \ge \mu_m$. Fix any $2\le i\le m$. Theorem~\ref{thm:hw} yields
\[
\bigl|\mu_i-(1-\beta)\bigr|\le \|E\|_F \le \eta m =\frac{1-\beta}{2},
\]
yielding
\[
\mu_i\ge \frac{1-\beta}{2},\quad 2\le i\le m.
\]
Furthermore, this bounds extends to $\mu_1$, too, as
\[
\mu_1 \ge 1-\beta+\beta m - \frac{1-\beta}{2}>\frac{1-\beta}{2}.
\]
Since $\boldsymbol{\eta}\in \R^{m(m-1)/2}$ is arbitrary with $\|\boldsymbol{\eta}\|_\infty\le \eta \le \frac{1-\beta}{2m}$, we obtain  
\begin{equation}\label{eq:det-bd}
\inf_{\substack{\boldsymbol{\eta}\in\R^{m(m-1)/2} \\\|\boldsymbol{\eta}\|_\infty\le \frac{1-\beta}{2m}}}\bigl|\Sigma(\boldsymbol{\eta})\bigr| = \prod_{1\le i\le m}\mu_i \ge \left(\frac{1-\beta}{2}\right)^m.    
\end{equation}
Finally, combining~\eqref{eq:step1-ubd},~\eqref{eq:step2-ind},~\eqref{eq:density-at-most-1}, and~\eqref{eq:det-bd} we establish the proof of part {\rm (b)}.
\end{proof}
\paragraph{Estimating the expectation} Let $\mathcal{F}(m,\beta,\eta)$ be the set of all $m$-tuples $(\bs_1,\dots,\bs_m)$ such that $\beta-\eta\le n^{-1}\ip{\bs_i}{\bs_j}\le \beta$, $1\le i<j\le m$. Then
\[
\bigl|\mathcal{S}(K,m,\beta,\eta,\mathcal{I})\bigr| = \sum_{(\bs_1,\dots,\bs_m)\in\mathcal{F}(m,\beta,\eta)} \ind\left\{\exists \tau_1,\dots,\tau_m\in\mathcal{I}:\max_{1\le i\le m}\bigl\|\M_i(\tau_i)\bs_i\bigr\|_\infty\le K\right\}.
\]
Using linearity of expectation, Lemma~\ref{lemma:counting}, Lemma~\ref{lemma:prob-term}, and the fact $\log_2|\mathcal{I}|\le cn$, we obtain
\begin{align}
    \mathbb{E}\bigl[\bigl|\mathcal{S}(K,m,\beta,\eta,\mathcal{I})\bigr|\bigr]&\le \exp_2\Bigl(\Psi(m,\beta,\eta,c)+O(\log_2 n)\Bigr),\label{eq:first-mome}
\end{align}
where
\begin{equation}\label{eq:free-energy}
    \Psi(m,\beta,\eta,c) = n+mnh_b\left(\frac{1-\beta+\eta}{2}\right)+cmn +\frac{mM}{2}\log_2\frac{4K^2}{\pi(1-\beta)} -\frac{Mm}{2}\log_2 n.
\end{equation}
We set $\eta$ and $c$ as
\begin{equation}\label{eq:eta-and-c}
    \eta = \frac{1-\beta}{2m} \quad\text{and}\quad c=\frac{1}{m},
\end{equation}
where we recalled $\eta$ from~\eqref{eq:eta-bd}; parameters $\beta$ and $m$ are to be tuned soon. We now recall the scaling on $n$ from~\eqref{eq:n-scale}. In particular, 
\[
\log_2 n\ge \log_2 c_2 + \log_2 M+\log_2 \log_2 M\ge \log_2 c_2+ \log_2 M.
\]
With this, we arrive at
\begin{align}
   \Psi\left(m,\beta,\frac{1-\beta}{2m},\frac1m\right)
   &\le 2C_1 M\log_2 M+mC_1 M\log_2 M \cdot h_b\left(\frac{1-\beta}{2}+\frac{1-\beta}{4m}\right) \nonumber \\
   &+\frac{mM}{2}\log_2 \frac{4K^2}{\pi(1-\beta)c_2}-\frac{Mm}{2}\log_2 M\label{eq:pu1}.
\end{align}
Note that if $\beta\in(1/2,1)$ and $m\in\mathbb{N}$, we clearly have
\[
h_b\left(\frac{1-\beta}{2}+\frac{1-\beta}{4m}\right) \le h_b(1-\beta).
\]
We choose $\beta^*>1/2$ such that 
\[
h_b(1-\beta^*) =\min\left\{\frac{1}{4C_1},\frac12\right\}.
\]
So, 
\begin{equation}\label{eq:entropy-term-temp}
   m C_1 M\log_2 M \cdot h_b\left(\frac{1-\beta^*}{2}+\frac{1-\beta^*}{4m}\right)\le \frac{Mm}{4}\log_2 M.
\end{equation}
Combining~\eqref{eq:pu1} and~\eqref{eq:entropy-term-temp}, we further upper bound
\begin{equation}\label{eq:free-en}
    \Psi\left(m,\beta^*,\frac{1-\beta^*}{2m},\frac1m\right)\le 2C_1 M\log_2 M -\frac{Mm}{4}\log_2 M +\Theta(mM).
\end{equation}
Finally, taking $m=m^*=\max\{2,16C_1\}$, we get
\begin{equation}\label{eq:free-en2}
 \Psi\left(m^*,\beta^*,\frac{1-\beta^*}{2m^*},\frac{1}{m^*}\right) = -\Theta(M\log_2 M).
\end{equation}
Combining~\eqref{eq:first-mome} with the fact $O(\log_2 n)=O(\log_2 M)=o(M\log_2 M)$ as $M=\omega(1)$, we conclude that
\[
\mathbb{E}\bigl[\bigl|\mathcal{S}(K,m^*,\beta^*,\eta^*,\mathcal{I})\bigr|\bigr]\le \exp_2\left(\Psi\left(m^*,\beta^*,\frac{1-\beta^*}{2m^*},\frac{1}{m^*}\right)+o(M\log_2 M)\right) = 2^{-\Theta(M\log M)} = 2^{-\Theta(n)}.
\]
This completes the proof of Theorem~\ref{thm:m-ogp-discrepancy}.
\subsubsection*{Acknowledgments}
The first author is supported in part by NSF grant DMS-2015517. The second author is supported by a Columbia University, Distinguished Postdoctoral Fellowship in Statistics. The third author is supported in part by NSF grant DMS-1847451.
\bibliographystyle{amsalpha}
\bibliography{bibliography}

\end{document}